\newcommand{\pp}{\mathbb P}
\newcites{supp}{Supplementary References}
\newcommand{\subtitle}[1]{
  \posttitle{
    \par\end{center}
    \begin{center}\large#1\end{center}
    \vskip0.5em}
}
\newtheorem{definition}{Definition}[section]
\newtheorem{assumption}{Assumption}[section]
\newtheorem{proposition}[definition]{Proposition}
\newtheorem{corollary}[definition]{Corollary}
\newcommand{\E}{\mathbb{E}}
\newcommand\indep{\perp\!\!\!\perp}
\renewcommand{\hat}{\widehat}
\renewcommand{\bar}{\overline}
\begin{document}
\pagestyle{plain}

\newcommand{\blind}{0}

\newcommand{\tit}{\vspace{-1cm}
\Large Understanding Spatial Regression Models from a Weighting Perspective in an Observational Study of Superfund Remediation}

\if0\blind

{\title{\Large \tit
\thanks{We thank Michael Cork, Veronica Ballerini, Avi Feller, and Heejun Shin for helpful comments and conversations. This work was partly supported by the Patient Centered Outcomes Research Initiative (PCORI, ME-2022C1-25648), the National Institutes of Health (R01MD016054, R01AG066793, RF1AG080948, R01ES34021, R01ES037156, R01ES036436-01A1, T32ES007142), Amazon Web Services, and a National Science Foundation Graduate Research Fellowship (DGE 2140743). Any opinion, findings, and conclusions or recommendations expressed in this material are those of the authors and do not necessarily reflect the views of our funders. The computations in this paper were run on the FASRC Cannon cluster, supported by the FAS Division of Science Research Computing Group at Harvard University, as well as the Regulated Data (ReD) Environment, operated by Harvard University Information Technology and supported by the Office of the Vice Provost for Research at Harvard University.}
\vspace*{.3in}}
\author{\normalsize Sophie M. Woodward\thanks{Department of Biostatistics, Harvard University; email: \url{swoodward@g.harvard.edu}.} \and \normalsize Francesca Dominici
\thanks{Department of Biostatistics and Harvard Data Science Initiative, Harvard University; email:  \url{fdominic@hsph.harvard.edu}.}\and \normalsize Jos\'{e} R. Zubizarreta
\thanks{Departments of Health Care Policy, Biostatistics, and Statistics, Harvard University; email: \url{zubizarreta@hcp.med.harvard.edu}.}}
\date{}

\maketitle
\date{}
}\fi

\if1\blind
\title{ \tit}
\date{}
\maketitle
\fi

\vspace{-1cm}

\begin{abstract}
A key challenge in environmental health research is unmeasured spatial confounding, driven by unobserved spatially structured variables that influence both treatment and outcome. A common approach is to fit a spatial regression that models the outcome as a linear function of treatment and covariates, with a spatially structured error term to account for unmeasured spatial confounding. However, it remains unclear to what extent spatial regression actually accounts for such forms of confounding in finite samples, and whether this regression adjustment can be reformulated from a design-based perspective. Motivated by an observational study on the effect of Superfund site remediation on birth outcomes, we present a weighting framework for causal inference that unifies three canonical classes of spatial regression models---random effects, conditional autoregressive, and Gaussian process models---and reveals how they implicitly construct causal contrasts across space. Specifically, we show that: (i) the spatial error term induces approximate balance on a latent set of covariates and therefore adjusts for a specific form of unmeasured confounding; and (ii) the covariance structure of the spatial error can be equivalently represented as regressors in a linear model. Building on these insights, we introduce a new estimator that jointly addresses multiple forms of unmeasured spatial confounding and develop visual diagnostics. Using our new estimator, we find evidence of a small but beneficial effect of remediation on the percentage of small vulnerable newborns.

\end{abstract}

\begin{center}
\noindent Keywords: 
{Causal Inference; Spatial Regression;  Superfund Site; Unmeasured Confounding; Weighting Method}
\end{center}
\clearpage
\doublespacing

\singlespacing
\pagebreak
\tableofcontents
\pagebreak
\doublespacing

\section{Introduction}
\vspace{-0.1in}
\label{sec:section_intro}
\subsection{Superfund remediation and birth outcomes}

The United States contains thousands of hazardous waste sites that potentially expose millions of Americans to toxic pollutants. More than 78 million Americans---24\% of the US population, including 17 million children and 11 million elderly individuals---live within three miles of a hazardous waste site \citep{EPA2021Superfund}
. Many of these sites are legacies of historical industrial activities, including landfills, manufacturing facilities, mining sites, and processing plants. In 1980, Congress enacted the Comprehensive Environmental Response, Compensation, and Liability Act (CERCLA), 
granting the Environmental Protection Agency (EPA) authority to identify and remediate these sites, which became known as Superfund sites. Although cleanup efforts have been associated with reductions in congenital anomalies in infants \citep{currie2011superfund} and blood lead levels in children \citep{klemick2020superfund}, remediation remains contentious due to high costs---often tens of millions of dollars per site---and disputes over whether the benefits justify the expense \citep{hamilton1999costly}.

Our goal is to estimate the causal effect of Superfund site remediation on two birth outcomes, while adjusting for a rich set of sociodemographic covariates. A key challenge is unmeasured spatial confounding. Unmeasured spatial confounders are unobserved spatially structured variables that influence both treatment assignment and outcomes, potentially leading to biased effect estimates and invalid confidence intervals \citep{gilbert2021causal, reich2021a}. 
Examples include community participation and local funding availability, which can shape both cleanup efforts and birth outcomes \citep{petrie2006environmental}. 

\subsection{Spatial regression and spatial confounding}
\vspace{-0.1in}

To address potential unmeasured spatial confounding, researchers frequently introduce a mean zero spatially autocorrelated error term into a linear regression model and interpret the treatment coefficient as a causal effect, a practice that remains widespread in many disciplines. Throughout this paper, we refer to such models as ``spatial regression models'' \citep{anselin2002under}. For example, a standard approach is to regress the outcome variable (e.g., birth weight) on a binary treatment variable (e.g., Superfund site remediation) and measured confounders, while incorporating random intercepts at the county or state level. Alternatively, researchers use Gaussian process regression, modeling the correlation between the error terms of two Superfund sites as a function of their geographic proximity, or employ a conditional autoregressive model, which models spatial autocorrelation based on an adjacency or weight structure \citep{reich2021a}.

The literature, however, offers conflicting perspectives on whether spatial regression models should be used to adjust for unmeasured spatial confounding. A prevailing view is that spatial regression is inappropriate for this purpose, because the estimated treatment coefficient can be biased in finite samples \citep{hanks2015restricted, hodges2010adding, khan2023re,  narcisi2024effect, nobre2021effects, paciorek2010a, page2017estimation, schnell2020a}. This view also arises from the fact that a spatial regression model is inherently misspecified when there is unmeasured spatial confounding. Specifically, spatially structured error is typically modeled independently of treatment under the assumption that it does not capture unmeasured spatial confounding, which, if present, would induce dependence between the error term and the treatment \citep{hodges2010adding}.

Recently, \cite{gilbert2025consistency} showed that, under certain conditions, spatial regression models with Gaussian process covariance yield consistent treatment effect estimates under unmeasured confounding that is continuous in space. Although the bias of the treatment coefficient is non-zero in finite samples, the authors show that it vanishes asymptotically. 
While this result is of substantial interest, it remains uncertain whether similar consistency guarantees extend to other commonly used spatial regression models---such as random effects models and conditional autoregressive models---or what implications this has for the magnitude of unmeasured spatial confounding bias in finite samples.
 
\subsection{Contribution and outline}
\vspace{-0.1in}
Although this literature has advanced our understanding of spatial regression under unmeasured confounding, several critical aspects remain underexplored. 
First, most bias analyses focus primarily on linear outcome models with constant treatment effects and no treatment-covariate interactions. 
Second, while studies such as \cite{gilbert2025consistency} and \cite{paciorek2010a} use particular forms for the error covariance matrix, such as a Gaussian process structure, it is worth exploring how these choices adjust for bias in finite samples. 
This question extends to other error structures, such as random effects and conditional autoregressive models. 
Third, although \cite{dupont2023demystifying}, \cite{khan2023re}, and \cite{narcisi2024effect} accommodate a general error covariance matrix, they characterize bias through the eigenstructure of the precision matrix rather than through more interpretable measures of spatial autocorrelation. 
Finally, current methods often rely on strong distributional assumptions regarding the unmeasured confounder $U$, such as multivariate normality. 

More broadly, the adjustment mechanism by which spatial regression accounts for unmeasured spatial confounding remains to be understood from a design-based standpoint, leaving open questions about the hypothetical experiment and the causal contrast these models implicitly define in space. 
In this paper, we provide an exact finite-sample analysis of spatial regression models in binary treatment settings. 
Our contributions are fourfold. 
First, we introduce a general weighting framework for causal inference with spatial data that encompasses three canonical types of spatial regression models: random effects models, conditional autoregressive models, and Gaussian process models.  
We establish that generalized least squares (GLS) estimators of the treatment effect derived from spatial regression models are equivalent to minimal dispersion approximately balancing weights estimators, which minimize a measure of weight dispersion subject to specific covariate balance requirements \citep{bruns2023augmented, chattopadhyay2023implied, robins2007comment}. 
To our knowledge, these are the first results to show that a GLS estimator is equivalent to a weighting estimator that achieves exact balance on the means of measured covariates and approximate balance on an additional latent set of covariates---specifically, the eigenvectors of the error covariance matrix. 
This result elucidates how the error structure adjusts estimates to account for spatial patterns. 
This perspective is related to the work of \citet{hodges2010adding}, who showed that the spatially correlated error term in a linear model can be represented through additional canonical regressors with random coefficients. 
Importantly, we further demonstrate that the resulting GLS regression is equivalent to a form of penalized regression, and we provide a mathematical programming representation that characterizes the exact nature of this regression adjustment.

Second, we characterize the class of unmeasured confounders that these three spatial regression models are designed to adjust for. 
We do so by linking the unmeasured confounding bias to the confounder's Moran's I statistic, a standard measure of spatial autocorrelation. 
This mathematical characterization is accompanied by three visual diagnostics to complement traditional weighting diagnostics: a map of the implied weights, visualizations of unmeasured confounders that are approximately balanced, and a plot of the maximum confounding bias as a function of the unmeasured confounder's Moran’s I.

Third, we propose a more general spatial weighting estimator that circumvents key limitations of traditional spatial regression. 
Unlike standard approaches, our estimator  accounts for multiple forms of spatially autocorrelated unmeasured confounding while accommodating nonlinear outcome models and treatment effect heterogeneity. 
We derive bounds on the finite-sample bias of our estimator and evaluate its performance through simulation. 

Fourth, we apply our proposed spatial weighting estimator to study the impact of Superfund site remediation on birth outcomes. 
To our knowledge, this study represents the first national-scale linkage of Medicaid claims with EPA remediation data to evaluate health impacts of Superfund cleanup. 
Our approach reveals how the causal contrast of remediated and non-remediated Superfund sites is constructed across the geography of the United States, making explicit the implicit geography of the comparison. 
Our results suggest that cleanup efforts between 1991 and 2015 had a small but statistically significant beneficial effect on the rate of small, vulnerable newborns during 2016--2018. 

The remainder of this paper is organized as follows. 
Section \ref{sec:motivatingdata} provides institutional and scientific background on the Superfund program and describes the linked dataset used for analysis. 
Section \ref{sec:threeexamples} then reviews three canonical types of spatial regression models: random effects, conditional autoregressive, and Gaussian process models. 
Section \ref{sec:weighting} establishes a general weighting framework that encompasses GLS estimators derived from these spatial regression models as a special case.
Section \ref{sec:space-diagnostics} characterizes unmeasured confounding bias through Moran's I and introduces new visual diagnostics. 
Building on this, Section \ref{sec:extension} proposes an alternative estimator to spatial regression and describes its finite-sample properties in the presence of unmeasured spatial confounding.
Simulation results are detailed in the Supplementary Materials.  
Section \ref{sec:application} applies our approach to estimate the causal effect of Superfund site remediation on Medicaid-covered birth outcomes. 
Section \ref{sec:guidance} concludes with practical guidance for evaluating environmental interventions in spatial settings. 

\vspace{-0.05in}
\section{Superfund remediation in the United States}
\label{sec:motivatingdata}
\vspace{-0.1in}
\subsection{Background}
\vspace{-0.1in}
Remediation of Superfund sites is a federally managed process that aims to identify, prioritize, and clean hazardous waste sites to protect human health and the environment. 
Remediation typically proceeds through several stages, beginning with site identification and preliminary assessment, followed by placement on the National Priorities List (NPL), detailed investigation of contamination, design and implementation of remedial actions, and finally site deletion once cleanup goals are achieved.  

Superfund site remediation may improve birth outcomes in nearby areas by reducing maternal exposure to contaminants---such as arsenic, lead, other metals, volatile organic compounds, and persistent organic pollutants---that have been linked to adverse pregnancy and birth outcomes, or by interrupting these exposure pathways \citep{ouidir2020association, perera2003effects, porpora2019environmental, stillerman2008environmental}. However, quantitative evidence on the causal effects of remediation on maternal and infant health remains limited.  
\citet{currie2011superfund} report that remediation may reduce the incidence of congenital anomalies by 20–25\% but find no significant effects on low birth weight, preterm birth, or infant mortality. 
Most prior studies have examined associations between residential proximity to contaminated sites and birth outcomes rather than the causal impact of remediation itself \citep{brender2011residential, henn2016prenatal, keeler2023residential, kihal2017systematic, langlois2009maternal}. 
 
Evaluating the causal effect of remediation is complicated by several variables that influence both site remediation and birth outcomes (confounders) that are difficult to measure or obtain. One example is the severity of the contamination itself: sites that undergo cleanup are often those with less severe or more tractable contamination, and the severity of contamination likely affects the severity of adverse outcomes. A second example is local funding availability. Remediation is typically financed by potentially responsible parties, but when such parties cannot be identified or are unable to pay, funding may come from a combination of the Superfund trust fund, taxpayers, states, or an excise tax on chemical manufacturers, making funding availability complex \citep{kiel2001estimating}. Local funding availability may also correlate with community wealth and access to healthcare. 
While both of these variables are challenging to measure, they are likely to be spatially autocorrelated, providing an opportunity to account for them through a spatial confounding adjustment.

\subsection{Data}
\vspace{-0.1in}
In this study, we evaluate the impact of Superfund site remediation on birth outcomes through a novel synthesis of publicly available EPA and high-resolution census data with Medicaid birth claims from 2016--2018, which cover about 40\% of all U.S. births during this period, while accounting for potential bias from unmeasured spatial confounders.
The dataset includes $n=1429$ Superfund sites in the United States currently on the National Priorities List (NPL) as of October 30, 2025, excluding those for which remedial action had started before January 1, 1991. 
The binary treatment variable indicates whether a Superfund site was remediated between 1991 and 2015. More specifically, we define  
a site as \textit{treated} if the site was deleted from the NPL before December 31, 2015, and 
\textit{control} otherwise. Figure \ref{fig:treatment} plots binary treatment and Figure \ref{fig:treatment-definition} in the Supplementary Material illustrates the definition of binary treatment. 

\vspace{-0.15in}
\begin{figure}[!ht]
    \centering
    \includegraphics[width=\linewidth]{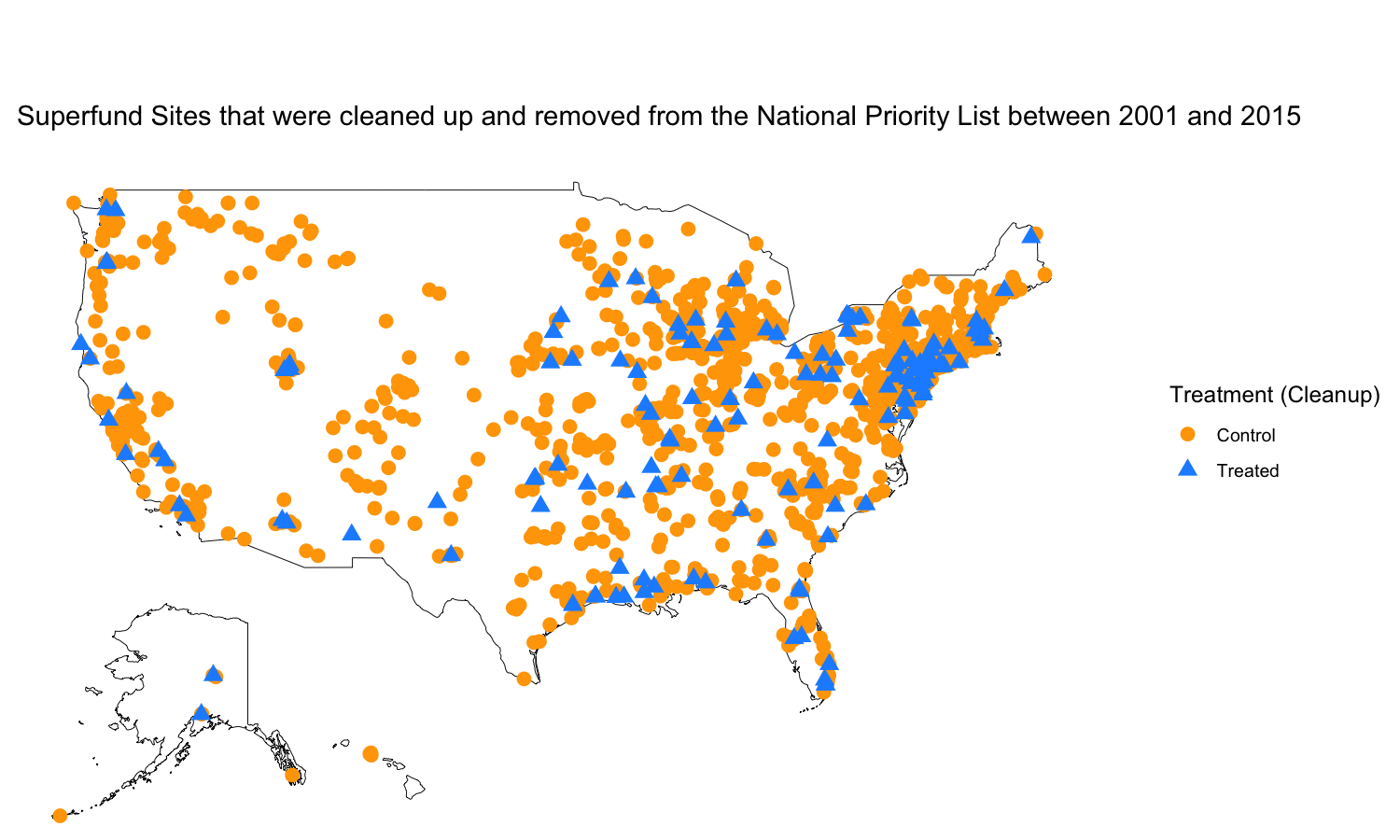}
    \caption{As of January 1, 1991, remediation action had not started at $n = 1429$ Superfund sites. During the fifteen year period between January 1, 1991 and December 31, 2015, $n_1 = 256$ sites (blue triangles) were remediated and deleted from the National Priorities List.}
    \label{fig:treatment}
\end{figure}
\vspace{-0.15in}

We consider eleven sociodemographic confounders derived from the 1990 Decennial Census, measured within each site's boundary and the surrounding 2-kilometer buffer. The buffer size is motivated by prior work examining infant health near Superfund sites \citep{currie2011superfund}. The census covariates, originally obtained at the census tract-level and subsequently summarized to the buffer-level, include population density, percentage of individuals identifying as Hispanic, Black, Indigenous, or Asian, percent owner occupied, median household income, median house value, percentage below the poverty line, percentage with a high school diploma, and the median year of housing construction \citep{USCensus1990, Schroeder2025NHGIS}. We additionally include five binary indicators: two indicating whether at least one metal or at least one volatile organic compound (VOC) was present, and three indicating whether at least one contaminant was detected in soil or other solids, water, or gas
. Finally, we include site score, a numerical rating assigned to each site by the EPA during a hazardous risk assessment.  
Treatment and covariate data are publicly available on Harvard Dataverse at \url{https://doi.org/10.7910/DVN/EKSCCU} and fully reproducible code is accessible at \url{https://github.com/NSAPH-Projects/spatial_regression_weighting}.

We consider two birth outcomes measured within each site's boundary and the surrounding $2$-kilometer buffer during the period 2016 to 2018. Importantly, this paper focuses on the design stage of spatial regression, and the outcome data are not used until Section \ref{sec:application}. The two outcomes are (i) the percentage of newborns classified as \textit{small vulnerable}---defined as preterm, small for gestational age, or low birth weight \citep{ashorn2023small}---and (ii) the percentage of newborns with congenital anomalies. Both variables were obtained at the zip code tabulation area (ZCTA) level from Medicaid T-MSIS Analytic File Inpatient Claims \citep{cms_taf_ip}. These ZCTA-level statistics were aggregated to the Superfund site and $2$-kilometer buffer areas (see Supplementary Section \ref{suppsec:dataapp} for more details). 

\vspace{-0.1in}
\section{Three common spatial regression models}
\label{sec:threeexamples}
\vspace{-0.1in}

In this section, we describe a general form of a spatial regression model and introduce three widely used classes of such models for estimating causal effects in spatial settings. 
Write $Z_i \in \{0,1\}$ for the binary treatment variable, $\bm{X}_i \in \mathbb R^{18}$ for the vector of $17$ observed covariates and the intercept, and $Y_i \in \mathbb R$ for one of the two birth outcomes for site $i$. 
Further write $\bm{Z} = (Z_1, \ldots, Z_n)^\top \in \mathbb{R}^n$ for the vector of binary treatments,  
$\bm{X} = (\bm{X}_1, \ldots, \bm{X}_n)^\top \in \mathbb{R}^{n \times 18}$ for the matrix of observed covariates, and $\bm{Y} = (Y_1, \ldots, Y_n)^\top \in \mathbb{R}^n$ for the vector of outcomes. Under the Stable Unit Treatment Value Assumption (SUTVA), define $\{Y_i(1), Y_i(0)\}$ as the potential outcomes under treatment and control, respectively, such that $Y_i = Z_iY_i(1) + (1-Z_i)Y_i(0)$.

The general form of a spatial regression model is
\begin{equation}
\label{eqngeneral}
\bm{Y} = \bm{X}\bm{\beta} + \tau \bm{Z} + \bm{\epsilon}, \qquad \bm{\epsilon} \sim \mathcal{N}\left(\bm{0},\;\bm{\Sigma} = \sigma^2 \bm{I}_n + \rho^2 \bm{S}\right), \qquad \bm{\epsilon} \indep (\bm{X}, \bm{Z}),
\end{equation}
where the covariance matrix of the error terms $\bm{\Sigma}$ includes both an independent and identically distributed term ($\sigma^2 \bm{I}_n$) and a spatially autocorrelated  term ($\rho^2 \bm{S}$). 
Typically, the $(i,j)$ entry of $\bm{S}$ decreases as the spatial distance between sites $i$ and $j$ increases, reflecting Tobler's First Law of Geography \citep{miller2004tobler}. 
The common assertion that spatial regression cannot adjust for unmeasured spatial confounding stems from the independence assumption, which treats the errors as unrelated to covariates or treatment and thus rules out the presence of unmeasured confounders \citep{gilbert2025consistency}.

Throughout this work, we assume that the error covariance matrix $\bm{\Sigma}$ is known and specified in advance. 
Our analysis is motivated by the observational study of Superfund remediation, but it extends to general linear regression models with arbitrary error covariance structures. 
We focus on the generalized least squares (GLS) estimator for $\tau$, denoted $\hat{\tau}_{GLS}$, which is given by
\vspace{-0.1in}
\begin{align}
    \begin{pmatrix} \hat{\bm{\beta}}_{GLS} \\
    \hat{\tau}_{GLS} \end{pmatrix} &= \bigg(\begin{pmatrix} \bm{X} & \bm{Z} \end{pmatrix}^T \bm{\Sigma}^{-1} \begin{pmatrix} \bm{X} & \bm{Z} \end{pmatrix} \bigg)^{-1} \begin{pmatrix} \bm{X} & \bm{Z} \end{pmatrix}^T \bm{\Sigma}^{-1} \bm{Y}. \label{eqn:gls}
\end{align}

\subsection{Random effects models}
\vspace{-0.1in}

As a first example, consider a random effects model that incorporates state-level random intercepts. 
Let $C_i$ denote the state indicator for site $i$, taking values $c \in \{1, \ldots, K\}$ with $K = 51$, so that $C_i = k$ indicates that Superfund site $i$ is in state $k$. 
The model is
\[
Y_i = \bm{\beta}^\top \bm{X}_i + \tau Z_i + \sum_{k=2}^K \gamma_k\, I(C_i = k) + \xi_i, \quad \gamma_k \stackrel{\text{i.i.d.}}{\sim} \mathcal{N}(0, \rho^2), \quad \xi_i \stackrel{\text{i.i.d.}}{\sim} \mathcal{N}(0, \sigma^2), \quad (\boldsymbol{\gamma}, \boldsymbol{\xi}) \indep (\bm{X}, \bm{Z}),
\]
where $\boldsymbol{\gamma} = (\gamma_1, \ldots, \gamma_K)$, $\boldsymbol{\xi} = (\xi_1, \ldots, \xi_n)$. 
In the general form \eqref{eqngeneral}, the spatial covariance matrix $\bm{S}$ is block diagonal, with the $k$th block given by $J_{n_k}$, an $n_k \times n_k$ matrix of ones, where $n_k = \sum_{i=1}^n I(C_i = k)$.

As $\rho^2 \rightarrow \infty$, the model enforces no shrinkage, and the coefficient estimates converge to those obtained from a model with fixed intercepts at the state level. 
As $\rho^2 \rightarrow 0$, the model imposes extreme shrinkage, forcing all random intercepts $\gamma_k$ to zero. 
In this case, the coefficient estimates converge to those obtained from a complete-pooling model, in which the data from all states are pooled together \citep{gelman2006data}.  
In our implementation of the random effects model in the Superfund data, we fix $\rho^2 = 10$ and $\sigma^2 = 1$.

\subsection{Conditional autoregressive models}
\vspace{-0.1in}

The second model we consider is the intrinsic conditional autoregressive (ICAR) model, 
\[
Y_i = \bm{\beta}^\top \bm{X}_i + \tau Z_i + \phi_i + \xi_i, \quad \bm{\phi} \sim \mathcal{N}\left(\bm{0},\, \rho^2 (\bm{D} - \bm{A})^{-1}\right),\quad \bm{\xi} \sim \mathcal{N}(\bm{0},\sigma^2 \bm{I}_n),\quad (\bm{\phi}, \bm{\xi}) \indep (\bm{X}, \bm{Z}),
\]
where $\bm{A}$ is a spatial weight matrix and $\bm{D}$ is a diagonal matrix with entries $D_{ii} = \sum_{j=1}^n A_{ij}$. 
A common choice for $\bm{A}$ is the spatial adjacency matrix, although distance-based matrices can also be used \citep{earnest2007evaluating}. 
While $\bm{D} - \bm{A}$ is singular and in a Bayesian sense leads to an improper prior for $\bm{\phi}$, this does not affect our theoretical results (see the Supplementary Material). In the general form of (\ref{eqngeneral}), we set $\bm{S} = (\bm{D} - \bm{A})^{-1}$. For the Superfund dataset, we implement the ICAR model by fixing $\rho^2 = 10$ and $\sigma^2 = 1$, and defining $A_{ij} = 1$ if site $i$ is among the five nearest neighbors of site $j$ or vice versa, and $A_{ij} = 0$ otherwise.

\subsection{Gaussian process models}
\vspace{-0.1in}
The third class is a Gaussian process (GP) model, specified as
\[
Y_i = \bm{\beta}^\top \bm{X}_i + \tau Z_i + \nu_i + \xi_i, \quad \bm{\nu} \sim \text{GP}(\bm{0}, \rho^2 \bm{K}), \quad \bm{\xi} \sim \mathcal{N}(\bm{0}, \sigma^2 \bm{I}_n), \quad (\bm{\nu}, \bm{\xi}) \indep (\bm{X}, \bm{Z}),
\]
where $\bm{K}$ is the kernel matrix of the Gaussian process. 
In the general form (\ref{eqngeneral}), this corresponds to setting $\bm{S} = \bm{K}$.

We apply the GP model to the Superfund dataset by setting $\rho^2 = 10$ and $\sigma^2 = 1$ and using a Mat\'ern correlation with hyperparameters $\kappa = 10,\phi=500/(2\sqrt{10})$ and the distance $d_{ij}$ between sites $i$ and $j$ is measured in meters. Specifically, 
$K_{ij} = 2^{1-\kappa}(\Gamma(\kappa))^{-1}(d_{ij}/\phi)^\kappa B_\kappa(d_{ij}/\phi)$
where $B_\kappa$ is the modified Bessel function of the second kind of order $\kappa$. 

\vspace{-0.1in}
\section{An encompassing weighting framework}
\label{sec:weighting}
\vspace{-0.1in}
This section presents our main results. First, we establish that the GLS estimator $\hat{\tau}_{GLS}$ in Equation \ref{eqn:gls} acts as a weighting estimator. 
Second, we demonstrate that it is specifically a minimal dispersion approximately balancing weights estimator: the weights minimize a measure of weight dispersion while achieving exact balance on the means of measured covariates and approximate balance on the means of a latent set of covariates. 
This property elucidates the exact nature of the adjustment induced by GLS, clarifying how it approximates key features of a hypothetical experiment and distinguishing it from other forms of regression adjustment.

While the following propositions apply to any error covariance matrix $\bm{\Sigma}$, we interpret them here in the context of spatial regression, where $\bm{\Sigma} = \sigma^2 \bm{I}_n + \rho^2\bm{S}$ and $\bm{S}$ encodes spatial autocorrelation. 
The following proposition shows that $\hat{\tau}_{GLS}$ admits a weighting representation, as it can be re-expressed as a difference of weighted means of the treated and control outcomes. 

\begin{proposition}
\label{thm:weighting}
For any positive semidefinite covariance matrix $\Sigma$, the GLS estimator of $\tau$ in Equation \ref{eqn:gls} can be expressed as $$\hat{\tau}_{GLS} = \sum_{i:Z_i = 1} w_i Y_i - \sum_{i:Z_i = 0} w_i Y_i$$ with weights of $\bm{w} = (w_1, \ldots, w_n)$ that admit the following closed-form representation
\begin{align}
    \bm{w} &= \bm{M} \frac{(\bm{I}_n -  \bm{\Sigma}^{-1}\bm{X}(\bm{X}^T \bm{\Sigma}^{-1} \bm{X})^{-1} \bm{X}^T)\bm{\Sigma}^{-1}\bm{Z}}{\bm{Z}^T\bm{\Sigma}^{-1}(\bm{I}_n- \bm{X}(\bm{X}^T \bm{\Sigma}^{-1} \bm{X})^{-1} \bm{X}^T\bm{\Sigma}^{-1})\bm{Z}}, \label{weights}
\end{align}
where $\bm{M}$ is the diagonal matrix with $(i,i)$ entry $M_{ii} = 2Z_i -1$. 
\end{proposition}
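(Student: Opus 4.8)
The plan is to start from the block form of the GLS normal equations in Equation \eqref{eqn:gls} and isolate $\hat\tau_{GLS}$ using the standard partitioned‐regression (Frisch–Waugh–Lovell) identity. Writing the design matrix as $\begin{pmatrix}\bm X & \bm Z\end{pmatrix}$ and inverting the $2\times 2$ block system, the coefficient on $\bm Z$ satisfies
\[
\hat\tau_{GLS} = \frac{\bm Z^\top \bm\Sigma^{-1}\bigl(\bm I_n - \bm X(\bm X^\top\bm\Sigma^{-1}\bm X)^{-1}\bm X^\top\bm\Sigma^{-1}\bigr)\bm Y}{\bm Z^\top \bm\Sigma^{-1}\bigl(\bm I_n - \bm X(\bm X^\top\bm\Sigma^{-1}\bm X)^{-1}\bm X^\top\bm\Sigma^{-1}\bigr)\bm Z}.
\]
The numerator is linear in $\bm Y$, so it can be written as $\bm v^\top\bm Y$ for an explicit vector $\bm v$, and dividing by the scalar denominator gives $\hat\tau_{GLS}=\bm u^\top\bm Y$ with $\bm u = \bm v/(\text{denominator})$. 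This already exhibits $\hat\tau_{GLS}$ as a linear estimator; the remaining work is purely bookkeeping to match it to the claimed weighting form.

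Next I would split $\bm u^\top\bm Y = \sum_{i:Z_i=1} u_i Y_i + \sum_{i:Z_i=0} u_i Y_i$ and define $w_i = (2Z_i-1)u_i$, i.e.\ $\bm w = \bm M\bm u$ with $\bm M=\mathrm{diag}(2Z_i-1)$. Then $\sum_{i:Z_i=1}u_iY_i = \sum_{i:Z_i=1}w_iY_i$ (since $2Z_i-1=1$ there) and $\sum_{i:Z_i=0}u_iY_i = -\sum_{i:Z_i=0}w_iY_i$ (since $2Z_i-1=-1$ there), which is exactly the asserted difference-of-weighted-means representation. Substituting the explicit $\bm u$ yields the closed form \eqref{weights}: the numerator vector is $(\bm I_n-\bm\Sigma^{-1}\bm X(\bm X^\top\bm\Sigma^{-1}\bm X)^{-1}\bm X^\top)\bm\Sigma^{-1}\bm Z$ (the transpose of the row vector acting on $\bm Y$ in the FWL expression), and the scalar in the denominator is the same quadratic form $\bm Z^\top\bm\Sigma^{-1}(\bm I_n-\bm X(\bm X^\top\bm\Sigma^{-1}\bm X)^{-1}\bm X^\top\bm\Sigma^{-1})\bm Z$ written out.

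The only genuine subtlety — the step I would treat most carefully — is the transpose manipulation that shows the vector multiplying $\bm Y$ in the FWL numerator, namely $\bigl(\bm Z^\top\bm\Sigma^{-1}(\bm I_n-\bm X(\bm X^\top\bm\Sigma^{-1}\bm X)^{-1}\bm X^\top\bm\Sigma^{-1})\bigr)^\top$, equals $(\bm I_n-\bm\Sigma^{-1}\bm X(\bm X^\top\bm\Sigma^{-1}\bm X)^{-1}\bm X^\top)\bm\Sigma^{-1}\bm Z$; this uses only the symmetry of $\bm\Sigma^{-1}$ and of $(\bm X^\top\bm\Sigma^{-1}\bm X)^{-1}$, but the placement of the $\bm\Sigma^{-1}$ factors must be tracked exactly to land on the stated formula. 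I would also note in passing that the denominator is strictly positive whenever $\bm Z$ is not in the column space of $\bm X$ (the residual of $\bm Z$ after $\bm\Sigma^{-1}$-projection onto $\bm X$ is nonzero), so the weights are well defined; this is the implicit regularity condition. Everything else — the block-matrix inversion and collecting terms — is routine linear algebra.
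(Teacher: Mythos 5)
Your proposal is correct and follows essentially the same route as the paper's proof: applying the Frisch--Waugh--Lovell (block matrix inversion) identity to obtain $\hat\tau_{GLS}$ as a linear functional of $\bm Y$, then sign-flipping via $\bm M = \mathrm{diag}(2Z_i-1)$ to obtain the difference-of-weighted-means form with the stated closed-form weights. The transpose bookkeeping and the positivity of the denominator that you flag are handled implicitly in the paper, so no gap remains.
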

Proposition \ref{thm:weighting} shows that the implied unit-level data weights $w_1, \ldots, w_n$ depend only on the treatment indicators, covariates, and error covariance matrix $\bm{\Sigma}$. 
Therefore, the weights can be obtained as part of the design stage of the study as opposed to analysis stage. 
Moreover, the weights can equivalently be obtained by solving a quadratic programming problem. 

 \begin{proposition}
 \label{thm:minimal-weighting}

Suppose that $\bm{\Sigma} = \sigma^2 \bm{I}_n + \rho^2\bm{S}$, where $\bm{S}$ is an $n\times n$ positive semidefinite matrix. 
Let $\bm{v}_1, \ldots, \bm{v}_n$ be the eigenvectors of $\bm{S}$ with corresponding eigenvalues $\lambda_1 \geq \ldots \geq \lambda_n \geq 0$. 
Consider the following quadratic programming problem:
\begin{align}
    &\min_{\bm{w} \in \mathbb{R}^n} \bigg\{  \sigma^2 \sum_{i = 1}^n w_i^2 + \rho^2 \sum_{k = 1}^n \lambda_k (\sum_{i:Z_i = 1} w_i v_{ki} - \sum_{i:Z_i = 0} w_i v_{ki})^2 \bigg\}\label{dispersion}\\
    &\text{subject to } \sum_{i: Z_i = 1} w_i \bm{X}_i = \sum_{i:Z_i = 0} w_i \bm{X}_i, \sum_{i:Z_i=1} w_i =\sum_{i:Z_i=0} w_i= 1.
   \label{constraint}
   \end{align}

   The solution to this problem is the implied weights of $\hat{\tau}_{GLS}$,
   $$\bm{w} = \bm{M} \frac{(\bm{I}_n -  \bm{\Sigma}^{-1}\bm{X}(\bm{X}^T \bm{\Sigma}^{-1} \bm{X})^{-1} \bm{X}^T)\bm{\Sigma}^{-1}\bm{Z}}{\bm{Z}^T\bm{\Sigma}^{-1}(\bm{I}_n- \bm{X}(\bm{X}^T \bm{\Sigma}^{-1} \bm{X})^{-1} \bm{X}^T\bm{\Sigma}^{-1})\bm{Z}}.$$
     \end{proposition}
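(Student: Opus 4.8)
The plan is to recast the quadratic program as an equality-constrained least-squares problem in a reflected coordinate, solve it with Lagrange multipliers, and verify that the resulting minimizer coincides term-for-term with the closed-form weights of Proposition \ref{thm:weighting}.

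First I would put the objective in matrix form. Writing $\bm{M} = \mathrm{diag}(2Z_i - 1)$ as in Proposition \ref{thm:weighting}, each balance discrepancy satisfies $\sum_{i:Z_i=1} w_i v_{ki} - \sum_{i:Z_i=0} w_i v_{ki} = \bm{v}_k^\top \bm{M}\bm{w}$, so the spatial penalty is $\rho^2 \sum_{k=1}^n \lambda_k (\bm{v}_k^\top \bm{M}\bm{w})^2 = \rho^2\, \bm{w}^\top \bm{M}\big(\sum_{k=1}^n \lambda_k \bm{v}_k \bm{v}_k^\top\big)\bm{M}\bm{w} = \rho^2\, \bm{w}^\top \bm{M}\bm{S}\bm{M}\bm{w}$, using the spectral decomposition of $\bm{S}$. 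Since $(2Z_i-1)^2 = 1$, $\bm{M}$ is an involution with $\bm{M}^2 = \bm{I}_n$, so the i.i.d.\ term is $\sigma^2 \bm{w}^\top\bm{w} = \sigma^2\, \bm{w}^\top \bm{M}\bm{M}\bm{w}$, and the full objective collapses to $\bm{w}^\top \bm{M}(\sigma^2 \bm{I}_n + \rho^2 \bm{S})\bm{M}\bm{w} = \bm{w}^\top \bm{M}\bm{\Sigma}\bm{M}\bm{w}$. Introducing $\bm{u} = \bm{M}\bm{w}$, which is a bijection of $\R^n$ with inverse $\bm{w} = \bm{M}\bm{u}$, the objective becomes simply $\bm{u}^\top\bm{\Sigma}\bm{u}$.

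Next I would translate the constraints into the new coordinate. Exact covariate balance reads $\sum_{i:Z_i=1} w_i\bm{X}_i - \sum_{i:Z_i=0} w_i\bm{X}_i = \bm{X}^\top\bm{M}\bm{w} = \bm{X}^\top\bm{u} = \bm{0}$, and $\bm{Z}^\top\bm{u} = \sum_{i:Z_i=1} w_i = 1$. Because $\bm{X}$ contains the intercept, $\bm{X}^\top\bm{u} = \bm{0}$ already forces $\bm{1}^\top\bm{u} = 0$, which together with $\bm{Z}^\top\bm{u} = 1$ yields $\sum_{i:Z_i=0} w_i = 1$; hence the three original constraint blocks are exactly the pair $\bm{X}^\top\bm{u} = \bm{0}$, $\bm{Z}^\top\bm{u} = 1$. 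The problem is therefore
\[
\min_{\bm{u}\in\R^n}\ \bm{u}^\top\bm{\Sigma}\bm{u}\quad\text{subject to}\quad \bm{X}^\top\bm{u} = \bm{0},\ \ \bm{Z}^\top\bm{u} = 1 .
\]
Since $\sigma^2 > 0$ and $\bm{S}\succeq 0$, $\bm{\Sigma}\succ 0$, so the objective is strictly convex and this linearly constrained program has a unique minimizer, characterized by its first-order conditions. Forming the Lagrangian $\bm{u}^\top\bm{\Sigma}\bm{u} - 2\bm{\mu}^\top\bm{X}^\top\bm{u} - 2\eta(\bm{Z}^\top\bm{u}-1)$, stationarity gives $\bm{u} = \bm{\Sigma}^{-1}(\bm{X}\bm{\mu} + \eta\bm{Z})$; imposing $\bm{X}^\top\bm{u}=\bm{0}$ solves $\bm{\mu} = -\eta(\bm{X}^\top\bm{\Sigma}^{-1}\bm{X})^{-1}\bm{X}^\top\bm{\Sigma}^{-1}\bm{Z}$, and imposing $\bm{Z}^\top\bm{u}=1$ then fixes $\eta = \big(\bm{Z}^\top\bm{\Sigma}^{-1}(\bm{I}_n - \bm{X}(\bm{X}^\top\bm{\Sigma}^{-1}\bm{X})^{-1}\bm{X}^\top\bm{\Sigma}^{-1})\bm{Z}\big)^{-1}$. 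Back-substituting gives $\bm{u} = \eta(\bm{I}_n - \bm{\Sigma}^{-1}\bm{X}(\bm{X}^\top\bm{\Sigma}^{-1}\bm{X})^{-1}\bm{X}^\top)\bm{\Sigma}^{-1}\bm{Z}$, and undoing the substitution, $\bm{w} = \bm{M}\bm{u}$, reproduces verbatim the expression in Proposition \ref{thm:weighting}, which is exactly the vector of implied weights of $\hat\tau_{GLS}$.

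I expect the only real obstacle to be the first step: spotting that the covariate-balance cross terms and the diagonal penalty amalgamate into the single quadratic form $\bm{w}^\top\bm{M}\bm{\Sigma}\bm{M}\bm{w}$ via the spectral decomposition and the involution $\bm{M}^2 = \bm{I}_n$. Once the objective is in this shape and the constraints are rewritten through $\bm{u} = \bm{M}\bm{w}$, the remainder is the textbook solution of an equality-constrained quadratic program. Minor bookkeeping points worth stating are that $\bm{X}^\top\bm{\Sigma}^{-1}\bm{X}$ is invertible and that $\eta$ is finite and nonzero --- both hold under the same rank and overlap conditions that make $\hat\tau_{GLS}$ itself well-defined --- and that strict convexity of $\bm{u}^\top\bm{\Sigma}\bm{u}$ guarantees the stationary point is the unique global minimizer.
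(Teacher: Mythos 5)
Your proof is correct and follows essentially the same route as the paper: rewrite the objective as $\bm{w}^\top \bm{M}\bm{\Sigma}\bm{M}\bm{w}$, change variables to $\bm{M}\bm{w}$, note the intercept makes the control-sum constraint redundant, and solve the equality-constrained quadratic program by Lagrange multipliers. The only difference is bookkeeping: by eliminating the covariate-balance multiplier first and the scalar multiplier second, you bypass the Woodbury-identity computation the paper uses, which is a tidy simplification but not a different argument.
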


Proposition \ref{thm:minimal-weighting} demonstrates that $\hat{\tau}_{GLS}$ belongs to the class of minimal dispersion approximately balancing weights estimators, as it arises from solving a weighting problem that minimizes a measure of weight dispersion (\ref{dispersion}) subject to covariate balance requirements (\ref{constraint}). An extension to a more general $\bm{\Sigma}$, allowing any positive semidefinite structure, is given in Corollary \ref{corollary_general} in the Supplementary Material. This result builds on the existing literature that shows numerical equivalences between linear regression and balancing weights  \citep{bruns2023augmented, chattopadhyay2023implied, robins2007comment} and also relates to the connection between Gaussian process regression, kernel ridge regression, and weighting \citep{arbour2021using}. 
To our knowledge, the reformulation of linear regression models with general error structures as weighting estimators offers a new perspective into their finite-sample behavior.

The implied weights of GLS sum to one in each treatment arm and exactly balance the means of the measured covariates included in the regression, similar to both weighted least squares and augmented inverse probability weighting \citep{chattopadhyay2023implied}. 

A key distinction of the balancing weights problem for GLS, compared to weighted least squares and augmented inverse probability weighting, is its objective function $\{  \sigma^2 \sum_{i = 1}^n w_i^2 + \rho^2 \sum_{k = 1}^n \lambda_k (\sum_{i:Z_i = 1} w_i v_{ki} - \sum_{i:Z_i = 0} w_i v_{ki})^2 \}$. This objective simultaneously minimizes the variance of the weights while penalizing differences in the weighted means of a latent set of covariates---the eigenvectors $\bm{v}_1, \ldots, \bm{v}_n$ of $\bm{S}$---between the treated and control groups. These eigenvectors correspond to the canonical regressors identified by \cite{hodges2010adding}. When $\sigma^2$ is small relative to $\rho^2 \lambda_1, \ldots, \rho^2 \lambda_n$, the objective function is dominated by the mean differences of eigenvectors with the largest eigenvalues. As a result, GLS approximately balances the means of these eigenvectors across the treated and control groups---that is, $\sum_{i:Z_i = 1} w_i v_{ki} \approx \sum_{i:Z_i = 0} w_i v_{ki}$ for $\lambda_k$ large. 

Specifically, the weighting problem of Proposition \ref{thm:minimal-weighting} admits an equivalent formulation as a convex quadratic program with a soft‐balancing constraint on the sum of the mean eigenvector imbalances. In particular, one can instead solve $\min_{\bm{w}} \{  \sigma^2 \sum_{i = 1}^n w_i^2 \} $ subject to $ \sum_{i:Z_i=1} w_i = 1, \sum_{i:Z_i=0} w_i = 1
    \sum_{i: Z_i = 1} w_i \bm{X}_i = \sum_{i:Z_i = 0} w_i \bm{X}_i,$ and $ \sum_{k = 1}^n \lambda_k (\sum_{i:Z_i = 1} w_i v_{ki} - \sum_{i:Z_i = 0} w_i v_{ki})^2 \leq \Delta(\rho^2)
   $ for a specific function $\Delta$. The mapping $\rho^2 \mapsto \Delta(\rho^2)$ is derived in Supplementary Section \ref{sec:hyperparameter}.
   
The ratio $\sigma^2/\rho^2$ thus parametrizes a balance-dispersion trade-off. 
A larger ratio favors reducing the variance of the weights, whereas a smaller ratio favors balancing the means of eigenvectors with the largest eigenvalues. 
This result provides a new perspective on the bias-variance tradeoff in random effects and fixed effects models (see Supplementary Section~\ref{sec:re-fe-tradeoff}), and is of independent interest in the more general context of regression analysis.

Furthermore, the GLS estimator of $\tau$ can be equivalently obtained by augmenting a linear regression of $\bm{Y}$ on $\bm{X}$ and $\bm{Z}$ with the eigenvectors of $\bm{S}$, and applying ridge penalties to their coefficients. 
Specifically, minimizing the ridge regression loss
\[
\mathcal{L}(\bm{\beta}, \tau, \bm{\gamma}) = \sum_{i=1}^n \left(Y_i - \tau Z_i - \bm{\beta}^\top \bm{X}_i - \sum_{j=1}^n \gamma_j v_{ij}\right)^2 + \frac{\sigma^2}{\rho^2} \sum_{j=1}^n \frac{\gamma_j^2}{\lambda_j}
\]
with respect to $(\bm{\beta}, \tau, \bm{\gamma})$ yields $\hat{\tau}_{\text{ridge}} = \hat{\tau}_{\text{GLS}}$ (see Supplementary Section~\ref{sec:gls-ridge} for details). 
The ridge penalty on the coefficient for eigenvector $\bm{v}_k$ is $\sigma^2/(\rho^2 \lambda_k)$, so that eigenvectors with the largest eigenvalues are subject to the least shrinkage and effectively enter the linear model as covariates.
This perspective also relates to the connection between weighting and kernel ridge regression as discussed in \cite{ben2021balancing} and \cite{murray2024unifying}.

We now interpret this result to explain how spatial regression adjusts for confounding. 
Some intuition can be based on two key observations. 
First, as discussed above, spatial regression induces approximate mean balance on the eigenvectors of $\bm{S}$ with the largest eigenvalues. 
Second, these leading eigenvectors correspond to the highest Moran’s I statistics, which is a widely used measure of spatial autocorrelation, where higher values of Moran’s I indicate greater spatial correlation. 
More formally, the first eigenvector $\bm{v}_1$ of $\bm{S}$ maximizes  the ratio $\bm{x}^\top \bm{S} \bm{x} / \bm{x}^\top \bm{x}$, which coincides with Moran’s I after centering; $\bm{v}_2$ maximizes the ratio in the subspace orthogonal to $\bm{v}_1$, and so forth.
Thus, spatial regression implicitly balances the means of a latent set of unmeasured, spatially autocorrelated covariates. 
If an unmeasured confounder $\bm{U}$ is highly spatially autocorrelated, it will tend to lie within the span of these eigenvectors, and will therefore be nearly mean-balanced by the spatial regression adjustment. 
Section~\ref{sec:space-diagnostics} formalizes this intuition, providing both a mathematical analysis and visual illustration of the class of unmeasured confounders that spatial regression approximately balances.

\vspace{-0.1in}
\section{Understanding spatial regression contrasts across space}
\label{sec:space-diagnostics} 
Let $\bm{U} = (U_1, \ldots, U_n)$ denote an unmeasured confounder with sample mean $\bar{U} = \sum_{i=1}^n U_i/n$. 
In this section, we examine how the bias induced by $\bm{U}$ depends on its spatial autocorrelation structure.
Our target estimand is the average treatment effect on the treated (ATT):
\[
\tau_{\mathrm{ATT}} = \mathbb{E}\big[Y_i(1) - Y_i(0) \mid Z_i = 1\big].
\]
\subsection{Causal identification} 

To identify this quantity, we first posit the following assumptions.
\begin{assumption}[Consistency and no interference]\label{consistency}
    $Y_i = Y_i(z)$ when $Z_i = z$.
\end{assumption}
\begin{assumption}[Positivity of treatment]\label{positivity}
    $0 < \pp(Z_i = 1 | X_i, U_i) < 1$.
\end{assumption}
\begin{assumption}[Conditional ignorability given $X, U$]\label{ignorability}
    $\{Y_i(1), Y_i(0)\} \indep Z_i | X_i, U_i$.
\end{assumption}

Under these assumptions,
$
\tau_{\mathrm{ATT}} = \mathbb{E}\Big[
    \mathbb{E}(Y_i|Z_i=1, X_i, U_i) - \mathbb{E}(Y_i|Z_i=0, X_i, U_i)
    \ \Big|\, Z_i=1
\Big].$ However, $\bm{U}$ is unobserved and an analysis adjusting only for $\bm{X}$ will be subject to bias. In what follows, we show that when the outcome model is linear and $\bm{U}$ has a specific form of spatial autocorrelation, spatial regression methods can limit the magnitude of the resulting bias in finite samples.

\subsection{Spatial autocorrelation and unmeasured confounding bias}
We now relate the bias due to the unmeasured confounder $\bm{U}$ to the strength of its spatial autocorrelation. 
Moran's I statistic is a widely used metric for quantifying spatial autocorrelation. 
Let $\bm{S}$ be an $n \times n$ positive semidefinite spatial weights matrix with eigenvalues $\lambda_1 \geq \ldots \geq \lambda_n \geq 0$. The Moran's I statistic for $\bm{U}$ relative to $\bm{S}$ is defined as $\mathcal{I}(\bm{U};\bm{S}) = \sum_{i = 1}^n \sum_{j = 1}^n S_{ij} (U_i-\bar{U})(U_j-\bar{U})/(\lambda_1 \sum_{i = 1}^n (U_i-\bar{U})^2)$.
Although the conventional definition of Moran's I is $n\lambda_1 \mathcal{I}(\bm{U};\bm{S})/ \sum_{i = 1}^n \sum_{j=1}^n S_{ij}$, the version just presented is normalized to lie between 0 and 1 for ease of interpretation. 
A value of 1 indicates perfect spatial autocorrelation, while a value of 0 indicates no spatial autocorrelation.

 Moran's I can be interpreted as a scalar summary quantifying how strongly $\bm{U}$ aligns with the eigenvectors of $\bm{S}$ that have the largest eigenvalues. 
 To see this, without loss of generality, assume that $\bar{U} = 0$ and $\sum_{i = 1}^n U_i^2 = 1$. Then $\mathcal{I}(\bm{U};\bm{S}) = \sum_{i = 1}^n \lambda_i (\bm{v}_i^T \bm{U})^2$. 
In other words, Moran’s I is a weighted sum of the squared coefficients of $\bm{U}$ represented in the eigenbasis of $\bm{S}$, where each coefficient is weighted by the corresponding eigenvalue $\lambda_i$.
Section~\ref{sec:weighting} showed that a spatially autocorrelated $\bm{U}$ is largely supported on the leading eigenvectors of $\bm{S}$ with the largest eigenvalues; as a result, $\bm{U}$ is approximately mean-balanced by the weighting procedure of Proposition~\ref{thm:minimal-weighting}.
Therefore, spatial regression approximately adjusts for such $\bm{U}$ if the true outcome model is linear. 
The following proposition formalizes this intuition by relating the bias from unmeasured confounding due to $\bm{U}$ to its Moran's I statistic.

    \begin{proposition}[Relation between Moran's I and unmeasured confounding bias from $\bm{U}$]\label{thm:imbalance_spatialauto}
         Suppose that Assumptions \ref{consistency}--\ref{ignorability} hold. 
         Further assume that the outcome model is linear, so that $Y_i = \bm{\beta}^T \bm{X}_i + \tau Z_i + \gamma U_i + \epsilon_i$, where $\epsilon_i$ is an independent and identically distributed error term with $\E(\epsilon_i) = 0$, $\text{Var}(\epsilon_i) = \sigma^2$, and $\epsilon_i \indep (\bm{X}, \bm{Z}, \bm{U})$ for $i = 1, \ldots, n$. 
         Let $\bm{\Sigma} = \sigma^2 \bm{I}_n + \rho^2\bm{S}$ where $\bm{S}$ is an $n\times n$ positive semidefinite matrix with eigenvectors $\bm{v}_1, \ldots, \bm{v}_n$ and  corresponding eigenvalues $\lambda_1 \geq \ldots \geq \lambda_n \geq 0$. 
         Then, the bias of $\hat{\tau}_{GLS}$ for $\tau_{\mathrm{ATT}} = \tau$, conditional on $(\bm{X}, \bm{Z}, \bm{U})$, is bounded as follows:
         \begin{align*}
         \bigg\vert\E(\hat{\tau}_{GLS}|\bm{X}, \bm{Z}, \bm{U}) - \tau\bigg\vert &= \bigg\vert\gamma (\sum_{i:Z_i = 1} w_i U_i - \sum_{i:Z_i = 0} w_i U_i)\bigg\vert \\
         &\leq |\gamma| \sqrt{ \frac{c_0(2\sigma^2+\rho^2 \lambda_1 + \rho^2 \lambda_n)^2}{4(\sigma^2 + \rho^2 \lambda_1)(\sigma^2 + \rho^2 \lambda_n)(\sigma^2 + \rho^2 \lambda_1 \mathcal{I}(\bm{U};\bm{S}))}\sum_{i = 1}^n (U_i-\bar{U})^2} \end{align*}
        where $ c_0 = \sigma^2 \sum_{i = 1}^n w_i^2 + \rho^2 \sum_{k = 1}^n \lambda_k (\sum_{i:Z_i = 1} w_i v_{ki} - \sum_{i:Z_i = 0} w_i v_{ki})^2 $ and $\mathcal{I}(\bm{U}; \bm{S})$ denotes the Moran's I statistic of $\bm{U}$ as defined above. 
    \end{proposition}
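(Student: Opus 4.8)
The plan is to start from the weighting representation of $\hat\tau_{GLS}$ given in Proposition~\ref{thm:weighting}, plug in the linear outcome model, and then bound the resulting imbalance term $\sum_{i:Z_i=1}w_i U_i - \sum_{i:Z_i=0}w_i U_i$ using the objective value $c_0$ of the quadratic program from Proposition~\ref{thm:minimal-weighting}. For the equality, substituting $Y_i = \bm\beta^\top\bm X_i + \tau Z_i + \gamma U_i + \epsilon_i$ into $\hat\tau_{GLS} = \sum_i (2Z_i-1)w_i Y_i$ and taking the conditional expectation, the $\bm\beta^\top\bm X_i$ contribution vanishes because the weights exactly balance $\bm X$ (including the intercept, so the $\tau Z_i$ term reconstructs $\tau$ via $\sum_{Z_i=1}w_i = \sum_{Z_i=0}w_i = 1$), and the $\epsilon_i$ term has mean zero; what remains is exactly $\tau + \gamma\big(\sum_{Z_i=1}w_i U_i - \sum_{Z_i=0}w_i U_i\big)$. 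This gives the first line.

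For the inequality, write $B(\bm U) := \sum_{i:Z_i=1}w_i v_{ki}... $ — more precisely, let $\delta_k := \sum_{i:Z_i=1}w_i v_{ki} - \sum_{i:Z_i=0}w_i v_{ki}$ be the mean imbalance of the $k$th eigenvector, and decompose the centered confounder in the eigenbasis: writing $\tilde{\bm U} = \bm U - \bar U\bm 1$, note $\bm 1$ lies in the span of $\bm X$ so its imbalance is zero, hence the $\bm U$-imbalance equals $\sum_k (\bm v_k^\top\tilde{\bm U})\,\delta_k$. Applying Cauchy–Schwarz against the weights $\rho^2\lambda_k$ appearing in $c_0$ gives
\[
\Big(\sum_{k}(\bm v_k^\top\tilde{\bm U})\delta_k\Big)^2 \le \Big(\sum_k \frac{(\bm v_k^\top\tilde{\bm U})^2}{\rho^2\lambda_k}\Big)\Big(\sum_k \rho^2\lambda_k\delta_k^2\Big) \le \Big(\sum_k \frac{(\bm v_k^\top\tilde{\bm U})^2}{\rho^2\lambda_k}\Big)\,c_0,
\]
since $\sum_k\rho^2\lambda_k\delta_k^2 \le c_0$. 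The remaining task is to upper-bound $\sum_k (\bm v_k^\top\tilde{\bm U})^2/(\rho^2\lambda_k)$ in terms of $\mathcal I(\bm U;\bm S)$ and $\sum_i(U_i-\bar U)^2$. Here one must be careful that this splitting wastes the $\sigma^2\sum w_i^2$ part of $c_0$ and, more importantly, uses only $\rho^2\lambda_k$ rather than the full $(\sigma^2 + \rho^2\lambda_k)$-type denominators that appear in the stated bound — so the actual argument likely has to bound the imbalance more directly. I expect the cleaner route is to Cauchy–Schwarz against the full objective, writing the $\bm U$-imbalance as $\sum_k a_k b_k$ with $a_k = (\bm v_k^\top\tilde{\bm U})/\sqrt{\sigma^2 + \rho^2\lambda_k}$ and $b_k$ chosen so that $\sum_k b_k^2$ is controlled by $c_0$ up to the $\sigma^2\sum w_i^2$ slack; then $\sum_k a_k^2 = \sum_k (\bm v_k^\top\tilde{\bm U})^2/(\sigma^2+\rho^2\lambda_k)$, and this is where the factor $1/(\sigma^2 + \rho^2\lambda_1\mathcal I(\bm U;\bm S))$ should emerge, using $\mathcal I(\bm U;\bm S) = \sum_k \lambda_k(\bm v_k^\top\tilde{\bm U})^2 / (\lambda_1\sum_i(U_i-\bar U)^2)$ together with the elementary inequality $\sum_k \frac{c_k}{\sigma^2 + \rho^2\lambda_k} \le \frac{\sum_k c_k}{\sigma^2 + \rho^2\lambda_1 (\sum_k\lambda_k c_k / (\lambda_1\sum_k c_k))}$ for nonnegative weights $c_k = (\bm v_k^\top\tilde{\bm U})^2$ — a Jensen/harmonic-mean-type bound on $\mathbb E[1/(\sigma^2+\rho^2\Lambda)]$ where $\Lambda$ has the law $c_k \mapsto \lambda_k$.

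The main obstacle, then, is the bookkeeping that produces the exact constant $\dfrac{(2\sigma^2+\rho^2\lambda_1+\rho^2\lambda_n)^2}{4(\sigma^2+\rho^2\lambda_1)(\sigma^2+\rho^2\lambda_n)}$. This has the shape of the Kantorovich / Polya–Szegő inequality constant: for a positive quantity $t$ ranging in $[\sigma^2+\rho^2\lambda_n,\ \sigma^2+\rho^2\lambda_1]$, one has $\frac{1}{t}\le \frac{(a+b)^2}{4ab}\cdot\frac{1}{\,\text{(something)}\,}$ with $a,b$ the endpoints. So the cleanest strategy is: (i) establish the equality line; (ii) bound $\big(\text{$\bm U$-imbalance}\big)^2 \le c_0 \cdot \sum_k (\bm v_k^\top\tilde{\bm U})^2/(\sigma^2 + \rho^2\lambda_k)$ by a single Cauchy–Schwarz against the objective $c_0 = \sum_k (\sigma^2 w$-part$) + \rho^2\lambda_k\delta_k^2$, absorbing the $\sigma^2\sum w_i^2$ term appropriately; (iii) apply the Kantorovich inequality to replace each $1/(\sigma^2+\rho^2\lambda_k)$ by the worst-case constant times $1/(\sigma^2 + \rho^2\lambda_1\mathcal I)$, using the definition of Moran's I to identify the convex combination $\sum_k \lambda_k c_k/\sum_k c_k$ with $\lambda_1\mathcal I(\bm U;\bm S)$; and (iv) collect $\sum_k c_k = \sum_i(U_i-\bar U)^2$ and multiply through by $|\gamma|$. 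I would double-check step (ii) carefully, since getting the $\sigma^2$-in-the-denominator structure right (as opposed to a looser bound with $\rho^2\lambda_k$ alone) is precisely what makes the stated inequality tight, and it is the step most likely to need an extra idea beyond a one-line Cauchy–Schwarz.
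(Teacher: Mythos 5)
Your final strategy (i)--(iv) is exactly the paper's proof: the equality follows from exact balance of $\bm{X}$ and the sum-to-one constraints; the imbalance is bounded by a single Cauchy--Schwarz in the $\bm{\Sigma}$-inner product, yielding $\sqrt{c_0\,(\bm{U}-\bar{U}\bm{1})^\top\bm{\Sigma}^{-1}(\bm{U}-\bar{U}\bm{1})}$ (note your $\sum_k b_k^2$ with $b_k=\sqrt{\sigma^2+\rho^2\lambda_k}\,\delta_k$ equals $c_0$ exactly by Parseval, so no slack is wasted); and the Kantorovich inequality then converts $(\bm{U}-\bar{U}\bm{1})^\top\bm{\Sigma}^{-1}(\bm{U}-\bar{U}\bm{1})$ into the stated constant times $\sum_i(U_i-\bar{U})^2/(\sigma^2+\rho^2\lambda_1\mathcal{I}(\bm{U};\bm{S}))$. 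One caution: the displayed ``elementary inequality'' in your middle paragraph is false as written --- Jensen applied to the convex map $t\mapsto 1/(\sigma^2+\rho^2 t)$ gives the reverse direction --- and it only holds after multiplying by the Kantorovich constant $(2\sigma^2+\rho^2\lambda_1+\rho^2\lambda_n)^2/\bigl(4(\sigma^2+\rho^2\lambda_1)(\sigma^2+\rho^2\lambda_n)\bigr)$, which your step (iii) correctly supplies.
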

Proposition~\ref{thm:imbalance_spatialauto} shows that the bias due to unmeasured confounding from $\bm{U}$ tends to decrease as its Moran's I statistic increases.
In other words, spatial regression reduces confounding bias arising from spatially autocorrelated unmeasured confounders.
Previous works have arrived at similar results, although through analytical or simulation-based analyses that are less generalizable or interpretable.
For example, \cite{paciorek2010a} assume that $\bm{U}$ and $\bm{Z}$ jointly follow a Gaussian process and use the range parameter of the Matérn correlation function to describe spatial autocorrelation, showing that bias decreases as the range of $\bm{U}$ increases. \cite{dupont2023demystifying} quantify spatial smoothness through the eigenvalues of $\bm{S}$ and find that confounding bias is small when $\bm{U}$ aligns with the eigenvectors of $\bm{S}$ corresponding to high eigenvalues. \cite{narcisi2024effect} define smoothness in terms of $\bm{U}^T \sqrt{\bm{\Sigma}^{-1}} (I-\bm{1} \bm{1}^T / n) \sqrt{\bm{\Sigma}^{-1}}  \bm{U}$, and show that bias decreases as this measure increases. 
By contrast, Proposition~\ref{thm:imbalance_spatialauto} avoids assumptions regarding the distribution or structure of $\bm{U}$ and $\bm{S}$, and makes use of Moran's I, which is a widely used measure of spatial autocorrelation.
  
\subsection{Visual diagnostics}
\vspace{-0.1in}

We now present three visual diagnostics that help clarify the spatial confounding adjustments made by spatial regression models, as well as the composition of the causal contrast they build across the spatial dataset.
Our visual diagnostics build on the weighting diagnostics introduced by \cite{chattopadhyay2023implied}. We apply them to the Superfund sites and birth outcomes dataset, examining each of the three spatial regression models described in Section~\ref{sec:threeexamples}.

The first two diagnostics depict the class of unmeasured confounders that spatial regression adjusts for. 
Figure~\ref{fig:ASMD_moran} in the Supplementary Material illustrates the first visual diagnostic, which plots the maximum absolute conditional bias arising from $\bm{U}$ against its Moran's I statistic and its coefficient in the outcome model.
For this diagnostic, we employ nonlinear programming because the analytical bound in Proposition \ref{thm:imbalance_spatialauto} can sometimes be quite conservative. 
For all three spatial regression models, this diagnostic reveals that the maximum absolute conditional bias decreases as the unmeasured confounder's Moran's I increases, and increases with the magnitude of $\gamma$, the coefficient of $\bm{U}$ in the outcome model. 
This pattern empirically corroborates Proposition~\ref{thm:imbalance_spatialauto}.
The second visual diagnostic, illustrated in Figure~\ref{fig:consolidated} of the Supplementary Material, displays three instances of $\bm{U}$ together with their respective Moran’s I statistics and mean imbalances. 
As expected, confounders characterized by higher Moran’s I exhibit reduced mean imbalance.
However, the three spatial regression models target unmeasured confounders with different forms of spatial autocorrelation. 
For the random effects model, spatially autocorrelated unmeasured confounders are nearly constant within states. 
The conditional autoregressive model targets spatially autocorrelated unmeasured confounders that are approximately constant over adjacencies; here, sites $i$ and $j$ are defined as adjacent if site $j$ is one of site $i$'s five nearest neighbors or vice versa.  
For the Gaussian process model, spatially autocorrelated unmeasured confounders appear to be smooth functions of spatial coordinates. 

\begin{figure}
    \centering
    \vspace{-1.75cm}
    \includegraphics[width=0.97\linewidth]{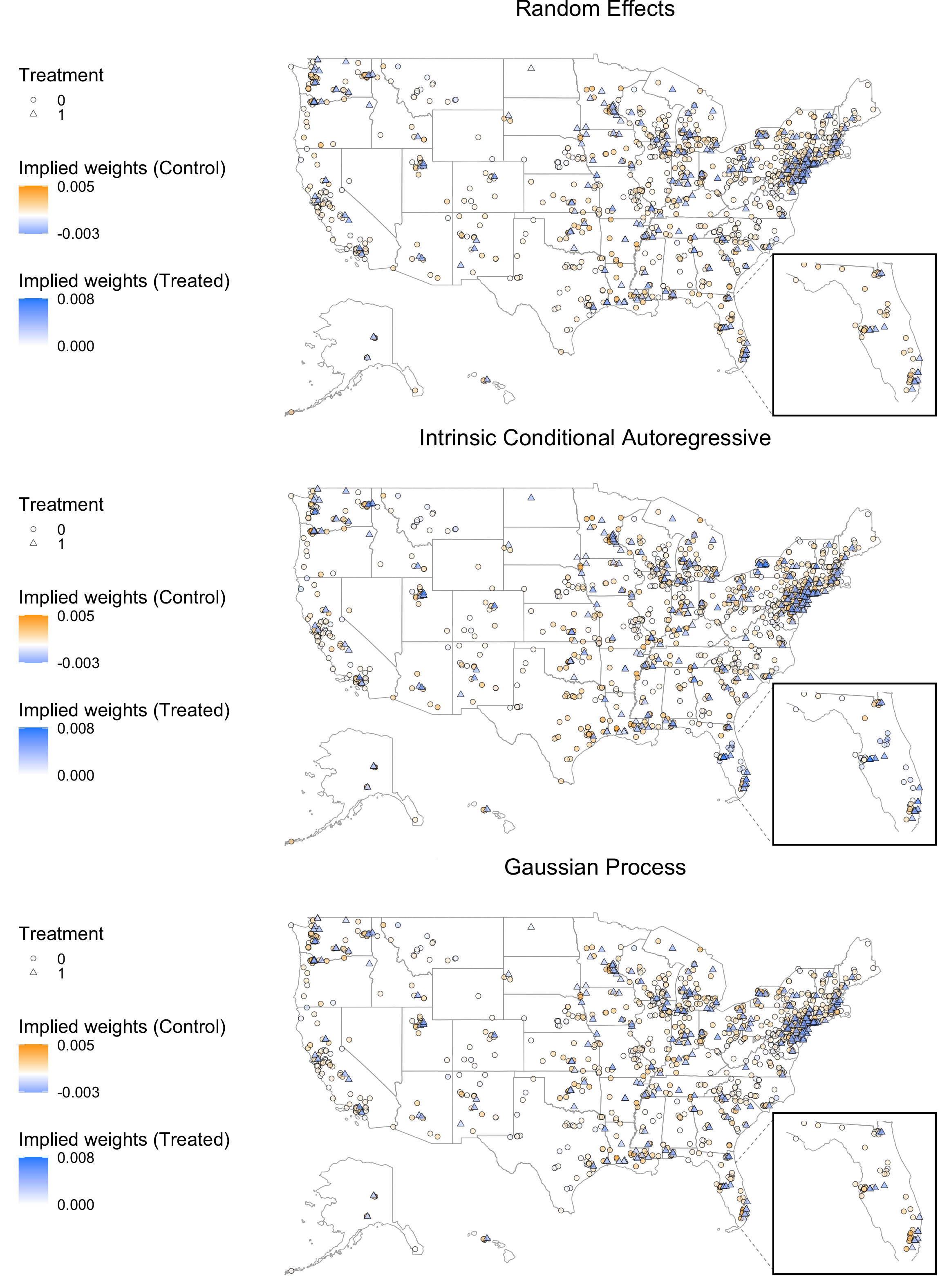}
    \caption{Implied weights $w_1, \ldots, w_n$ in three common spatial regression models. Our analysis enables a direct comparison of the spatial adjustments each model applies. 
    In Florida, the random effects model produces weights that are nearly uniform across the control sites in the state. The Gaussian process model assigns higher weights to control sites that are geographically closer to treated sites, leading to greater variation in control weights. The conditional autoregressive model places the largest weights on control sites adjacent to treated sites, and some control sites receive negative weights.
    }
    \label{fig:impliedweights}
\end{figure}

The third diagnostic, shown in Figure~\ref{fig:impliedweights}, displays the implied weights $w_1, \ldots, w_n$ from the three spatial regression models. 
For the random effects model, the weights are largest in magnitude for Superfund sites located in states with a high proportion of control sites, or for control sites in states with a high proportion of treated sites. 
For instance, sites in Montana and Wyoming, which contain only control sites, are assigned weights that are nearly zero. 
For the conditional autoregressive model, the largest weights correspond to treated sites bordering many control sites, or vice versa. 
For the Gaussian process model, the largest weights are observed for Superfund sites that are geographically proximate to sites of the opposite treatment status. Across all three models, some control sites receive negative weights, producing an estimator that is not sample bounded \citep{robins2007comment}.

In summary, Figure~\ref{fig:impliedweights} shows that spatial regression produces a weighting estimator in which the treated and control components are weighted to be spatially proximate, according to a model-specific definition of proximity. 
Further discussion of this spatially localized weighting is presented in Supplementary Section~\ref{sec:spatially-localized-weighting}. 
This approach is akin to distance-adjusted propensity score matching, which pairs treated and control units with similar propensity scores only if they are geographically close \citep{papadogeorgou2019a} and geographic regression discontinuity designs, which construct treated-control contrasts using observations situated on either side of a geographic boundary separating treatment groups \citep{keele2015a}. 
However, unlike these methods, spatial regression imposes a very strict requirement of linearity, and if there are deviations from this specification, its estimates can be subject to biases. 
See Supplementary Section \ref{sec:violations_linearity} for examples.
In the next section, we present an extension of spatial regression that is capable of addressing nonlinearities and heterogeneity of treatment effects in the outcome model.

\vspace{-0.1in}
\section{Extending spatial regression for robustness}
\label{sec:extension}
\vspace{-0.1in}

In this section, we extend spatial regression to estimate the average treatment effect on the treated, introducing a weighting estimator that adjusts for multiple forms of unmeasured spatial confounding, relaxes the assumptions of linearity and effect homogeneity, and can be constructed to be sample bounded.

Let $X^{\text{aug}} = (\bm{X}, \bm{V}) \in \mathbb R^{n \times (18 + J)}$ be the set of measured covariates, augmented with $J$ pre-selected eigenvectors $\bm{V} = (\bm{v}_1, \ldots, \bm{v}_J) \in \mathbb R^{n \times J}$ of any positive semidefinite matrix $\bm{S}$. Building on \cite{wang2020minimal} and \cite{zubizarreta2015stable}, we solve the following quadratic programming problem in the control group: 
\begin{align*}
     & \min_{\bm{w} \in \mathbb R^{n-n_t}} \sum_{i:Z_i = 0} w_i^2& \\
     & \text{subject to } \big \vert \sum_{i:Z_i = 0} w_i B_k (\bm{X}^{\text{aug}}_i) - \frac{1}{n_t}\sum_{i:Z_i = 1} B_k (\bm{X}^{\text{aug}}_i) \big \vert \leq \delta_k, k \in 1, \ldots, K, 
     \sum_{i:Z_i = 0} w_i = 1, w_i \geq 0 \text{ }\forall i
 \end{align*}
where the functions $B_k(\bm{X}^{\text{aug}})$, for $k = 1, \ldots, K$, represent a set of pre-specified basis functions, $\delta_k$ is the associated imbalance tolerance for the $k$th function, and $n_t = \sum_{i = 1}^n Z_i$.

The solution to this optimization problem yields a set of normalized, minimum-variance control weights that approximately balance the basis functions $B(\cdot) = \{B_1(\cdot), \ldots, B_K(\cdot)\}^T$ relative to the treated group. 
To ensure that the resulting estimator is sample bounded \citep{robins2007comment}, we also impose non-negativity constraints on the weights. 
The spatial weighting estimator is then defined as
$\hat{\tau}_{SW} = (n_t)^{-1}\sum_{i:Z_i = 1} Y_i - \sum_{i:Z_i = 0} w_i Y_i,$
where $\{w_i\}_{Z_i = 0}$ are the optimized weights obtained from the mathematical program described above.

Our weighting approach requires selecting $J$ augmenting eigenvectors $\bm{v}_1, \ldots, \bm{v}_J$. 
To reduce bias from the main classes of unmeasured spatial confounding implicitly targeted by the random effects, conditional autoregressive, and Gaussian process models, we include a large number of eigenvectors corresponding to the highest eigenvalues from each model.
Unlike methods that adjust for only a single confounder class, $\hat{\tau}_{SW}$ enables simultaneous adjustment for multiple types of unmeasured confounders, including those constant at the state level, those exhibiting smooth spatial variation, and those with minimal variation across adjacencies. 
Proposition~\ref{thm:robust-weighting-bias} characterizes the finite-sample bias and variance of the spatial weighting estimator as a function of the Moran’s I statistic of the unmeasured spatial confounder. 

In Supplementary Section \ref{sec:simulation}, we assess the performance of the spatial weighting estimator in simulation. First, we investigate whether it can mitigate confounding bias from three distinct classes of unmeasured confounders with different patterns of spatial autocorrelation. Second, we evaluate its performance while varying the complexity of the outcome model. The simulation results demonstrate that the spatial weighting estimator can flexibly accommodate complex nonlinearities and heterogeneity of treatment effects, while simultaneously adjusting for unmeasured confounders that are constant within states, smooth across adjacencies, or vary continuously over space.

\vspace{-0.1in}
\section{Effects of Superfund remediation on birth outcomes}
\label{sec:application} 
\vspace{-0.1in}

We now apply our spatial weighting approach to estimate the effect of Superfund site remediation during 1991--2015 on the rates of congenital anomalies and small vulnerable newborns during 2016--2018 among Medicaid-covered births in the United States. During this period, there were approximately $11.6$ million total U.S. births, of which 4,361,860 (37\%) were Medicaid-covered births identified in the inpatient claims data available to us. Of these, 266,604 occurred within Superfund site areas or their 2-kilometer buffers. For stable estimation, we excluded sites with fewer than $10$ Medicaid-covered live births within the site area and its 2-kilometer buffer, resulting in a final dataset of $n = 1079$ Superfund sites (see Supplementary Section \ref{suppsec:dataapp} for additional details). To our knowledge, this study represents the first national-scale linkage of Medicaid claims with EPA remediation data to evaluate health impacts of Superfund cleanup.
 
We report results for five methods: OLS, three spatial regression models (RE, CAR, GP), and the proposed spatial weighting (SW) approach, each with confidence intervals obtained via bootstrap with 500 replications. In the final specification of the spatial weighting estimator, we add the $10$ highest-eigenvalue eigenvectors from each of $\bm{S}^\text{RE}$, $\bm{S}^\text{CAR}$ and $\bm{S}^\text{GP}$ to the design matrix. We apply balancing thresholds $\delta_k$ of $0.001$ to the $17$ measured covariates and $0.01$ to the $30$ latent covariates after standardization.

Remediation appears to have a modest beneficial effect on the percentage of small vulnerable newborns, with ATT estimates ranging from $-0.3$\% to $-0.8$\% (relative to an overall rate of 14.2\% among Medicaid-covered births), but no discernible effect on the percentage of congenital anomalies. The spatial weighting approach yields ATT estimates of $-0.60$\% (95\% CI: $-1.15$\%, $-0.10$\%) for the rate of small vulnerable newborns and $0.07$\% (95\% CI: $-0.72$\%, $0.91$\%) for the rate of congenital anomalies, both with smaller confidence intervals than the other three spatial models. Estimates for all methods are reported in Table \ref{tab:dataapp-results}.

\begin{figure}[!ht]
    \centering
    \begin{subfigure}[t]{0.49\linewidth}
                \caption{}
\includegraphics[width=\linewidth]{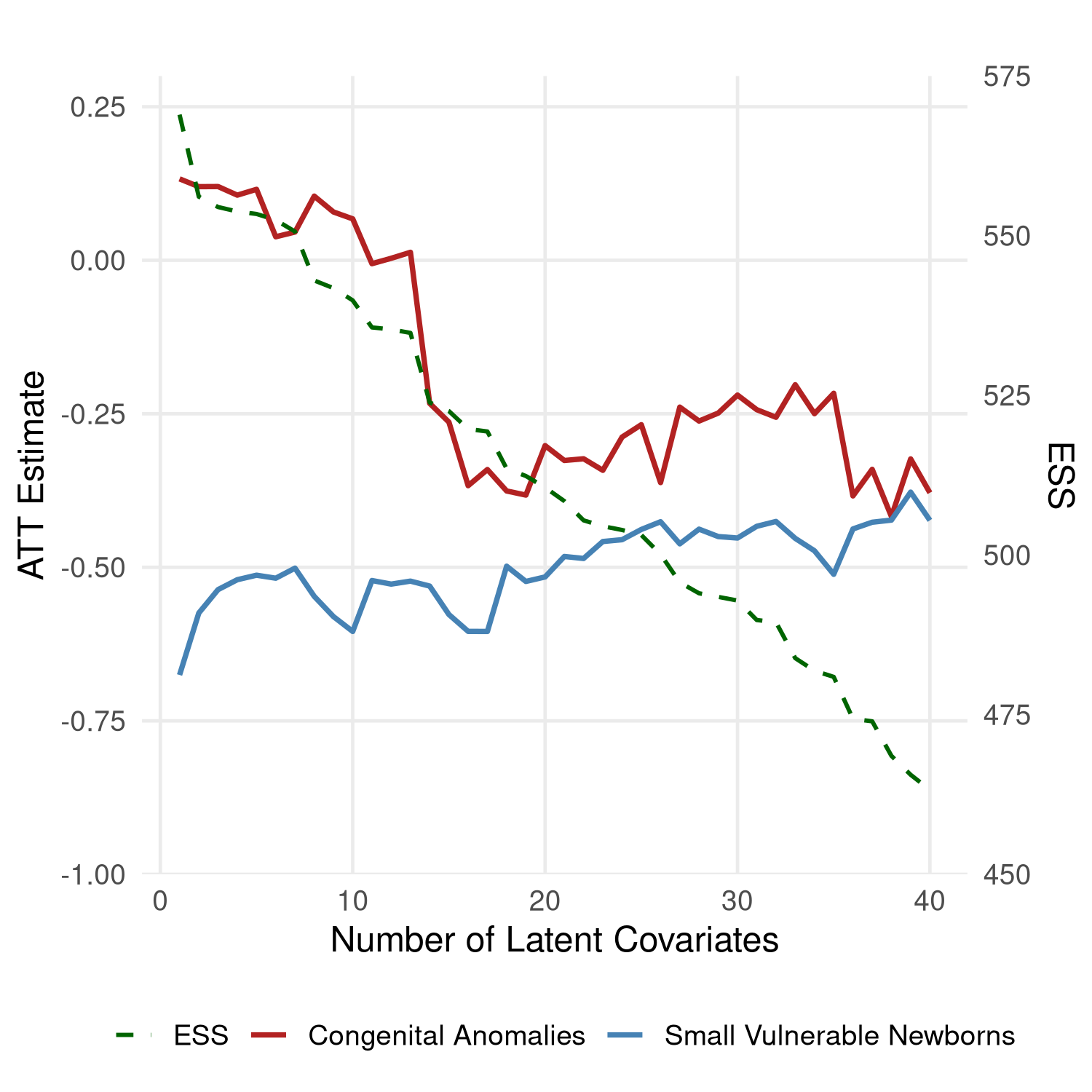}
        \label{fig:spatial_weighting_estimates}
    \end{subfigure}\hfill
    \begin{subfigure}[t]{0.49\linewidth}
                \caption{}
\includegraphics[width=\linewidth]{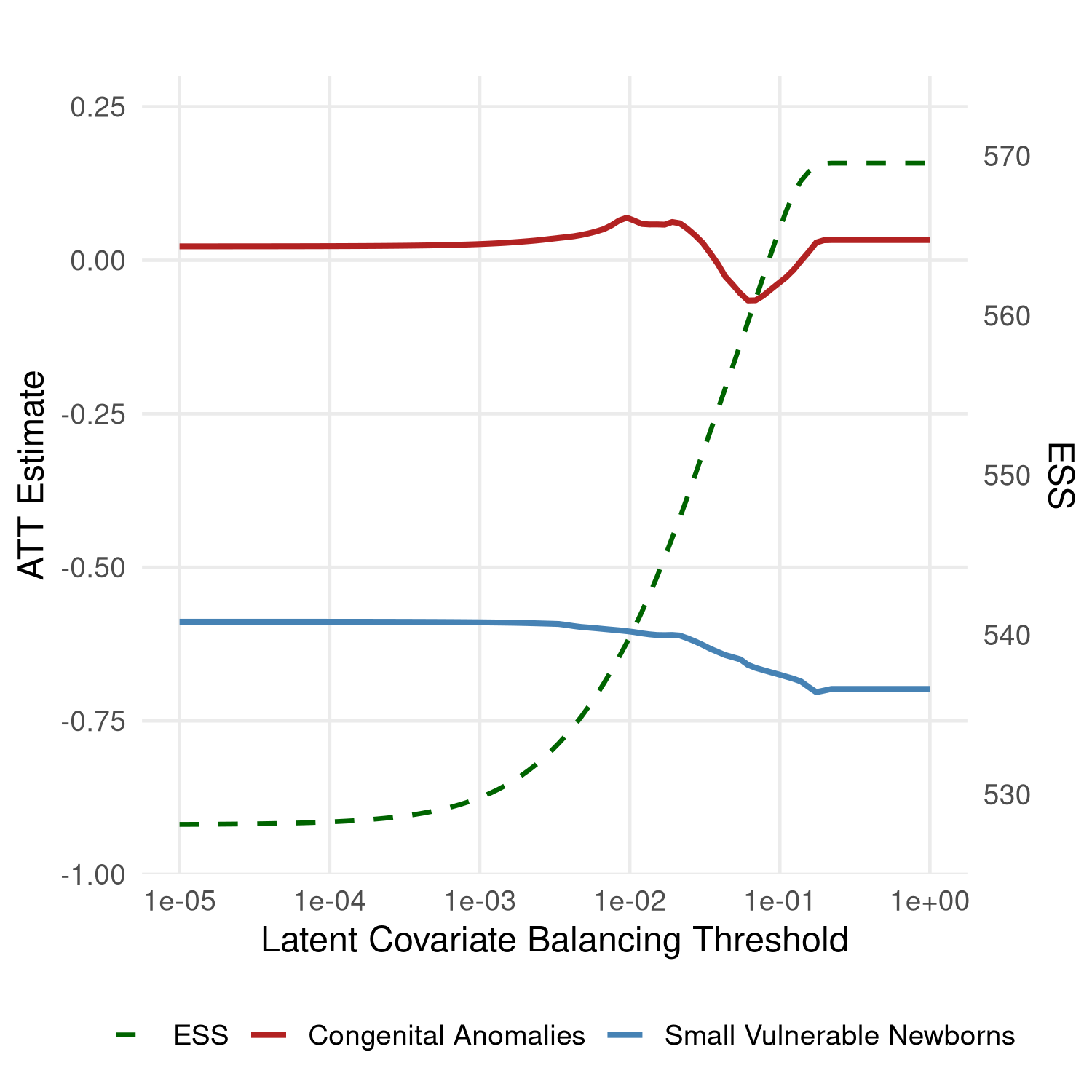}
        \label{fig:balancing_threshold}
    \end{subfigure}
\caption{Estimate of the ATT $\E(Y(1)-Y(0)\mid Z = 1)$ and effective sample size (ESS) 
$(\sum_{i = 1}^n |w_i|)^2/\sum_{i = 1}^n w_i^2$ obtained from the spatial weighting approach as a function of (a) the number of highest-eigenvalue eigenvectors from $\bm{S}^{\text{RE}}$, $\bm{S}^{\text{CAR}}$, and $\bm{S}^{\text{GP}}$ included as latent covariates, fixing the balancing threshold to 0.01, and (b) the balancing threshold applied to the latent covariates after standardization, fixing the number of latent covariates included from each model to $10$ ($30$ in total).}
        \label{fig:estimates-vs-eigenvectors}
    \end{figure}

Adjusting for unmeasured spatial confounding,  whether through spatial regression adjustment or our proposed weighting approach, attenuates the estimated effects of remediation on the percentage of small vulnerable newborns. This attenuation can be observed in (i) Table \ref{tab:dataapp-results}, where spatial weighting and spatial regression yield estimates closer to the null than OLS; (ii) Figure \ref{fig:spatial_weighting_estimates}, where including a larger number of latent covariates in the spatial weighting approach attenuates the ATT (blue) while reducing the effective sample size (green); and (iii) Figure \ref{fig:balancing_threshold}, where enforcing tighter balance on the 30 latent covariates similarly attenuates the ATT (blue) while reducing the effective sample size (green). 

\begin{table}
\centering
\begin{tabular}{lcc}
  \hline
 & \% Small Vulnerable & \% with Congenital Anomalies \\
  \hline
Ordinary least squares &  -0.74 (-1.24, -0.25) & 0.00 (-0.73, 0.80)\\
Random effects &  -0.52 (-1.37, 0.04) & -0.30 (-1.35, 1.09)\\
 Conditional autoregressive & -0.42 (-1.34, 0.12)  & 0.29 (-1.01, 1.50)\\
Gaussian process &  -0.36 (-1.53, 0.23) & -0.058 (-1.31, 1.12)\\
Spatial weighting & -0.60 (-1.15, -0.10) & 0.07 (-0.72, 0.91)\\
   \hline
\end{tabular}

\caption{Estimates of the ATT and 95\% confidence intervals from five methods, assessing the impact of Superfund site remediation (1991--2015) on two birth outcomes (2016--2018) among Medicaid-covered births within 2 kilometers of Superfund site areas. 
The two birth outcomes are: i) the percentage of newborns that are ``small vulnerable'', meaning preterm, low birth weight, or small for gestational age, and ii) percentage of newborns with congenital anomalies. }
\label{tab:dataapp-results}

\end{table}

Figure \ref{fig:diag-app} demonstrates how spatial weighting approximates a hypothetical experiment across space, providing visual diagnostics that link geography, confounding adjustment, and effect estimation. Panels (a) and (b) present covariate means within each treatment group before and after spatial weighting for the 17 measured and 30 latent covariates, respectively. The control group is reweighted so that its covariate means closely align with those of the treated group.  
The balance tables illustrate how the spatial weighting procedure achieves the specified balance constraints, giving the investigator explicit control over which covariates are balanced and to what degree. 

Panel (c) displays a histogram of the weights. The effective sample size for the spatial weighting approach is $539.9$, compared to $597.3$, $577.6$, $563.8$, and $569.4$ for OLS, RE, CAR, and GP, respectively; the nominal sample size is $n = 1079$. Although the effective sample size for the spatial weighting approach is lower than for the three spatial regression models, the  confidence interval is in fact narrower, suggesting that approximately balancing the observed covariates and simultaneously adjusting for three types of latent covariates 
may reduce variance by explaining residual spatial variation in outcome that is independent of observed covariates and treatment 
(see Supplementary Section \ref{sec:finite-sample-prop} and Figure \ref{fig:outcomemaps}).

Panel (d) plots the 30 latent covariates. This diagnostic reveals the latent covariates that spatial regression models balance implicitly and that the spatial weighting approach balances explicitly. By approximately balancing this set of latent covariates, the spatial weighting approach simultaneously adjusts for three classes of unmeasured spatial confounders: those that are constant within states, those that exhibit neighborhood-level similarity, and those that vary smoothly over space.

Panel (e) maps the weights in geographic space. By balancing on the 30 latent covariates, the spatial weighting approach constructs a geographically local contrast between treated and control sites. A substantial share of this contrast (17.7\% of the total weight) arises from the New York--New Jersey region, reflecting the high concentration of both treated and control Superfund sites there. 
There are 114 control sites that receive zero weight, with many located far from any treated site. Although spatial weighting reduces the effective sample size, it ensures that inference is restricted to treated and control sites that are comparable: control sites far from treated sites may differ with respect to unmeasured spatial confounders and should therefore be excluded from the comparison.

\begin{figure}
\vspace{-4em}
    \caption{Diagnostic plots summarizing the spatial weighting procedure.}
    \label{fig:diag-app}
    \centering
        \begin{minipage}[t]{0.4\textwidth}

  \begin{subtable}[t]{\textwidth}
      \vspace{0pt}
        \centering
        \scriptsize
        \renewcommand*{\arraystretch}{1.1}
        \begin{tabular}{p{1.4in}|ll|l}
  & \multicolumn{2}{|c|}{Control} & Treated \\
    \hline
Covariate & \makecell[l]{Before \\weighting} & \makecell[l]{After\\ weighting} & \\ 
  \hline
\makecell[l]{Population density\\ (ppl/mi$^2$)} & 1730 & 1610 & 1610 \\ 
  \makecell[l]{Proportion of Hispanic\\ residents} & 0.0681 & 0.0548 & 0.0549 \\ 
  \makecell[l]{Proportion of Black \\residents} & 0.104 & 0.118 & 0.118 \\ 
  \makecell[l]{Proportion of Asian\\ residents} & 0.0238 & 0.0189 & 0.0189 \\ 
  \makecell[l]{Proportion of Indigenous\\ residents} & 0.00814 & 0.00821 & 0.00818 \\ 
\makecell[l]{Proportion of renter\\ occupied housing} & 0.325 & 0.296 & 0.295\\ 
  \makecell[l]{Median household\\ income (\$)} & 32,500 & 31,700 & 31,700 \\ 
  Median house value (\$) & 94,400 & 87,100 & 87,000 \\ 
  \makecell[l]{Proportion of residents\\ in poverty} & 0.115 & 0.124 & 0.124 \\ 
  \makecell[l]{Proportion of residents\\ that graduated high school} & 0.742 & 0.737 & 0.737 \\
  \makecell[l]{Median year of housing \\construction} & 1960 & 1960 & 1960 \\ 
  Site score & 43.7 & 38.4 & 38.4 \\ 
  Metal present & 0.744 & 0.753 & 0.754 \\ 
  VOC present & 0.751 & 0.534 & 0.533 \\ 
  Contamination through soil & 0.845 & 0.816 & 0.815 \\ 
  \makecell[l]{Contamination through \\liquid medium} & 0.896 & 0.800 & 0.800 \\ 
  Contamination through gas & 0.145 & 0.0670 & 0.0667 \\
  \end{tabular}
\caption{Balance and representativeness: measured covariates.}

  \end{subtable}
    \end{minipage}
    \hfill
    \begin{minipage}[t]{0.45\textwidth}
        \begin{subtable}[t]{\textwidth}
      \vspace{0pt}
        \centering
        \scriptsize
        \renewcommand*{\arraystretch}{1.1}
        \begin{tabular}{p{0.5in}|ll|l}
  & \multicolumn{2}{|c|}{Control} & Treated \\
    \hline
Covariate & \makecell[l]{Before \\weighting} & \makecell[l]{After\\ weighting} & \\ 
  \hline
  $\bm{v}_1^{\text{RE}}$ &-0.00887 & -0.00786 & -0.00815 \\ 
  $\bm{v}_2^{\text{RE}}$ &-0.00775 & -0.01100 & -0.01130 \\  
  $\bm{v}_3^{\text{RE}}$ & -0.00858 & -0.00609 & -0.00639 \\ 
  $\bm{v}_4^{\text{RE}}$ & -0.00909 & -0.00427 & -0.00406 \\  
  $\bm{v}_5^{\text{RE}}$ & -0.00745 & -0.00625 & -0.00596 \\ 
    $\bm{v}_6^{\text{RE}}$ & -0.00648 & -0.00972 & -0.01000 \\  
  $\bm{v}_7^{\text{RE}}$ & -0.00596 & -0.01010 & -0.00996 \\  
  $\bm{v}_8^{\text{RE}}$ & -0.00673 & -0.00343 & -0.00313 \\ 
  $\bm{v}_9^{\text{RE}}$ & -0.00547 & -0.00568 & -0.00598 \\  
  $\bm{v}_{10}^{\text{RE}}$ & -0.00622 & -0.00286 & -0.00256 \\ 
  $\bm{v}_1^{\text{CAR}}$ & -0.000497 & 0.00202 & 0.00225 \\ 
  $\bm{v}_2^{\text{CAR}}$ & 0.000674 & -0.00275 & -0.00306 \\ 
  $\bm{v}_3^{\text{CAR}}$ & 0.000991 & -0.00424 & -0.00449 \\ 
  $\bm{v}_4^{\text{CAR}}$ & -0.000320 & 0.00175 & 0.00145 \\
  $\bm{v}_5^{\text{CAR}}$ & -0.000353 & 0.00144 & 0.00160 \\  
    $\bm{v}_6^{\text{CAR}}$ & -0.000915 & 0.00436 & 0.00415 \\
  $\bm{v}_7^{\text{CAR}}$ & 0.000147 & -0.000364 & -0.000669 \\  
  $\bm{v}_8^{\text{CAR}}$ & 0.0000614 & -0.000583 & -0.000278 \\  
  $\bm{v}_9^{\text{CAR}}$ & -0.000451 & 0.00174 & 0.00205 \\ 
  $\bm{v}_{10}^{\text{CAR}}$ &  0.000986 & -0.004170 & -0.004470 \\  
 $\bm{v}_1^{\text{GP}}$ & -0.0200 & -0.0178 & -0.0175 \\ 
  $\bm{v}_2^{\text{GP}}$ & -0.0140 & -0.0117 & -0.0114 \\ 
  $\bm{v}_3^{\text{GP}}$ & 0.00635 & 0.00377 & 0.00347 \\ 
  $\bm{v}_4^{\text{GP}}$ & -0.00901 & -0.0150 & -0.0148 \\ 
  $\bm{v}_5^{\text{GP}}$ & -0.0115 & -0.00942 & -0.00914 \\
  $\bm{v}_6^{\text{GP}}$ & 0.00187 & 0.00787 & 0.00818 \\
  $\bm{v}_7^{\text{GP}}$ & -0.00104 & 0.000352 & 0.000657 \\
  $\bm{v}_8^{\text{GP}}$ &  0.000827 & 0.000184 & -0.000120 \\
  $\bm{v}_9^{\text{GP}}$ & 0.00366 & 0.00838 & 0.00869 \\  
  $\bm{v}_{10}^{\text{GP}}$ & -0.00301 & -0.00223 & -0.00254 \\ 
   \hline
\end{tabular}
        \caption{Balance and representativeness: latent covariates.}
    \end{subtable}
    \end{minipage}

\begin{subfigure}[t]{0.25\textwidth}
    \vspace{0pt}
    \centering
    \includegraphics[height=2.1in]{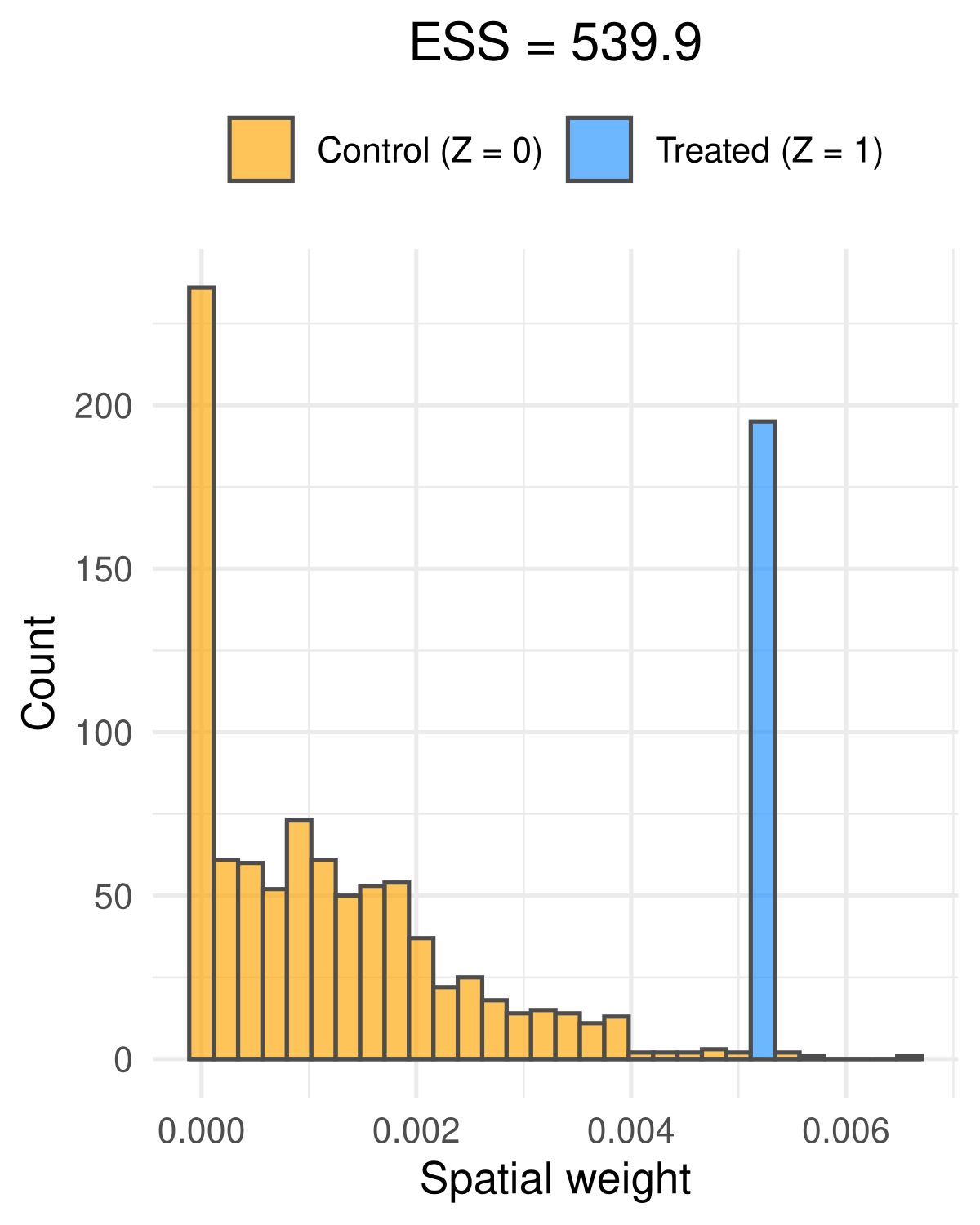}
    \caption{Weight dispersion.}
    \label{fig:weights_hist}
\end{subfigure}
\hfill
\begin{subfigure}[t]{0.7\textwidth}
    \vspace{0pt}
    \centering
    \includegraphics[height=1.9in]{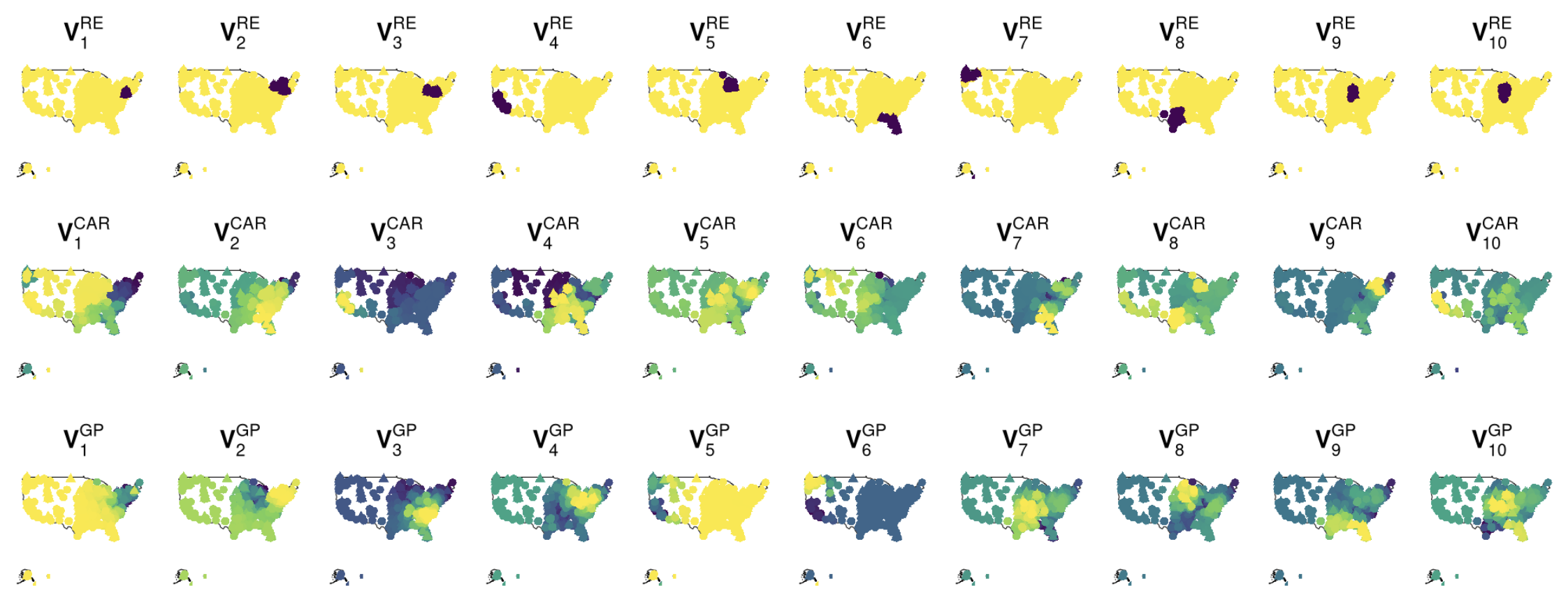}
    \caption{Thirty latent covariates—the ten highest-eigenvalue eigenvectors from $\bm{S}^{\text{RE}}$, $\bm{S}^{\text{CAR}}$, and $\bm{S}^{\text{GP}}$.}
    \label{fig:latent_covs}
\end{subfigure}

\vspace{0.4em}

\begin{subfigure}[t]{0.2\textwidth}
  \vspace{50pt}
  \captionsetup{justification=raggedright,singlelinecheck=false}
  \caption{Spatial weights $w_1,\ldots,w_n$.} 
  \label{fig:spatial_weights_map}
\end{subfigure}\hfill \begin{subfigure}[t]{0.75\textwidth}
  \vspace{0pt}
  \includegraphics[width=\linewidth]{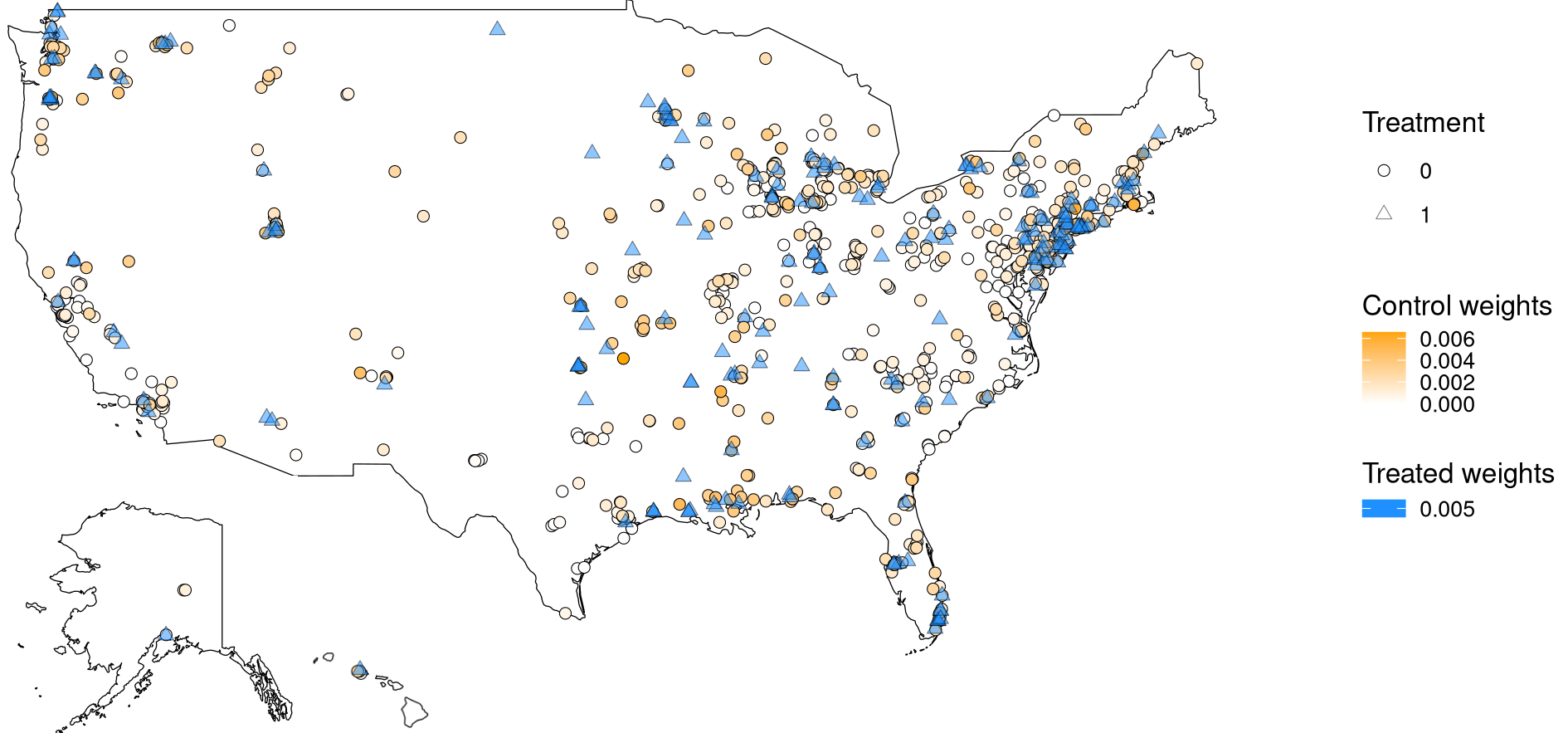}
\end{subfigure}

    \end{figure}

\vspace{-0.1in}
\section{Practical considerations}
\label{sec:guidance}
\vspace{-0.1in}
In general, we recommend that researchers examine the weights of the treated and control components of the spatial estimator to understand how information is used across space and to assess whether this pattern aligns with subject-matter knowledge or intuition. To support this, we conclude by describing how borrowing information across space involves navigating a set of interrelated tradeoffs.

First, borrowing strength across space reflects a fundamental bias–variance tradeoff. At one extreme, imposing tight balancing constraints on the latent covariates restricts  comparisons to the nearest treated and control units, so information is borrowed only locally. At the other extreme, relaxing these constraints effectively disregards the spatial adjustment performed by the latent covariates and borrows information globally. The former approach reduces bias from the targeted form of unmeasured spatial confounding at the cost of increased variance, whereas the latter approach lowers variance but is more susceptible to unmeasured spatial confounding bias. This result mirrors the analytical result in Supplementary Section \ref{sec:balance-dispersion} and aligns with the intuition behind donor pool selection in synthetic control analyses \citep{abadie2021using}. We therefore recommend that investigators (i) carefully examine the distribution of the weights in space and (ii) assess the sensitivity of point estimates and standard errors across a plausible range of balancing thresholds for the latent covariates.

Second, borrowing strength across space introduces a tension between adjusting for unmeasured spatial confounding bias and avoiding bias from spatial interference. When interference is present, using the most proximal control units as comparisons may be inappropriate, as these units could themselves be affected by the treatment received by nearby treated units. Previous studies have proposed selecting control units immediately outside a spatial buffer, so they are not treated to the treatment but still have similar values of the unmeasured spatial confounder \citep{butts2021difference}. In settings where interference is plausible, we recommend considering such buffered comparisons to reduce potential contamination. 

A third tradeoff introduced by borrowing strength across space relates to the positivity assumption and the target population. When borrowing strength only locally, some treated units may be too distant from any control units to permit meaningful comparisons. In such cases, we argue that these treated units should be excluded from the target population, as there is no justification for assuming the existence of control units with similar values of the unmeasured confounder (a violation of positivity). Thus, imposing strict balancing constraints on the latent covariates may require redefining the target population. In light of this, we recommend that researchers carefully consider the choice of target population before proceeding with any analysis.
All of these considerations are in the spirit of matching and weighting methods for observational studies \citep{zubizarreta2023handbook}. Embedding spatial-contrast considerations into routine practice can strengthen both the credibility and interpretability of spatial causal analyses. 

\vspace{-0.1in}
\section{Discussion}
\label{sec:discussion}
\vspace{-0.1in}

This paper unifies three classes of spatial regression models within a common weighting framework for causal inference. Our exact, finite-sample analysis reveals that spatial regression adjusts for unmeasured spatial confounding by approximately balancing a latent set of spatially autocorrelated covariates determined by the assumed spatial error structure. Furthermore, we show that the confounding bias induced by an unmeasured confounder $\bm{U}$ is small whenever $\bm{U}$ exhibits strong spatial autocorrelation as measured by Moran's I. We propose visual diagnostics to better understand this result, as well as a more general spatial weighting estimator that simultaneously adjusts for unmeasured confounders with different forms of spatial autocorrelation, which are typically considered separately (i.e., one at a time) in spatial regression. In an observational study of Superfund site remediation and Medicaid-covered birth outcomes, we find that remediation between 1991--2015 potentially decreased the rate of small vulnerable newborns in nearby areas during 2016--2018 (ATT estimate: $-0.60$\%).

Our main result demonstrates that incorporating a spatial error term is equivalent to augmenting the linear model with penalized eigenvectors of the covariance matrix. 
This formulation implicitly induces approximate mean balance on these latent covariates. 
In this sense, ``modeling the error'' and ``softly balancing eigenvectors of the covariance matrix'' are two sides of the same coin. 
While we develop these results in the context of spatial regression, they broadly apply to linear regression models with generalized error structures. 
In other words, under a linear model, the error covariance structure can be re-expressed as additional covariate structure with appropriate regularization, making these implicit adjustments explicit and more interpretable.

We highlight two avenues for future work. 
First, although we provide a finite-sample analysis that informs large-sample considerations, the development of a comprehensive asymptotic theory remains open. 
\citet{gilbert2025consistency} established consistency for Gaussian process regression, and \citet{datta2025consistent} derived consistency results for settings where the treatment, outcome, and unmeasured confounder are generated by Gaussian random fields, both under infill asymptotics. 
However, extending this theory to random effects or conditional autoregressive models presents additional challenges that warrant further investigation. 
As the sample size grows, new units may join existing clusters and form new adjacencies, so pre-existing entries of the covariance matrix change with sample size. 
Accounting for this evolving geometry is essential to any asymptotic treatment of these models.

Second, practitioners typically select the spatial hyperparameter $\rho^2$ in a Bayesian framework using outcome information. 
Here we treated $\rho^2$ and $\bm{\Sigma}$ as fixed and known a priori. {Our results elucidate how any choice of $\bm{\Sigma}$ adjusts for confounders in finite samples.} Supplementary Section \ref{sec:balance-dispersion} additionally establishes the role of $\rho^2$ in a balance-dispersion (hence bias-variance) tradeoff. Further investigation is needed to understand how outcome-driven hyperparameter selection, including approaches that leverage multiple outcomes \citep{dellavigna2025using}, can inform confounding adjustment in a frequentist framework. This includes the selection of $\rho^2$ and $\bm{\Sigma}$, or equivalently, eigenvectors and balance tolerances.

 \bibliography{main}

\begin{thebibliography}{50}
\newcommand{\enquote}[1]{``#1''}
\expandafter\ifx\csname natexlab\endcsname\relax\def\natexlab#1{#1}\fi

\bibitem[{Abadie(2021)}]{abadie2021using}
Abadie, A. (2021), \enquote{Using synthetic controls: Feasibility, data requirements, and methodological aspects,} \textit{Journal of Economic Literature}, 59, 391--425.

\bibitem[{Anselin(2002)}]{anselin2002under}
Anselin, L. (2002), \enquote{Under the hood issues in the specification and interpretation of spatial regression models,} \textit{Agricultural Economics}, 27, 247--267.

\bibitem[{Arbour et~al.(2021)Arbour, Ben-Michael, Feller, Franks, and Raphael}]{arbour2021using}
Arbour, D., Ben-Michael, E., Feller, A., Franks, A., and Raphael, S. (2021), \enquote{Using multitask gaussian processes to estimate the effect of a targeted effort to remove firearms,} \textit{arXiv preprint arXiv:2110.07006}.

\bibitem[{Ashorn et~al.(2023)Ashorn, Ashorn, Muthiani, Aboubaker, Askari, Bahl, Black, Dalmiya, Duggan, Hofmeyr, et~al.}]{ashorn2023small}
Ashorn, P., Ashorn, U., Muthiani, Y., Aboubaker, S., Askari, S., Bahl, R., Black, R.~E., Dalmiya, N., Duggan, C.~P., Hofmeyr, G.~J., et~al. (2023), \enquote{Small vulnerable newborns—big potential for impact,} \textit{The Lancet}, 401, 1692--1706.

\bibitem[{Ben-Michael et~al.(2021)Ben-Michael, Feller, Hirshberg, and Zubizarreta}]{ben2021balancing}
Ben-Michael, E., Feller, A., Hirshberg, D.~A., and Zubizarreta, J.~R. (2021), \enquote{The balancing act in causal inference,} \textit{arXiv preprint arXiv:2110.14831}.

\bibitem[{Brender et~al.(2011)Brender, Maantay, and Chakraborty}]{brender2011residential}
Brender, J.~D., Maantay, J.~A., and Chakraborty, J. (2011), \enquote{Residential proximity to environmental hazards and adverse health outcomes,} \textit{American Journal of Public Health}, 101, S37--S52.

\bibitem[{Bruns-Smith et~al.(2023)Bruns-Smith, Dukes, Feller, and Ogburn}]{bruns2023augmented}
Bruns-Smith, D., Dukes, O., Feller, A., and Ogburn, E.~L. (2023), \enquote{Augmented balancing weights as linear regression,} \textit{arXiv preprint arXiv:2304.14545}.

\bibitem[{Butts(2021)}]{butts2021difference}
Butts, K. (2021), \enquote{Difference-in-differences estimation with spatial spillovers,} \textit{arXiv preprint arXiv:2105.03737}.

\bibitem[{{Centers for Medicare \& Medicaid Services (CMS)}(2022)}]{cms_taf_ip}
{Centers for Medicare \& Medicaid Services (CMS)} (2022), \enquote{T-MSIS Analytic Files (TAF) Inpatient File,} \url{https://resdac.org/cms-data/files/taf-ip}.

\bibitem[{Chattopadhyay and Zubizarreta(2023)}]{chattopadhyay2023implied}
Chattopadhyay, A. and Zubizarreta, J.~R. (2023), \enquote{On the implied weights of linear regression for causal inference,} \textit{Biometrika}, 110, 615--629.

\bibitem[{Currie et~al.(2011)Currie, Greenstone, and Moretti}]{currie2011superfund}
Currie, J., Greenstone, M., and Moretti, E. (2011), \enquote{Superfund cleanups and infant health,} \textit{American Economic Review}, 101, 435--441.

\bibitem[{Datta and Stein(2025)}]{datta2025consistent}
Datta, A. and Stein, M.~L. (2025), \enquote{Consistent Infill Estimability of the Regression Slope Between Gaussian Random Fields Under Spatial Confounding,} \textit{arXiv preprint arXiv:2506.09267}.

\bibitem[{DellaVigna et~al.(2025)DellaVigna, Imbens, Kim, and Ritzwoller}]{dellavigna2025using}
DellaVigna, S., Imbens, G., Kim, W., and Ritzwoller, D.~M. (2025), \enquote{Using Multiple Outcomes to Adjust Standard Errors for Spatial Correlation,} Tech. rep., National Bureau of Economic Research.

\bibitem[{Dupont et~al.(2023)Dupont, Marques, and Kneib}]{dupont2023demystifying}
Dupont, E., Marques, I., and Kneib, T. (2023), \enquote{Demystifying Spatial Confounding,} \textit{arXiv preprint arXiv:2309.16861}.

\bibitem[{Earnest et~al.(2007)Earnest, Morgan, Mengersen, Ryan, Summerhayes, and Beard}]{earnest2007evaluating}
Earnest, A., Morgan, G., Mengersen, K., Ryan, L., Summerhayes, R., and Beard, J. (2007), \enquote{Evaluating the effect of neighbourhood weight matrices on smoothing properties of Conditional Autoregressive (CAR) models,} \textit{International Journal of Health Geographics}, 6, 1--12.

\bibitem[{{E.P.A.}(2022)}]{EPA2021Superfund}
{E.P.A.} (2022), \enquote{Population Surrounding 1,877 Superfund Sites,} Tech. rep., {US EPA, Office of Land and Emergency Management}, accessed: 2025-02-14.

\bibitem[{Gelman and Hill(2006)}]{gelman2006data}
Gelman, A. and Hill, J. (2006), \textit{Data analysis using regression and multilevel/hierarchical models}, Cambridge university press.

\bibitem[{Gilbert et~al.(2021)Gilbert, Datta, Casey, and Ogburn}]{gilbert2021causal}
Gilbert, B., Datta, A., Casey, J.~A., and Ogburn, E.~L. (2021), \enquote{A causal inference framework for spatial confounding,} \textit{arXiv preprint arXiv:2112.14946}.

\bibitem[{Gilbert et~al.(2025)Gilbert, Ogburn, and Datta}]{gilbert2025consistency}
Gilbert, B., Ogburn, E.~L., and Datta, A. (2025), \enquote{Consistency of common spatial estimators under spatial confounding,} \textit{Biometrika}, 112, asae070.

\bibitem[{Hamilton and Viscusi(1999)}]{hamilton1999costly}
Hamilton, J.~T. and Viscusi, W.~K. (1999), \enquote{How costly is “clean”? An analysis of the benefits and costs of Superfund site remediations,} \textit{Journal of Policy Analysis and Management}, 18, 2--27.

\bibitem[{Hanks et~al.(2015)Hanks, Schliep, Hooten, and Hoeting}]{hanks2015restricted}
Hanks, E.~M., Schliep, E.~M., Hooten, M.~B., and Hoeting, J.~A. (2015), \enquote{Restricted spatial regression in practice: geostatistical models, confounding, and robustness under model misspecification,} \textit{Environmetrics}, 26, 243--254.

\bibitem[{Henn et~al.(2016)Henn, Ettinger, Hopkins, Jim, Amarasiriwardena, Christiani, Coull, Bellinger, and Wright}]{henn2016prenatal}
Henn, B.~C., Ettinger, A.~S., Hopkins, M.~R., Jim, R., Amarasiriwardena, C., Christiani, D.~C., Coull, B.~A., Bellinger, D.~C., and Wright, R.~O. (2016), \enquote{Prenatal arsenic exposure and birth outcomes among a population residing near a mining-related superfund site,} \textit{Environmental Health Perspectives}, 124, 1308--1315.

\bibitem[{Hodges and Reich(2010)}]{hodges2010adding}
Hodges, J.~S. and Reich, B.~J. (2010), \enquote{Adding spatially-correlated errors can mess up the fixed effect you love,} \textit{The American Statistician}, 64, 325--334.

\bibitem[{Keele et~al.(2015)Keele, Titiunik, and Zubizarreta}]{keele2015a}
Keele, L., Titiunik, R., and Zubizarreta, J. (2015), \enquote{Enhancing a geographic regression discontinuity design through matching to estimate the effect of Ballot initiatives on voter turnout,} \textit{Journal of Royal Statistical Society A}, 178, 223–23.

\bibitem[{Keeler et~al.(2023)Keeler, Luben, Forestieri, Olshan, and Desrosiers}]{keeler2023residential}
Keeler, C., Luben, T.~J., Forestieri, N., Olshan, A.~F., and Desrosiers, T.~A. (2023), \enquote{Is residential proximity to polluted sites during pregnancy associated with preterm birth or low birth weight? Results from an integrated exposure database in North Carolina (2003--2015),} \textit{Journal of Exposure Science \& Environmental Epidemiology}, 33, 229--236.

\bibitem[{Khan and Berrett(2023)}]{khan2023re}
Khan, K. and Berrett, C. (2023), \enquote{Re-thinking Spatial Confounding in Spatial Linear Mixed Models,} \textit{arXiv preprint arXiv:2301.05743}.

\bibitem[{Kiel and Zabel(2001)}]{kiel2001estimating}
Kiel, K. and Zabel, J. (2001), \enquote{Estimating the economic benefits of cleaning up Superfund sites: The case of Woburn, Massachusetts,} \textit{The Journal of Real Estate Finance and Economics}, 22, 163--184.

\bibitem[{Kihal-Talantikite et~al.(2017)Kihal-Talantikite, Zmirou-Navier, Padilla, and Deguen}]{kihal2017systematic}
Kihal-Talantikite, W., Zmirou-Navier, D., Padilla, C., and Deguen, S. (2017), \enquote{Systematic literature review of reproductive outcome associated with residential proximity to polluted sites,} \textit{International Journal of Health Geographics}, 16, 1--39.

\bibitem[{Klemick et~al.(2020)Klemick, Mason, and Sullivan}]{klemick2020superfund}
Klemick, H., Mason, H., and Sullivan, K. (2020), \enquote{Superfund cleanups and children’s lead exposure,} \textit{Journal of Environmental Economics and Management}, 100, 102289.

\bibitem[{Langlois et~al.(2009)Langlois, Brender, Suarez, Zhan, Mistry, Scheuerle, and Moody}]{langlois2009maternal}
Langlois, P.~H., Brender, J.~D., Suarez, L., Zhan, F.~B., Mistry, J.~H., Scheuerle, A., and Moody, K. (2009), \enquote{Maternal residential proximity to waste sites and industrial facilities and conotruncal heart defects in offspring,} \textit{Paediatric and Perinatal Epidemiology}, 23, 321--331.

\bibitem[{Miller(2004)}]{miller2004tobler}
Miller, H.~J. (2004), \enquote{Tobler's first law and spatial analysis,} \textit{Annals of the Association of American Geographers}, 94, 284--289.

\bibitem[{Murray and Feller(2024)}]{murray2024unifying}
Murray, J.~S. and Feller, A. (2024), \enquote{A Unifying Weighting Perspective on Causal Machine Learning: Kernel Methods, Gaussian Processes, and Bayesian Tree Models,} YouTube video.

\bibitem[{Narcisi et~al.(2024)Narcisi, Greco, and Trivisano}]{narcisi2024effect}
Narcisi, M., Greco, F., and Trivisano, C. (2024), \enquote{On the effect of confounding in linear regression models: an approach based on the theory of quadratic forms,} \textit{Environmental and Ecological Statistics}, 31, 433--461.

\bibitem[{Nobre et~al.(2021)Nobre, Schmidt, and Pereira}]{nobre2021effects}
Nobre, W.~S., Schmidt, A.~M., and Pereira, J.~B. (2021), \enquote{On the effects of spatial confounding in hierarchical models,} \textit{International Statistical Review}, 89, 302--322.

\bibitem[{Ouidir et~al.(2020)Ouidir, Louis, Kanner, Grantz, Zhang, Sundaram, Rahman, Lee, Kannan, Tekola-Ayele, et~al.}]{ouidir2020association}
Ouidir, M., Louis, G. M.~B., Kanner, J., Grantz, K.~L., Zhang, C., Sundaram, R., Rahman, M.~L., Lee, S., Kannan, K., Tekola-Ayele, F., et~al. (2020), \enquote{Association of maternal exposure to persistent organic pollutants in early pregnancy with fetal growth,} \textit{JAMA Pediatrics}, 174, 149--161.

\bibitem[{Paciorek(2010)}]{paciorek2010a}
Paciorek, C. (2010), \enquote{The Importance of Scale for Spatial-Confounding Bias and Precision of Spatial Regression Estimators,} \textit{Statistical Science}, 25, 107 – 125.

\bibitem[{Page et~al.(2017)Page, Liu, He, and Sun}]{page2017estimation}
Page, G.~L., Liu, Y., He, Z., and Sun, D. (2017), \enquote{Estimation and prediction in the presence of spatial confounding for spatial linear models,} \textit{Scandinavian Journal of Statistics}, 44, 780--797.

\bibitem[{Papadogeorgou et~al.(2019)Papadogeorgou, Choirat, and Zigler}]{papadogeorgou2019a}
Papadogeorgou, G., Choirat, C., and Zigler, C. (2019), \enquote{Adjusting for unmeasured spatial confounding with distance adjusted propensity score matching,} \textit{Biostatistics}, 20, 256–272.

\bibitem[{Perera et~al.(2003)Perera, Rauh, Tsai, Kinney, Camann, Barr, Bernert, Garfinkel, Tu, Diaz, et~al.}]{perera2003effects}
Perera, F.~P., Rauh, V., Tsai, W.-Y., Kinney, P., Camann, D., Barr, D., Bernert, T., Garfinkel, R., Tu, Y.-H., Diaz, D., et~al. (2003), \enquote{Effects of transplacental exposure to environmental pollutants on birth outcomes in a multiethnic population.} \textit{Environmental Health Perspectives}, 111, 201--205.

\bibitem[{Petrie(2006)}]{petrie2006environmental}
Petrie, M. (2006), \enquote{Environmental justice in the south: An analysis of the determinants and consequences of community involvement in superfund,} \textit{Sociological Spectrum}, 26, 471--489.

\bibitem[{Porpora et~al.(2019)Porpora, Piacenti, Scaramuzzino, Masciullo, Rech, and Benedetti~Panici}]{porpora2019environmental}
Porpora, M.~G., Piacenti, I., Scaramuzzino, S., Masciullo, L., Rech, F., and Benedetti~Panici, P. (2019), \enquote{Environmental contaminants exposure and preterm birth: a systematic review,} \textit{Toxics}, 7, 11.

\bibitem[{Reich et~al.(2021)Reich, Yang, Guan, Giffin, Miller, and Rappold}]{reich2021a}
Reich, B., Yang, S., Guan, Y., Giffin, A., Miller, M., and Rappold, A. (2021), \enquote{A review of spatial causal inference methods for environmental and epidemiological applications,} \textit{International Statistical Review}, 89, 605–634.

\bibitem[{Robins et~al.(2007)Robins, Sued, Lei-Gomez, and Rotnitzky}]{robins2007comment}
Robins, J., Sued, M., Lei-Gomez, Q., and Rotnitzky, A. (2007), \enquote{Comment: Performance of double-robust estimators when" inverse probability" weights are highly variable,} \textit{Statistical Science}, 22, 544--559.

\bibitem[{Schnell and Papadogeorgou(2020)}]{schnell2020a}
Schnell, P. and Papadogeorgou, G. (2020), \enquote{Mitigating unobserved spatial confounding when estimating the effect of supermarket access on cardiovascular disease deaths,} \textit{The Annals of Applied Statistics}, 14, 2069–2095.

\bibitem[{Schroeder et~al.(2025)Schroeder, Van~Riper, Manson, Knowles, Kugler, Roberts, and Ruggles}]{Schroeder2025NHGIS}
Schroeder, J., Van~Riper, D., Manson, S., Knowles, K., Kugler, T., Roberts, F., and Ruggles, S. (2025), \enquote{IPUMS National Historical Geographic Information System: Version 20.0,} Accessed: 2025-10-31.

\bibitem[{Stillerman et~al.(2008)Stillerman, Mattison, Giudice, and Woodruff}]{stillerman2008environmental}
Stillerman, K.~P., Mattison, D.~R., Giudice, L.~C., and Woodruff, T.~J. (2008), \enquote{Environmental exposures and adverse pregnancy outcomes: a review of the science,} \textit{Reproductive Sciences}, 15, 631--650.

\bibitem[{{U.S. Census Bureau}(1991)}]{USCensus1990}
{U.S. Census Bureau} (1991), \enquote{1990 Decennial Census, Summary Tape Files 1 and 3A (STF 1 and 3A),} .

\bibitem[{Wang and Zubizarreta(2020)}]{wang2020minimal}
Wang, Y. and Zubizarreta, J.~R. (2020), \enquote{Minimal dispersion approximately balancing weights: asymptotic properties and practical considerations,} \textit{Biometrika}, 107, 93--105.

\bibitem[{Zubizarreta(2015)}]{zubizarreta2015stable}
Zubizarreta, J.~R. (2015), \enquote{Stable weights that balance covariates for estimation with incomplete outcome data,} \textit{Journal of the American Statistical Association}, 110, 910--922.

\bibitem[{Zubizarreta et~al.(2023)Zubizarreta, Stuart, Small, and Rosenbaum}]{zubizarreta2023handbook}
Zubizarreta, J.~R., Stuart, E.~A., Small, D.~S., and Rosenbaum, P.~R. (2023), \textit{Handbook of matching and weighting adjustments for causal inference}, CRC Press.

\end{thebibliography}


\begin{thebibliography}{18}
\newcommand{\enquote}[1]{``#1''}
\expandafter\ifx\csname natexlab\endcsname\relax\def\natexlab#1{#1}\fi

\bibitem[{Auty et~al.(2024)Auty, Daw, Admon, and Gordon}]{auty2024comparing}
Auty, S.~G., Daw, J.~R., Admon, L.~K., and Gordon, S.~H. (2024), \enquote{Comparing approaches to identify live births using the Transformed Medicaid Statistical Information System,} \textit{Health Services Research}, 59, e14233.

\bibitem[{{Centers for Medicare \& Medicaid Services (CMS)}(2022)}]{cms_taf_ip}
{Centers for Medicare \& Medicaid Services (CMS)} (2022), \enquote{T-MSIS Analytic Files (TAF) Inpatient File,} \url{https://resdac.org/cms-data/files/taf-ip}.

\bibitem[{Chattopadhyay and Zubizarreta(2023)}]{chattopadhyay2023implied}
Chattopadhyay, A. and Zubizarreta, J.~R. (2023), \enquote{On the implied weights of linear regression for causal inference,} \textit{Biometrika}, 110, 615--629.

\bibitem[{Clark and Linzer(2015)}]{clark2015should}
Clark, T.~S. and Linzer, D.~A. (2015), \enquote{Should I use fixed or random effects?} \textit{Political Science Research and Methods}, 3, 399--408.

\bibitem[{Gelman and Hill(2006)}]{gelman2006data}
Gelman, A. and Hill, J. (2006), \textit{Data analysis using regression and multilevel/hierarchical models}, Cambridge university press.

\bibitem[{Gilbert et~al.(2021)Gilbert, Datta, Casey, and Ogburn}]{gilbert2021causal}
Gilbert, B., Datta, A., Casey, J.~A., and Ogburn, E.~L. (2021), \enquote{A causal inference framework for spatial confounding,} \textit{arXiv preprint arXiv:2112.14946}.

\bibitem[{Lawn et~al.(2023)Lawn, Ohuma, Bradley, Idueta, Hazel, Okwaraji, Erchick, Yargawa, Katz, Lee, et~al.}]{lawn2023small}
Lawn, J.~E., Ohuma, E.~O., Bradley, E., Idueta, L.~S., Hazel, E., Okwaraji, Y.~B., Erchick, D.~J., Yargawa, J., Katz, J., Lee, A.~C., et~al. (2023), \enquote{Small babies, big risks: global estimates of prevalence and mortality for vulnerable newborns to accelerate change and improve counting,} \textit{The Lancet}, 401, 1707--1719.

\bibitem[{Marden et~al.(1964)Marden, Smith, and McDonald}]{marden1964congenital}
Marden, P.~M., Smith, D.~W., and McDonald, M.~J. (1964), \enquote{Congenital anomalies in the newborninfant, including minor variations: A study of 4,412 babies by surface examination for anomalies and buccal smear for sex chromatin,} \textit{The Journal of Pediatrics}, 64, 357--371.

\bibitem[{Mercatanti and Li(2014)}]{mercatanti2014debit}
Mercatanti, A. and Li, F. (2014), \enquote{Do debit cards increase household spending? Evidence from a semiparametric causal analysis of a survey,} .

\bibitem[{Moodie et~al.(2018)Moodie, Saarela, and Stephens}]{moodie2018doubly}
Moodie, E.~E., Saarela, O., and Stephens, D.~A. (2018), \enquote{A doubly robust weighting estimator of the average treatment effect on the treated,} \textit{Stat}, 7, e205.

\bibitem[{Schroeder et~al.(2025)Schroeder, Van~Riper, Manson, Knowles, Kugler, Roberts, and Ruggles}]{Schroeder2025NHGIS}
Schroeder, J., Van~Riper, D., Manson, S., Knowles, K., Kugler, T., Roberts, F., and Ruggles, S. (2025), \enquote{IPUMS National Historical Geographic Information System: Version 20.0,} Accessed: 2025-10-31.

\bibitem[{{U.S. Census Bureau}(1991)}]{USCensus1990}
{U.S. Census Bureau} (1991), \enquote{1990 Decennial Census, Summary Tape Files 1 and 3A (STF 1 and 3A),} .

\bibitem[{{US EPA}(2025{\natexlab{a}})}]{EPA2025COCSpreadsheet}
{US EPA} (2025{\natexlab{a}}), \enquote{Contaminant of Concern Data for Decision Documents by Media, FYs 1981-2024 (Final NPL, Deleted NPL, and Superfund Alternative Approach Sites),} \url{https://www.epa.gov/superfund/superfund-data-and-reports}.

\bibitem[{{US EPA}(2025{\natexlab{b}})}]{EPA2025SearchSitesWhereYouLive}
--- (2025{\natexlab{b}}), \enquote{Search for Superfund Sites Where You Live,} \url{https://www.epa.gov/superfund/search-superfund-sites-where-you-live}.

\bibitem[{{US EPA}(2025{\natexlab{c}})}]{EPA2025cleanup}
--- (2025{\natexlab{c}}), \enquote{Superfund Cleanup Process,} \url{https://www.epa.gov/superfund/superfund-cleanup-process}, accessed: 2025-10-31.

\bibitem[{{US EPA, Office of Mission Support}(2025)}]{EPA2025NPLBoundaries}
{US EPA, Office of Mission Support} (2025), \enquote{NPL Superfund Site Boundaries (EPA Public),} \url{https://www.arcgis.com/home/item.html?id=d6e1591d9a424f1fa6d95a02095a06d7}.

\bibitem[{Wang and Zubizarreta(2020)}]{wang2020minimal}
Wang, Y. and Zubizarreta, J.~R. (2020), \enquote{Minimal dispersion approximately balancing weights: asymptotic properties and practical considerations,} \textit{Biometrika}, 107, 93--105.

\bibitem[{Zubizarreta(2015)}]{zubizarreta2015stable}
Zubizarreta, J.~R. (2015), \enquote{Stable weights that balance covariates for estimation with incomplete outcome data,} \textit{Journal of the American Statistical Association}, 110, 910--922.

\end{thebibliography}
 \bibliographystyle{asa}

\newpage

\bigskip

\singlespacing 
\begin{center}
\Large 
    \textbf{Supplementary Materials for} ``Understanding Spatial Regression Models from a Weighting Perspective in an Observational Study of Superfund Remediation''
\end{center}
\doublespacing
\vspace{1em}

\setcounter{section}{0}

\renewcommand{\thesection}{\Alph{section}}
\renewcommand{\thesubsection}{\thesection.\arabic{subsection}}

\makeatletter
\renewcommand{\theHsection}{supp.\Alph{section}}
\renewcommand{\theHsubsection}{supp.\Alph{section}.\arabic{subsection}}
\makeatother

\section{Proof of Propositions \ref{thm:weighting}, \ref{thm:minimal-weighting}, \ref{thm:imbalance_spatialauto}}
\label{sec:weighting-proof}
\subsection*{Proof of Proposition \ref{thm:weighting}}

\begin{proof}
    By the Frisch-Waugh-Lovell Theorem, or block matrix inversion,
\begin{align*}
    \hat{\tau}_{GLS} &= \frac{\bm{Z}^T\bm{\Sigma}^{-1}(I_n -  \bm{X}(\bm{X}^T \bm{\Sigma}^{-1} \bm{X})^{-1} \bm{X}^T\bm{\Sigma}^{-1})}{\bm{Z}^T\bm{\Sigma}^{-1}(I_n- \bm{X}(\bm{X}^T \bm{\Sigma}^{-1} \bm{X})^{-1} \bm{X}^T\bm{\Sigma}^{-1})\bm{Z}}\bm{Y}\\
    &=  \sum_{i:Z_i = 1} w_i Y_i - \sum_{i:Z_i = 0} w_i Y_i
\end{align*}for 
\begin{align*}
    (w_1, \ldots, w_n)^T &= \bm{M} \frac{(\bm{I}_n -  \bm{\Sigma}^{-1}\bm{X}(\bm{X}^T \bm{\Sigma}^{-1} \bm{X})^{-1} \bm{X}^T)\bm{\Sigma}^{-1}\bm{Z}}{\bm{Z}^T\bm{\Sigma}^{-1}(\bm{I}_n- \bm{X}(\bm{X}^T \bm{\Sigma}^{-1} \bm{X})^{-1} \bm{X}^T\bm{\Sigma}^{-1})\bm{Z}}.
\end{align*}
\end{proof}

\subsection*{Proof of Proposition \ref{thm:minimal-weighting}}
\begin{proof}
    Let $\bm{M}$ be the diagonal matrix with $(i,i)$ entry equal to $M_{ii} = 2Z_i - 1$. Since $\bm{\Sigma} = \sigma^2 \bm{I}_n + \rho^2 \bm{S}$, the quadratic programming is equivalent to:
    \begin{align*}
        &\min_{\bm{w}} \bm{w}^T \bm{M} \bm{\Sigma} \bm{M} \bm{w}\\
        &\text{subject to} \begin{cases} \sum_{i:Z_i = 1} w_i = \sum_{i:Z_i = 0} w_i = 1\\
        \sum_{i:Z_i = 1} w_i \bm{X}_i = \sum_{i:Z_i = 0} w_i \bm{X}_i
        \end{cases}
    \end{align*}
    Let $\bm{l} = \bm{M} \bm{w}$. 
    In terms of $\bm{l}$, 
    the quadratic programming problem takes the form 
    \begin{center}
    $\min_{\bm{l}} 
    \bm{l}^T\bm{\Sigma}  \bm{l}$ \\
     
    \text{subject to } $\begin{cases} \bm{Z}^T \bm{l} = 1\\
    \bm{X}^T \bm{l} = \bm{0}_p
   \end{cases}$. 
   \end{center}Note that the constraint $\sum_{i:Z_i = 0} w_i = -(\bm{1}-\bm{Z})^T \bm{l} = 1$ is implied by the two constraints since the design matrix $\bm{X}$ includes an intercept. 

The Lagrangian of this quadratic programming problem is 
\begin{align*}
    \mathcal{L}(\bm{l}, \lambda_1, \bm{\lambda}_2) &= \bm{l}^T \bm{\Sigma}   \bm{l} + \lambda_1(\bm{Z}^T \bm{l} - 1) - \bm{\lambda}_2^T (\bm{X}^T \bm{l}).
\end{align*}Computing the partial derivatives $\frac{\partial \mathcal{L}}{d\bm{l}}, \frac{\partial \mathcal{L}}{d\bm{\lambda_1}}, \frac{\partial \mathcal{L}}{d \lambda_2}$ and equating them to $0$, we obtain the three equations \begin{align}
    \bm{0}_n &= 2\Sigma \bm{l}  + \lambda_1 \bm{Z} - \bm{X} \lambda_2\\
    0 &= \bm{Z}^T \bm{l} - 1\\
    \bm{0}_p &= \bm{X}^T \bm{l}.
\end{align}
The first equation can be rewritten as $$\bm{l} = \frac{\bm{\Sigma}^{-1} \bm{X}\lambda_2 - \lambda_1 \bm{\Sigma}^{-1} \bm Z}{2}.$$
For ease of notation, define $a = \bm{Z}^T \bm{\Sigma}^{-1} \bm{Z}$ and $b = \bm{Z}^T \bm{\Sigma}^{-1} \bm{X} (\bm{X}^T \bm{\Sigma}^{-1} \bm{X})^{-1} \bm{X}^T \bm{\Sigma}^{-1} \bm{Z}$. Combining the first equation with the second we obtain
\begin{align*}
    \frac{\bm{Z}^T \bm{\Sigma}^{-1} \bm{X} \lambda_2 - \lambda_1 \bm{Z}^T \bm{\Sigma}^{-1} \bm{Z}}{2} &= 1\\
    \implies \lambda_1 &= \frac{\bm{Z}^T \bm{\Sigma}^{-1} \bm{X}\lambda_2 - 2}{a}.
\end{align*}
Combining the first equation with the third, and substituting $\lambda_1$ with the expression above, we obtain
\begin{align*}
    \bm{X}^T \bm{\Sigma}^{-1} \bm{X} \lambda_2 - \lambda_1 \bm{X}^T \bm{\Sigma}^{-1} \bm{Z} &= 0\\
    \bm{X}^T \bm{\Sigma}^{-1} \bm{X} \lambda_2 - \bigg(\frac{\bm{Z}^T \bm{\Sigma}^{-1} \bm{X}\lambda_2 - 2}{a}\bigg) \bm{X}^T \bm{\Sigma}^{-1} \bm{Z} &= 0\\
    \lambda_2 &= \frac{2}{a} \bm{W}^{-1} \bm{X}^T \bm{\Sigma}^{-1} \bm{Z},
\end{align*}where $$\bm{W} = \frac{\bm{X}^T \bm{\Sigma}^{-1} \bm{Z} \bm{Z}^T \bm{\Sigma}^{-1} \bm{X}}{a} - \bm{X}^T \bm{\Sigma}^{-1} \bm{X}.$$
By the Woodbury identity, 
\begin{align*}
    \bm{W}^{-1} &= -(\bm{X}^T \bm{\Sigma}^{-1} \bm{X})^{-1} -\frac{(\bm{X}^T \bm{\Sigma}^{-1} \bm{X})^{-1} \bm{X}^T \bm{\Sigma}^{-1} \bm{Z} \bm{Z}^T \bm{\Sigma}^{-1} \bm{X}(\bm{X}^T \bm{\Sigma}^{-1} \bm{X})^{-1}}{\bm{Z}^T\bm{\Sigma}^{-1}(I_n- \bm{X}(\bm{X}^T \bm{\Sigma}^{-1} \bm{X})^{-1} \bm{X}^T\bm{\Sigma}^{-1})\bm{Z}}\\
    &= -(\bm{X}^T \bm{\Sigma}^{-1} \bm{X})^{-1} -\frac{(\bm{X}^T \bm{\Sigma}^{-1} \bm{X})^{-1} \bm{X}^T \bm{\Sigma}^{-1} \bm{Z} \bm{Z}^T \bm{\Sigma}^{-1} \bm{X}(\bm{X}^T \bm{\Sigma}^{-1} \bm{X})^{-1}}{a-b}.
\end{align*}
Substituting $\lambda_1, \lambda_2$ in the first equation with the expressions above, we obtain \begin{align*}
    \bm{l} &= \frac{\bm{\Sigma}^{-1} \bm{X}\lambda_2 - \lambda_1 \bm{\Sigma}^{-1} \bm Z}{2}\\
    &= \frac{\bm{\Sigma}^{-1} \bm{X} \bigg(\frac{2}{a} \bm{W}^{-1} \bm{X}^T \bm{\Sigma}^{-1} \bm{Z}\bigg)}{2} - \frac{(\bm{Z}^T \bm{\Sigma}^{-1} \bm{X}\bigg(\frac{2}{a} \bm{W}^{-1} \bm{X}^T \bm{\Sigma}^{-1} \bm{Z}\bigg) - 2)\bm{\Sigma}^{-1} \bm{Z}}{a}\\
    &= -\frac{\bm{\Sigma}^{-1} \bm{X}(\bm{X}^T \bm{\Sigma}^{-1} \bm{X})^{-1} \bm{X}^T \bm{\Sigma}^{-1} \bm{Z}}{a} - \frac{\bm{\Sigma}^{-1} \bm{X}(\bm{X}^T \bm{\Sigma}^{-1} \bm{X})^{-1} \bm{X}^T \bm{\Sigma}^{-1} \bm{Z} \bm{Z}^T \bm{\Sigma}^{-1} \bm{X}(\bm{X}^T \bm{\Sigma}^{-1} \bm{X})^{-1} \bm{X}^T \bm{\Sigma}^{-1} \bm{Z}}{a(a-b)}\\
    &\hspace{0.5cm} + \frac{\bm{Z}^T \bm{\Sigma}^{-1} \bm{X}(\bm{X}^T \bm{\Sigma}^{-1} \bm{X})^{-1} \bm{X}^T\bm{\Sigma}^{-1} \bm{Z}} {a^2}\bm{\Sigma}^{-1} \bm{Z} \\
    &\hspace{1cm}+ \frac{\bm{Z}^T \bm{\Sigma}^{-1} \bm{X}(\bm{X}^T \bm{\Sigma}^{-1} \bm{X})^{-1} \bm{X}^T \bm{\Sigma}^{-1} \bm{Z} \bm{Z}^T \bm{\Sigma}^{-1} \bm{X}(\bm{X}^T \bm{\Sigma}^{-1} \bm{X})^{-1} \bm{X}^T \bm{\Sigma}^{-1} \bm{Z}}{a^2(a-b)}\bm{\Sigma}^{-1} \bm{Z} + \frac{\bm{\Sigma}^{-1} \bm{Z}}{a}\\
    &= -\frac{\bm{\Sigma}^{-1} \bm{X}(\bm{X}^T \bm{\Sigma}^{-1} \bm{X})^{-1} \bm{X}^T \bm{\Sigma}^{-1} \bm{Z}}{a} - \frac{\bm{\Sigma}^{-1} \bm{X}(\bm{X}^T \bm{\Sigma}^{-1} \bm{X})^{-1} \bm{X}^T \bm{\Sigma}^{-1} \bm{Z} b}{a(a-b)} + \frac{b} {a^2}\bm{\Sigma}^{-1} \bm{Z} \\
    &\hspace{0.5in}+ \frac{b^2}{a^2(a-b)}\bm{\Sigma}^{-1} \bm{Z} + \frac{\bm{\Sigma}^{-1} \bm{Z}}{a}\\
    &= \frac{\bm{\Sigma}^{-1} \bm{Z} - \bm{\Sigma}^{-1} \bm{X}(\bm{X}^T \bm{\Sigma}^{-1} \bm{X})^{-1} \bm{X}^T \bm{\Sigma}^{-1} \bm{Z}}{a-b},
\end{align*}
Left-multiplying the expression in line (17) by $\bm{M}$ yields the implied weights of the generalized least squares, completing the proof. 
\end{proof}

\begin{corollary}[General $\bm{\Sigma}$]
\label{corollary_general}
When $\bm{\Sigma}$ does not have a natural decomposition $\bm{\Sigma}= \sigma^2 \bm{I}_n + \rho^2 \bm{S}$ and is instead an arbitrary positive semidefinite matrix, the weights of the GLS estimator correspond to the solution of the following quadratic programming problem:
\begin{align*}
    &\min_{\bm{w}} \bigg\{  \sum_{k = 1}^n \lambda_k \bigg(\sum_{i:Z_i = 1} w_i v_{ki} - \sum_{i:Z_i = 0} w_i v_{ki}\bigg)^2 \bigg\}\\
    &\text{subject to } \begin{cases} \sum_{i:Z_i=1} w_i = 1, \sum_{i:Z_i=0} w_i = 1\\
    \sum_{i: Z_i = 1} w_i \bm{X}_i = \sum_{i:Z_i = 0} w_i \bm{X}_i
   \end{cases} 
   \end{align*}
   where $\bm{v}_1, \ldots, \bm{v}_n$ are the eigenvectors of $\bm{\Sigma}$ with corresponding eigenvalues $\lambda_1 \geq \ldots \geq \lambda_n \geq 0$. 
\end{corollary}

\subsection*{Proof of Proposition \ref{thm:imbalance_spatialauto}}
\begin{proof}
    
Assume the same notation as in the Proof of Proposition \ref{thm:minimal-weighting}, i.e. $\bm{l} = \bm{M} \bm{w}$ where $\bm{M}$ is the diagonal matrix with $(i,i)$ entry equal to $M_{ii} = 2Z_i - 1$. 

The Global Moran's I statistic of $\bm{U}$ relative to the spatial covariance matrix $\bm{S}$ can be rewritten as $$\mathcal{I}(\bm{U};\bm{S}) = \frac{(\bm{U}-\bar{U}\bm{1})^T \bm{S} (\bm{U}-\bar{U}\bm{1})}{\lambda_1 (\bm{U}-\bar{U}\bm{1})^T (\bm{U}-\bar{U}\bm{1})}.$$

The absolute mean imbalance for $\bm{U}$ is \begin{align*}
 &\sqrt{(\sum_{i:Z_i = 1} w_i U_i - \sum_{i:Z_i = 0} w_i U_i)^2}\\
    &= \sqrt{(\bm{l}^T \bm{U})^2} \\
    &=  \sqrt{(\bm{l}^T (\bm{U}-\bar{U}\bm{1}))^2} \\
    &= \sqrt{((\bm{\sqrt{\bm{\Sigma}}l})^T \sqrt{\bm{\Sigma}^{-1}}(\bm{U}-\bar{U}\bm{1}))^2} \\
    &\leq \sqrt{(\bm{l}^T \bm{\Sigma} \bm{l}) (\bm{U}-\bar{U}\bm{1})^T \bm{\Sigma}^{-1} (\bm{U}-\bar{U}\bm{1})}\\
    &= \sqrt{c_0 (\bm{U}-\bar{U}\bm{1})^T \bm{\Sigma}^{-1} (\bm{U}-\bar{U}\bm{1})}\\
    &\leq \sqrt{c_0 \frac{((\bm{U}-\bar{U}\bm{1})^T (\bm{U}-\bar{U}\bm{1}))^2}{(\bm{U}-\bar{U}\bm{1})^T\bm{\Sigma}(\bm{U}-\bar{U}\bm{1})} \frac{(\lambda_{\text{min}}(\bm{\Sigma}) + \lambda_{\text{max}}(\bm{\Sigma}))^2}{4 \lambda_{\text{min}}(\bm{\Sigma})\lambda_{\text{max}}(\bm{\Sigma})}}\\
    &= \sqrt{c_0 \frac{((\bm{U}-\bar{U}\bm{1})^T (\bm{U}-\bar{U}\bm{1}))^2}{\sigma^2 (\bm{U}-\bar{U}\bm{1})^T(\bm{U}-\bar{U}\bm{1}) + \rho^2 (\bm{U}-\bar{U}\bm{1})^T \bm{S} (\bm{U}-\bar{U}\bm{1})} \frac{(2\sigma^2 + \rho^2 \lambda_1 + \rho^2 \lambda_n)^2}{4 (\sigma^2 + \rho^2 \lambda_1)(\sigma^2 + \rho^2 \lambda_n)}}\\
    &= \sqrt{c_0 \frac{1}{\sigma^2  + \rho^2 \lambda_1 \mathcal{I}(\bm{U};\bm{S})} \frac{(2\sigma^2 + \rho^2 \lambda_1 + \rho^2 \lambda_n)^2}{4 (\sigma^2 + \rho^2 \lambda_1)(\sigma^2 + \rho^2 \lambda_n)}}\sqrt{(\bm{U}-\bar{U}\bm{1})^T (\bm{U}-\bar{U}\bm{1})}
\end{align*}by the Cauchy-Schwarz inequality (line 5) and the Kantorovich inequality (line 7). 

By Assumptions \ref{consistency}--\ref{ignorability}, and the assumption of a linear outcome model, the conditional bias is \begin{align*}
    \E(\hat{\tau}_\text{GLS}|\bm{Z}, \bm{X}, \bm{U}) - \tau &= \sum_{i:Z_i = 1} w_i Y_i - \sum_{i:Z_i = 0} w_i Y_i\\
    &= \sum_{i:Z_i = 1} w_i \E(Y_i|\bm{Z}, \bm{X}, \bm{U}) - \sum_{i:Z_i = 0} w_i \E(Y_i|\bm{Z}, \bm{X}, \bm{U}) - \tau\\
    &= \sum_{i:Z_i = 1} w_i (\bm{\beta}^T \bm{X}_i + \tau + \gamma U_i) - \sum_{i:Z_i = 0} w_i (\bm{\beta}^T \bm{X}_i + \gamma U_i) - \tau\\
    &= \gamma \bigg(\sum_{i:Z_i = 1} w_i U_i - \sum_{i:Z_i = 0} w_i U_i\bigg)
\end{align*}since the weights exactly balance the measured covariates in mean and sum to $1$ in each treatment group (Proposition \ref{thm:minimal-weighting}). 

Therefore, the absolute conditional bias is \begin{align*}
    \bigg\vert \E(\hat{\tau}_\text{GLS}|\bm{Z}, \bm{X}, \bm{U}) - \tau  \bigg \vert &= \bigg \vert \gamma \bigg(\sum_{i:Z_i = 1} w_i U_i - \sum_{i:Z_i = 0} w_i U_i\bigg) \bigg \vert\\
    &\leq |\gamma| \sqrt{c_0 \frac{(2\sigma^2 + \rho^2\lambda_1 + \rho^2 \lambda_n)^2}{4(\sigma^2 + \rho^2 \lambda_1 \mathcal{I}(\bm{U};\bm{S}) (\sigma^2 + \rho^2 \lambda_1)(\sigma^2 + \rho^2 \lambda_n)}}\sqrt{\sum_{i = 1}^n (U_i - \bar{U})^2}.
\end{align*}
\end{proof}

\section{Impact of the spatial hyperparameter $\rho^2$ on causal inference}
\label{sec:hyperparameter}
\subsection*{A balance-dispersion tradeoff}
\label{sec:balance-dispersion}
The quadratic programming problem of Proposition \ref{thm:minimal-weighting} admits an equivalent formulation with a soft-balancing constraint on the sum of mean eigenvector imbalances. By duality, the problem is equivalent to \begin{align*}
    &\min_{\bm{w}} \sigma^2 \sum_{i = 1}^n w_i^2 \\
    &\text{subject to } \begin{cases} 
    \sum_{k = 1}^n \lambda_k \bigg(\sum_{i:Z_i = 1} w_i v_{ki} - \sum_{i:Z_i = 0} w_i v_{ki}\bigg)^2 \leq \Delta(\rho^2), \\
    \sum_{i:Z_i = 1} w_i = 1, \sum_{i:Z_i = 0} w_i = 1,\\
    \sum_{i:Z_i = 1} w_i \bm{X}_i = \sum_{i:Z_i = 0} w_i \bm{X}_i\end{cases}
\end{align*}where $$\Delta(\rho^2) = \frac{\bm{Z}^T \bm{\Sigma}^{-1} (\bm{I}_n - \bm{X} (\bm{X}^T \bm{\Sigma}^{-1} \bm{X})^{-1} \bm{X}^T \bm{\Sigma}^{-1})\bm{S} (\bm{I}_n -  \bm{\Sigma}^{-1}\bm{X}(\bm{X}^T \bm{\Sigma}^{-1} \bm{X})^{-1} \bm{X}^T)\bm{\Sigma}^{-1}\bm{Z}}{(\bm{Z}^T\bm{\Sigma}^{-1}(\bm{I}_n- \bm{X}(\bm{X}^T \bm{\Sigma}^{-1} \bm{X})^{-1} \bm{X}^T\bm{\Sigma}^{-1})\bm{Z})^2}$$where $\Delta(\rho^2)$ depends on $\rho^2$ through $\bm{\Sigma}^{-1} = (\sigma^2 \bm{I}_n + \rho^2 \bm{S})^{-1}$. 

Furthermore, define weight dispersion as \begin{align*}
    D(\rho^2) = \sum_{i = 1}^n w_i^2 &= \frac{\bm{Z}^T \bm{\Sigma}^{-1} (\bm{I}_n - \bm{X}(\bm{X}^T \bm{\Sigma}^{-1} \bm{X})^{-1}\bm{X}^T \bm{\Sigma}^{-1})(\bm{I}_n - \bm{\Sigma}^{-1}\bm{X}(\bm{X}^T \bm{\Sigma}^{-1} \bm{X})^{-1}\bm{X}^T )\bm{\Sigma}^{-1} \bm{Z}}{(\bm{Z}^T \bm{\Sigma}^{-1} (\bm{I}_n - \bm{X}(\bm{X}^T \bm{\Sigma}^{-1} \bm{X})^{-1}\bm{X}^T \bm{\Sigma}^{-1}) \bm{Z})^2}\\
    &= (\bm{Z}^T \bm{\Sigma}^{-1} (\bm{I}_n - \bm{X}(\bm{X}^T \bm{\Sigma}^{-1} \bm{X})^{-1}\bm{X}^T \bm{\Sigma}^{-1}) \bm{Z})^{-1}.
\end{align*}
Weight dispersion is linearly related to the sample variance of the weights, since 
    $D(\rho^2) = \sum_{i = 1}^n (w_i - \bar{w})^2  + \frac{4}{n}.$

\begin{proposition}[Balance and weight dispersion as functions of $\rho^2$]
\label{thm:balance-dispersion-tradeoff}
    $\Delta(\rho^2)$ is non-increasing in $\rho^2$ and $D(\rho^2)$ is non-decreasing in $\rho^2$. 
\end{proposition}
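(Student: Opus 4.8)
The plan is to reduce both monotonicity claims to a single elementary fact about parametric optimization over a \emph{fixed} feasible set. Recall from the proof of Proposition~\ref{thm:minimal-weighting} that, writing $\bm{l} = \bm{M}\bm{w}$, the implied weights are obtained by minimizing
\[
\Phi_{\rho^2}(\bm{l}) \;:=\; \bm{l}^\top \bm{\Sigma}\,\bm{l} \;=\; \sigma^2 \|\bm{l}\|^2 \;+\; \rho^2\, \bm{l}^\top \bm{S}\,\bm{l}
\]
over the affine set $\mathcal{F} = \{\bm{l}\in\R^n : \bm{Z}^\top \bm{l} = 1,\ \bm{X}^\top \bm{l} = \bm{0}\}$. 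The crucial observation is that $\mathcal{F}$ does not depend on $\rho^2$; only the objective does. Let $\bm{l}_{\rho^2}$ denote the minimizer (unique, and existing, because $\sigma^2>0$ makes $\Phi_{\rho^2}$ strictly convex and coercive, and $\mathcal{F}$ is nonempty under overlap and full column rank of $\bm X$). First I would record the two representations $D(\rho^2) = \|\bm{l}_{\rho^2}\|^2$ and $\Delta(\rho^2) = \bm{l}_{\rho^2}^\top \bm{S}\,\bm{l}_{\rho^2}$. The first holds because $\bm{M}$ is a diagonal $\pm 1$ matrix, hence orthogonal, so $\sum_i w_i^2 = \|\bm{M}\bm{l}_{\rho^2}\|^2 = \|\bm{l}_{\rho^2}\|^2$. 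The second follows because, in the proof of Proposition~\ref{thm:minimal-weighting}, $(a-b)\,\bm{l}_{\rho^2} = \bm{\Sigma}^{-1}\bm{Z} - \bm{\Sigma}^{-1}\bm{X}(\bm{X}^\top\bm{\Sigma}^{-1}\bm{X})^{-1}\bm{X}^\top\bm{\Sigma}^{-1}\bm{Z}$ with $a-b = \bm{Z}^\top\bm{\Sigma}^{-1}(\bm{I}_n - \bm{X}(\bm{X}^\top\bm{\Sigma}^{-1}\bm{X})^{-1}\bm{X}^\top\bm{\Sigma}^{-1})\bm{Z}$, so the numerator of the displayed formula for $\Delta(\rho^2)$ factors as $((a-b)\bm{l}_{\rho^2})^\top \bm{S}\,((a-b)\bm{l}_{\rho^2}) = (a-b)^2\,\bm{l}_{\rho^2}^\top\bm{S}\bm{l}_{\rho^2}$, and cancels against the denominator $(a-b)^2$. (Equivalently, $\Delta(\rho^2) = \sum_k \lambda_k\big(\sum_{i:Z_i=1} w_i v_{ki} - \sum_{i:Z_i=0} w_i v_{ki}\big)^2$, the balancing functional in \eqref{dispersion}.)

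Next I would prove the following trade-off lemma: if $f,g:\R^n\to\R$ are convex with $f$ strictly convex, $\mathcal{F}$ is a nonempty closed convex set, and $\bm{l}_t = \argmin_{\bm{l}\in\mathcal{F}}\{f(\bm{l}) + t\,g(\bm{l})\}$ for $t\ge 0$, then $t\mapsto g(\bm{l}_t)$ is non-increasing and $t\mapsto f(\bm{l}_t)$ is non-decreasing. The proof is the standard exchange argument: for $0\le t_1<t_2$, optimality of $\bm{l}_{t_1}$ and $\bm{l}_{t_2}$ gives $f(\bm{l}_{t_1}) + t_1 g(\bm{l}_{t_1}) \le f(\bm{l}_{t_2}) + t_1 g(\bm{l}_{t_2})$ and $f(\bm{l}_{t_2}) + t_2 g(\bm{l}_{t_2}) \le f(\bm{l}_{t_1}) + t_2 g(\bm{l}_{t_1})$; adding these and cancelling the $f$-terms yields $(t_2-t_1)\big(g(\bm{l}_{t_1}) - g(\bm{l}_{t_2})\big)\ge 0$, so $g(\bm{l}_{t_1})\ge g(\bm{l}_{t_2})$; substituting this back into the first inequality gives $f(\bm{l}_{t_1}) - f(\bm{l}_{t_2})\le t_1\big(g(\bm{l}_{t_2}) - g(\bm{l}_{t_1})\big)\le 0$.

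Finally I would apply the lemma with $f(\bm{l}) = \sigma^2\|\bm{l}\|^2$ (strictly convex because $\sigma^2>0$), $g(\bm{l}) = \bm{l}^\top \bm{S}\,\bm{l}$ (convex because $\bm{S}\succeq 0$), $t = \rho^2$, and $\mathcal{F}$ as above. This immediately gives that $\Delta(\rho^2) = g(\bm{l}_{\rho^2})$ is non-increasing and $D(\rho^2) = \|\bm{l}_{\rho^2}\|^2 = \sigma^{-2} f(\bm{l}_{\rho^2})$ is non-decreasing. As a cross-check for the $D$ half, I would also note a self-contained argument via the Loewner order: $D(\rho^2)^{-1} = \min_{\bm{c}} (\bm{Z}-\bm{X}\bm{c})^\top \bm{\Sigma}^{-1}(\bm{Z}-\bm{X}\bm{c})$; since $\bm{S}\succeq 0$ the map $\rho^2\mapsto \bm{\Sigma} = \sigma^2\bm{I}_n + \rho^2\bm{S}$ is non-decreasing in the Loewner order, hence $\rho^2\mapsto\bm{\Sigma}^{-1}$ is non-increasing, hence each quadratic form in the minimization is non-increasing in $\rho^2$, and so is their pointwise infimum; therefore $D(\rho^2)^{-1}$ is non-increasing and $D(\rho^2)$ non-decreasing. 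I expect the only mildly delicate step to be the bookkeeping in step two, i.e.\ identifying the rational-function expressions for $\Delta(\rho^2)$ and $D(\rho^2)$ with the clean functionals $\bm{l}_{\rho^2}^\top\bm{S}\bm{l}_{\rho^2}$ and $\|\bm{l}_{\rho^2}\|^2$; this is essentially the same $\bm{\Sigma}^{-1}$-projection algebra already carried out in the proof of Proposition~\ref{thm:minimal-weighting}, and once it is in place the monotonicity is purely a convexity argument requiring no computation.
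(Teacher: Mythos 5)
Your proposal is correct, but it proves the monotonicity by a genuinely different route than the paper. The paper works directly with the closed-form rational expressions: writing $\bm{M} = \bm{V}^\top(\bm{V}\bm{\Sigma}\bm{V}^\top)^{-1}\bm{V}$ for a $\bm{\Sigma}^{-1}$-orthogonal basis of $\mathrm{col}(\bm{X})^\perp$, it differentiates $\Delta(\rho^2) = \bm{Z}^\top\bm{M}\bm{S}\bm{M}\bm{Z}/(\bm{Z}^\top\bm{M}\bm{Z})^2$ and $D(\rho^2) = (\bm{Z}^\top\bm{M}\bm{Z})^{-1}$ in $\rho^2$ (using $\partial\bm{M}/\partial\rho^2 = -\bm{M}\bm{S}\bm{M}$) and signs the derivatives, invoking Cauchy--Schwarz for $\partial\Delta/\partial\rho^2 \le 0$. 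You instead identify $\Delta(\rho^2)$ and $D(\rho^2)$ with the optimal values $\bm{l}_{\rho^2}^\top\bm{S}\bm{l}_{\rho^2}$ and $\|\bm{l}_{\rho^2}\|^2$ of the Proposition \ref{thm:minimal-weighting} program over a feasible set that does not depend on $\rho^2$, and then apply the standard exchange argument for parametric minimization of $f + t g$; your bookkeeping step (the $(a-b)^2$ cancellation identifying the rational expressions with the clean functionals) is right, and your Loewner-order cross-check for $D$ is also valid. Your argument is more elementary and more general: it needs no matrix calculus or differentiability, and it would apply verbatim to any convex penalty in place of $\bm{l}^\top\bm{S}\bm{l}$ and to constrained variants of the program. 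What the paper's calculation buys in exchange is quantitative information, in particular the exact derivative formulas --- e.g. $\partial D/\partial\rho^2 = \bm{Z}^\top\bm{M}\bm{S}\bm{M}\bm{Z}/(\bm{Z}^\top\bm{M}\bm{Z})^2$, which is an envelope-type identity equating the marginal increase in dispersion with the balance functional $\Delta(\rho^2)$ --- information your qualitative exchange argument does not produce.
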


\begin{proof}
Let $\bm{V}$ be a matrix whose rows form a $\bm{\Sigma}^{-1}$-orthogonal basis for $\text{col}(\bm{X})^\perp$. We can rewrite $$\bm{\Sigma}^{-1}(\bm{I}_n - \bm{X} (\bm{X}^T \bm{\Sigma}^{-1} \bm{X})^{-1} \bm{X}^T \bm{\Sigma}^{-1}) = \bm{V}^T (\bm{V} \bm{\Sigma} \bm{V}^T)^{-1} \bm{V}.$$
For ease of notation, let $\bm{M} = \bm{V}^T (\bm{V} \bm{\Sigma} \bm{V}^T)^{-1} \bm{V}$ so that $\Delta(\rho^2) = \frac{\bm{Z}^T \bm{M} \bm{S} \bm{M} \bm{Z}}{(\bm{Z}^T \bm{M} \bm{Z})^2}$ and $D(\rho^2) = (\bm{Z}^T \bm{M} \bm{Z})^{-1}$. 

Then \begin{align*}
    \frac{\partial \Delta(\rho^2)}{\partial \rho^2} &= \frac{-2(\bm{Z}^T \bm{M} \bm{Z})^2 \bm{Z}^T \bm{M} \bm{S} \bm{M} \bm{S}\bm{M} \bm{Z} + 2(\bm{Z}^T \bm{M} \bm{Z}) (\bm{Z}^T \bm{M} \bm{S} \bm{M} \bm{Z})^2}{(\bm{Z}^T \bm{M} \bm{Z})^4}\\
    &= \frac{2(\bm{Z}^T \bm{M} \bm{Z})[(\bm{Z}^T \bm{M} \bm{S} \bm{M} \bm{Z})^2 - (\bm{Z}^T \bm{M} \bm{Z})(\bm{Z}^T \bm{M} \bm{S} \bm{M} \bm{S}\bm{M} \bm{Z})]  }{(\bm{Z}^T \bm{M} \bm{Z})^4}.
\end{align*}By the Cauchy-Schwarz inequality, \begin{align*}
    (\bm{Z}^T \bm{M} \bm{S} \bm{M} \bm{Z})^2 = (\bm{Z}^T \sqrt{\bm{M}}) (\sqrt{\bm{M}} \bm{S} \bm{M} \bm{Z}) \leq (\bm{Z}^T \bm{M} \bm{Z}) (\bm{Z}^T \bm{M} \bm{S} \bm{M} \bm{S} \bm{M} \bm{Z})
\end{align*}implying that $\frac{\partial \Delta(\rho^2)}{\partial \rho^2} \leq 0$. 
Additionally, 
\begin{align*}
    \frac{\partial D(\rho^2)}{\partial \rho^2 } &= -(\bm{Z}^T \bm{M} \bm{Z})^{-2} \bm{Z}^T (-\bm{M} \bm{S} \bm{M}) \bm{Z}\\
    &= \frac{\bm{Z}^T \bm{M} \bm{S} \bm{M} \bm{Z}}{(\bm{Z}^T \bm{M} \bm{Z})^2}
\end{align*} implying that $ \frac{\partial D(\rho^2)}{\partial \rho^2 } \geq 0$.
\end{proof}

\subsection*{Random effects and fixed effects models: a new perspective on the bias-variance tradeoff}
\label{sec:re-fe-tradeoff}
We now interpret the results of Section \ref{sec:balance-dispersion} in the context of random effects, fixed effects, and complete pooling models \citepsupp{gelman2006data}. In the weighting problem for the random effects model, eigenvectors of $\bm{S}$ corresponding to nonzero eigenvalues are the binary state-level indicators: $$\bm{v}_k = (I(C_1 = k), \ldots, I(C_n = k))^T \text{ with } \lambda_k = n_k = \sum_{i = 1}^n I(C_i = k) \text{ for }k = 1, \ldots, K = 51.$$ Hence, the latent covariates that the random effects model approximately balances are precisely the state-level indicators.

By Proposition \ref{thm:balance-dispersion-tradeoff}, the weighted sum of imbalances over state indicators is non-increasing in $\rho^2$ while weight dispersion is non-decreasing in $\rho^2$. This has the following implications. At the fixed effects limit $(\rho^2 = \infty)$, the state-level indicators are perfectly balanced in mean at the cost of higher weight dispersion. At the complete pooling limit ($\rho^2 = 0$), balance on state-level indicators is ignored and weight dispersion is minimized. Random effects models, with $\rho^2 \in (0, \infty)$ occupy the continuum between these two extremes. 
 Figure \ref{fig:re-fe} illustrates this phenomenon in the Superfund data application.

\begin{figure}[!ht]
    \centering
    \includegraphics[width=\linewidth]{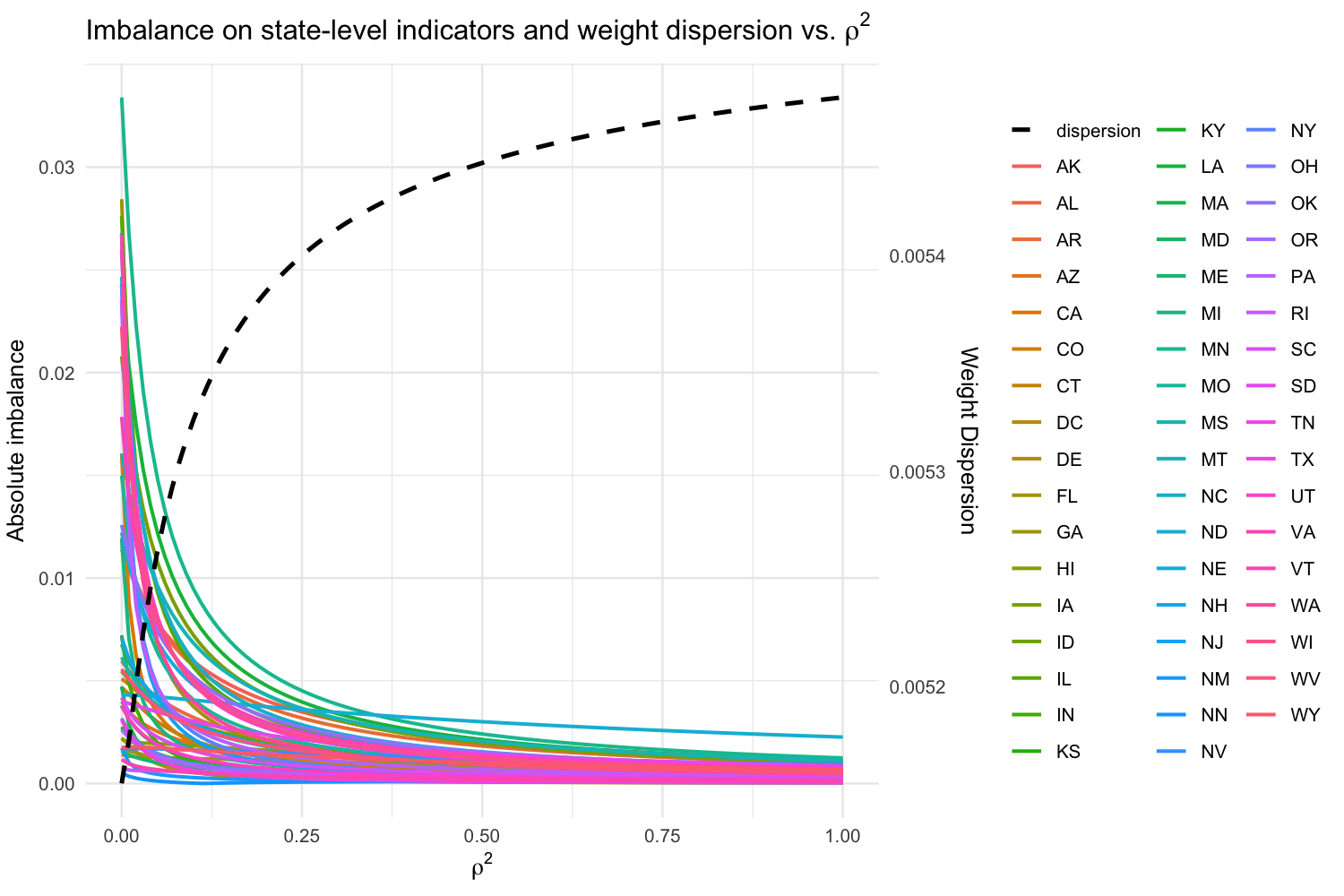}
    \caption{The balance-dispersion tradeoff presented by the variance of the random intercepts, $\rho^2$, in random effects models. We fix $\sigma^2 = 1$.}
    \label{fig:re-fe}
\end{figure}

It is well-documented that the variance of the random intercepts $\rho^2$ presents a bias-variance tradeoff \citepsupp{clark2015should}. We re-establish this result by demonstrating that our balance-dispersion result gives rise to the same tradeoff. Suppose: 
\begin{enumerate}
    \item the unmeasured confounder $\bm{U}$ is constant within counties, i.e. $U_i = U_j$ for $C_i = C_j$.
    \item the outcome model is linear, $ Y_i = \bm{\beta}^T \bm{X}_i + \tau Z_i + \gamma U_i + \epsilon_i$ for $i = 1, \ldots, n$ where $\epsilon_i$ is independent and identically distributed with $\E(\epsilon_i) = 0$, $\text{Var}(\epsilon_i) = \sigma^2$, and $\epsilon_i \indep (\bm{X}, \bm{Z}, \bm{U})$ $\forall i$.
\end{enumerate}By 1., $\bm{U}$ can be re-expressed as $U_i = \sum_{k = 1}^K \delta_k I(C_i = k)$ or equivalently, $$\bm{U} = \sum_{k = 1}^K \delta_k \bm{v}_k$$for some constants $\delta_1, \ldots, \delta_K \in \mathbb R$.

The conditional bias of the GLS estimate of the treatment coefficient derived from a random effects model $\hat{\tau}_{GLS}$ in estimating $\tau_{ATT} = \tau$ is
\begin{align*}
    |\E(\hat{\tau}_{GLS}|\bm{X}, \bm{Z}, \bm{U}) - \tau| &= \bigg\vert \gamma (\sum_{i:Z_i = 1} w_i U_i - \sum_{i:Z_i = 0} w_i U_i)\bigg\vert  \\
    &= \bigg\vert \gamma \sum_{k = 1}^K \delta_k (\sum_{i:Z_i =1} w_i v_{ki} - \sum_{i:Z_i =0} w_i v_{ki})\bigg\vert\\
    &= \bigg\vert \gamma \sum_{k = 1}^K \frac{\delta_k}{\sqrt{\lambda_k}} \sqrt{\lambda_k}(\sum_{i:Z_i =1} w_i v_{ki} - \sum_{i:Z_i =0} w_i v_{ki})\bigg\vert\\
    &\leq |\gamma|\sqrt{\sum_{k = 1}^K \frac{\delta_k^2}{\lambda_k}\Delta(\rho^2)}
\end{align*}by the Cauchy-Schwarz inequality. Meanwhile, the conditional variance of $\hat{\tau}_{GLS}$ is \begin{align*}
    \text{Var}(\hat{\tau}_{GLS}|\bm{X}, \bm{Z}, \bm{U}) &= \sigma^2 \sum_{i = 1}^n w_i^2,
\end{align*}which is proportional to the weight dispersion. By Proposition \ref{thm:balance-dispersion-tradeoff}, we conclude that the conditional bias of $\hat{\tau}_{GLS}$ tends to decrease with $\rho^2$ while the conditional variance increases, confirming the well-known bias-variance tradeoff. 

\section{Equivalence of GLS with ridge regression}
\label{sec:gls-ridge}
\begin{proposition}[Equivalence of Spatial Regression with Augmented Ridge Regression]
Suppose that $\bm{\Sigma} = \sigma^2 \bm{I}_n + \rho^2 \bm{S}$, where $\bm{S}$ is an $n\times n$ positive semidefinite matrix. Let $\bm{v}_1, \ldots, \bm{v}_n$ be the eigenvectors of $\bm{S}$ with corresponding eigenvalues $\lambda_1 \geq \ldots \geq \lambda_n \geq 0$. 

Consider a ridge regression of $\bm{Y}$ on measured covariates $\bm{X}$, binary treatment $\bm{Z}$, and the eigenvectors $\bm{v}_1, \ldots, \bm{v}_n$, with corresponding ridge penalties of $\bm{0}_p$, $0$, and $\sigma^2/(\rho^2 \lambda_1), \ldots, \sigma^2 /(\rho^2 \lambda_n)$ respectively. 

More specifically, denote the ridge regression loss as
$$\mathcal{L}(\bm{\beta}, \tau, \bm{\gamma}) = \sum_{i = 1}^n \bigg(Y_i - \tau Z_i -\bm{\beta}^T \bm{X}_i - \sum_{j = 1}^n \gamma_j v_{ij}\bigg)^2 + \frac{\sigma^2}{\rho^2}\sum_{j = 1}^n \frac{\gamma_j^2}{\lambda_j}. $$

Then $\hat{\tau}_{\text{ridge}}$, the coefficient estimate of $\tau$ obtained from minimizing $\mathcal{L}(\bm{\beta}, \tau, \bm{\gamma})$, is equal to the GLS estimator of $\tau$:
$$\hat{\tau}_{\text{GLS}} = \hat{\tau}_{\text{ridge}}.$$
\end{proposition}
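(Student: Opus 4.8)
The plan is to eliminate the auxiliary coefficients $\bm{\gamma}$ from the ridge objective by partial minimization (profiling out $\bm{\gamma}$), and to show that the resulting profiled objective, viewed as a function of $(\bm{\beta},\tau)$ alone, is exactly the generalized least squares criterion $(\bm{Y}-\tau\bm{Z}-\bm{X}\bm{\beta})^\top\bm{\Sigma}^{-1}(\bm{Y}-\tau\bm{Z}-\bm{X}\bm{\beta})$ up to the positive multiplicative constant $\sigma^2$. Since minimizing $\mathcal{L}$ jointly over $(\bm{\beta},\tau,\bm{\gamma})$ can be carried out by first minimizing over $\bm{\gamma}$ and then over $(\bm{\beta},\tau)$, and partial minimization does not alter the optimizing values of the retained coordinates, this yields $\hat{\tau}_{\text{ridge}}=\hat{\tau}_{\text{GLS}}$.

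Concretely, write $\bm{V}=(\bm{v}_1,\ldots,\bm{v}_n)$, which is orthogonal, and $\bm{\Lambda}=\mathrm{diag}(\lambda_1,\ldots,\lambda_n)$, so that $\bm{S}=\bm{V}\bm{\Lambda}\bm{V}^\top$ and $\bm{\Sigma}=\bm{V}(\sigma^2\bm{I}_n+\rho^2\bm{\Lambda})\bm{V}^\top$. Setting $\bm{r}:=\bm{Y}-\tau\bm{Z}-\bm{X}\bm{\beta}$, the loss becomes
\[
\mathcal{L}(\bm{\beta},\tau,\bm{\gamma}) = \|\bm{r}-\bm{V}\bm{\gamma}\|^2 + \tfrac{\sigma^2}{\rho^2}\,\bm{\gamma}^\top\bm{\Lambda}^{-1}\bm{\gamma}.
\]
First I would fix $(\bm{\beta},\tau)$ and minimize over $\bm{\gamma}$: differentiating and using $\bm{V}^\top\bm{V}=\bm{I}_n$ gives $\hat{\bm{\gamma}}=\bm{P}\bm{V}^\top\bm{r}$ with $\bm{P}=(\bm{I}_n+\tfrac{\sigma^2}{\rho^2}\bm{\Lambda}^{-1})^{-1}$ a diagonal shrinkage matrix with entries $P_{jj}=\rho^2\lambda_j/(\rho^2\lambda_j+\sigma^2)$. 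Substituting $\hat{\bm{\gamma}}$ back, both the residual-sum-of-squares term and the penalty term become quadratic forms $\bm{r}^\top\bm{V}\bm{G}\bm{V}^\top\bm{r}$ with diagonal $\bm{G}$; a short entrywise computation shows the $j$th diagonal entry of their sum collapses to $\sigma^2/(\sigma^2+\rho^2\lambda_j)$. Since $\sigma^2\bm{\Sigma}^{-1}=\bm{V}\,\mathrm{diag}\!\big(\sigma^2/(\sigma^2+\rho^2\lambda_j)\big)\bm{V}^\top$, the profiled objective equals $\sigma^2\,\bm{r}^\top\bm{\Sigma}^{-1}\bm{r}$; minimizing this over $(\bm{\beta},\tau)$ is precisely the problem in Equation \ref{eqn:gls}, so its $\tau$-component is $\hat{\tau}_{\text{GLS}}$, completing the argument.

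The only mildly delicate point is the algebraic collapse in the substitution step: verifying that $(\bm{I}_n-\bm{P})^2+\tfrac{\sigma^2}{\rho^2}\bm{P}\bm{\Lambda}^{-1}\bm{P}$ simplifies to the single diagonal matrix $\mathrm{diag}\!\big(\sigma^2/(\sigma^2+\rho^2\lambda_j)\big)$ rather than having to carry the two pieces separately. This is routine once everything is expressed in the eigenbasis of $\bm{S}$, and it is exactly the step where the specified ridge penalties $\sigma^2/(\rho^2\lambda_j)$ are essential — any other choice of penalties would fail to reproduce $\bm{\Sigma}^{-1}$. (Equivalently, one may recognize the whole statement as the standard penalized-likelihood/BLUP equivalence for the linear mixed model in which the random effect $\bm{V}\bm{\gamma}$ has covariance $\rho^2\bm{V}\bm{\Lambda}\bm{V}^\top=\rho^2\bm{S}$; but the direct profiling computation keeps the proof self-contained and avoids invoking that machinery.)
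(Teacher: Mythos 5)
Your proof is correct and is essentially the same argument as the paper's: both eliminate $\bm{\gamma}$ and rest on the identity $\sigma^2\bm{\Sigma}^{-1} = \bm{I}_n - \bm{V}\bigl(\bm{I}_n + \tfrac{\sigma^2}{\rho^2}\bm{\Lambda}^{-1}\bigr)^{-1}\bm{V}^T$, which the paper obtains via the Woodbury formula applied to the block normal equations and you obtain equivalently as the diagonal collapse $(\bm{I}_n-\bm{P})^2 + \tfrac{\sigma^2}{\rho^2}\bm{P}\bm{\Lambda}^{-1}\bm{P} = \bm{I}_n-\bm{P}$ after profiling out $\bm{\gamma}$ from the loss. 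Your objective-level profiling is arguably a cleaner presentation of the same block elimination, but it is not a genuinely different route.
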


\begin{proof} Let $\bm{V} = \begin{pmatrix} \bm{v}_1 & \ldots & \bm{v}_n\end{pmatrix}$ and let $\bm{\Lambda}$ denote the $n \times n$ diagonal matrix with $k$th entry $\lambda_k$.
    The ridge regression estimate is equal to 
    \begin{align*}
        \begin{pmatrix} \hat{\bm{\beta}}_{ridge} \\ \hat{\tau}_{ridge} \\
        \hat{\bm{\gamma}}_{ridge} \end{pmatrix} &= \bigg[\begin{pmatrix} \bm{X} &\bm{Z} & \bm{V} \end{pmatrix}^T \begin{pmatrix} \bm{X} & \bm{Z} & \bm{V} \end{pmatrix} + \begin{pmatrix} \bm{0}_{(p +1) \times (p+1)} & \bm{0}_{(p +1) \times n}   \\ \bm{0}_{n \times (p+1)}  & \frac{\sigma^2}{\rho^2}\bm{\Lambda}^{-1} \end{pmatrix} \bigg]^{-1}\begin{pmatrix} \bm{X} & \bm{Z} & \bm{V} \end{pmatrix}^T \bm{Y}.
    \end{align*}

    By the Woodbury formula, \begin{align*}
        \bm{\Sigma}^{-1} = \frac{\rho^2}{\sigma^2}(\bm{I}_n-\bm{V}(\bm{I}_n + \frac{\sigma^2}{\rho^2} \bm{\Lambda}^{-1})^{-1} \bm{V}^T).
    \end{align*}The GLS estimator of $(\beta, \tau)$ is therefore
    \begin{align*}
    \begin{pmatrix} \hat{\beta}_{GLS} \\ \hat{\tau}_{GLS} \end{pmatrix} &= \bigg[\begin{pmatrix} \bm{X} & \bm{Z} \end{pmatrix}^T\begin{pmatrix} \bm{X} & \bm{Z} \end{pmatrix} - \begin{pmatrix} \bm{X} & \bm{Z} \end{pmatrix}^T\bm{V}(\bm{I}_n + \frac{\sigma^2}{\rho^2} \bm{\Lambda}^{-1})^{-1} \bm{V}^T \begin{pmatrix} \bm{X} & \bm{Z} \end{pmatrix}\bigg]^{-1} \\
    &\hspace{2.5in}\bigg[\begin{pmatrix} \bm{X} & \bm{Z} \end{pmatrix}^T \bm{Y} - \begin{pmatrix} \bm{X} & \bm{Z} \end{pmatrix}^T \bm{V}(\bm{I}_n + \frac{\sigma^2}{\rho^2} \bm{\Lambda}^{-1})^{-1} \bm{V}^T \bm{Y}\bigg]. \end{align*}

    The ridge regression estimate is \begin{align*}
        \hat{\beta}_{ridge} &= \bigg[\begin{pmatrix} \bm{X} & \bm{Z} \end{pmatrix}^T \begin{pmatrix} \bm{X} & \bm{Z} \end{pmatrix} - \begin{pmatrix} \bm{X} & \bm{Z} \end{pmatrix}^T \bm{V} (\bm{I}_n + \frac{\sigma^2}{\rho^2} \bm{\Lambda}^{-1})^{-1} \bm{V}^T \begin{pmatrix} \bm{X} & \bm{Z} \end{pmatrix}\bigg]^{-1} \begin{pmatrix} \bm{X} & \bm{Z} \end{pmatrix}^T \bm{Y}\\
        &\hspace{0.25cm}-\bigg[\begin{pmatrix} \bm{X} & \bm{Z} \end{pmatrix}^T \begin{pmatrix} \bm{X} & \bm{Z} \end{pmatrix} - \begin{pmatrix} \bm{X} & \bm{Z} \end{pmatrix}^T \bm{V}(\bm{I}_n + \frac{\sigma^2}{\rho^2} \bm{\Lambda}^{-1})^{-1} \bm{V}^T \begin{pmatrix} \bm{X} & \bm{Z} \end{pmatrix}\bigg]^{-1} \\
        &\hspace{3.5in}\begin{pmatrix} \bm{X} & \bm{Z} \end{pmatrix}^T \bm{V} (\bm{I}_n + \frac{\sigma^2}{\rho^2} \bm{\Lambda}^{-1})^{-1} \bm{V}^T \bm{Y}\\
        &= \hat{\beta}_{GLS}.
    \end{align*}

    In summary, augmenting the regression of $\bm{Y}$ on $\bm{X},\bm{Z}$ with the eigenvectors of $\bm{S}$ and applying ridge penalties to their coefficients---penalizing each eigenvector coefficient inversely to its eigenvalue---yields the GLS estimate of $(\beta, \tau)$. The eigenvectors that are least penalized correspond to the highest eigenvalues. 
    
    As a corollary, the coefficients of the eigenvectors equivalently produced by the ridge regression, $\hat{\bm{\gamma}}_\text{ridge}$, are \begin{align*}
        \hat{\bm{\gamma}}_\text{ridge} &= \bigg(\frac{\sigma^2}{\rho^2} \bm{\Lambda}^{-1} + \bm{V}^T \bm{P}^\perp_{\bm{X}, \bm{Z}} \bm{V}\bigg)^{-1} \bm{V}^T  \bm{P}^\perp_{\bm{X}, \bm{Z}} \bm{Y}
    \end{align*}where the $\bm{P}^\perp_{\bm{X}, \bm{Z}} \bm{V} = \bm{I}_n - \begin{pmatrix} \bm{X} & \bm{Z} \end{pmatrix} \bigg(\begin{pmatrix} \bm{X} & \bm{Z} \end{pmatrix}^T \begin{pmatrix} \bm{X} & \bm{Z} \end{pmatrix}\bigg)^{-1} \begin{pmatrix} \bm{X} & \bm{Z} \end{pmatrix}^T.$ As $\rho^2 \rightarrow \infty$, $\hat{\bm{\gamma}}_\text{ridge}$ converges to $(\bm{V}^T \bm{P}^\perp_{\bm{X}, \bm{Z}} \bm{V})^{-1} \bm{V}^T  \bm{P}^\perp_{\bm{X}, \bm{Z}} \bm{Y}$, the coefficients of $\bm{V}$ that would be obtained from the overspecified ordinary least squares regression of $\bm{Y}$ on $\bm{X}, \bm{Z}, \bm{V}$. As $\rho^2 \rightarrow 0$, $\hat{\bm{\gamma}}_\text{ridge} \rightarrow \bm{0}$.
\end{proof}

\section{Visual diagnostics}
Figure \ref{fig:ASMD_moran} plots the maximum absolute conditional bias arising from $\bm{U}$ against its Moran's I statistic and its coefficient in the outcome model. Figure \ref{fig:consolidated} displays three instances of $\bm{U}$ together with their respective Moran’s I statistics and mean imbalances. 
\begin{SCfigure}[][!ht]
\includegraphics[width=0.4\linewidth]{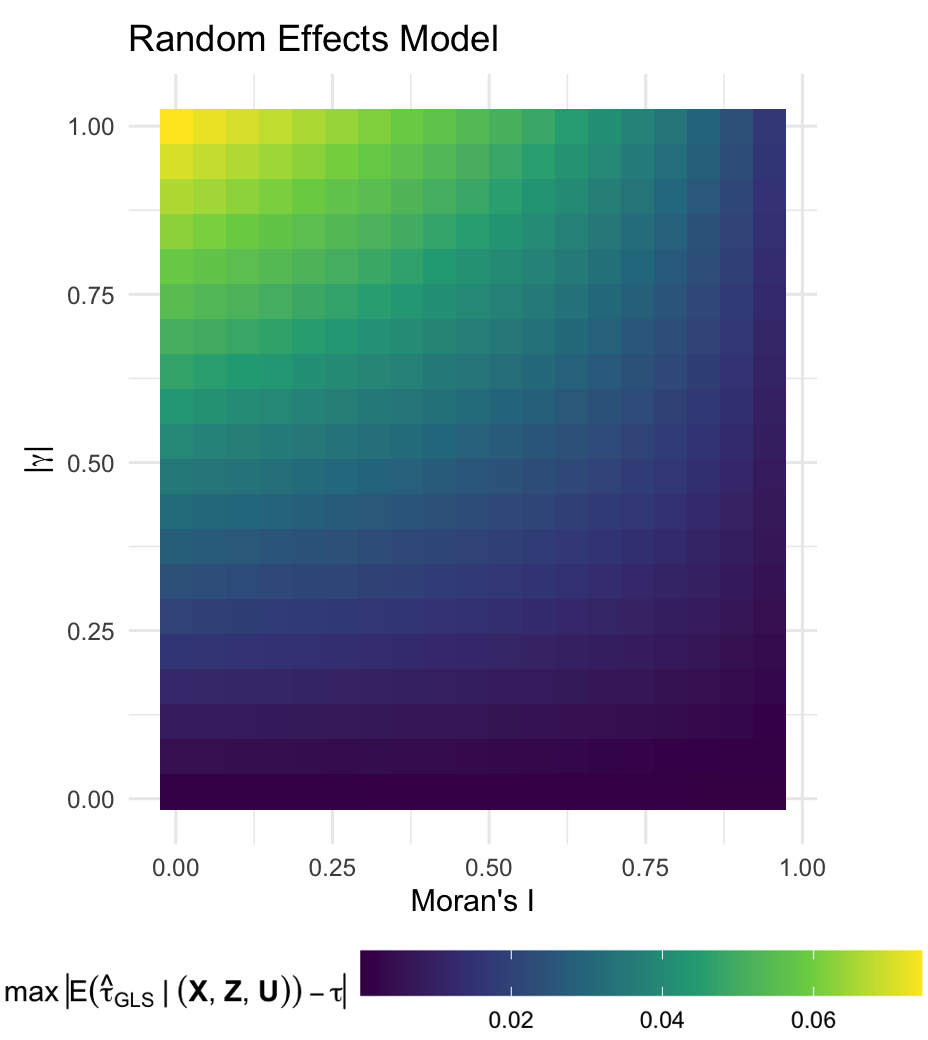}
    \captionsetup{singlelinecheck=off, justification=raggedright}
    \caption{Maximum absolute conditional bias as a function of Moran's I 
      and absolute regression coefficient $|\gamma|$. For a grid of equally-spaced values $c_1 \in (0,1)$ and $c_2 \in (0,1)$, the nonlinear program computes $\max_{\bm{U}} |\E(\hat{\tau}_{GLS}|(\bm{X}, \bm{Z}, \bm{U}) - \tau)|$ subject to $|\gamma| = c_1$ and $\mathcal{I}(\bm{U};\bm{S}) = c_2$. Results for the random effects, conditional autoregressive, and Gaussian process specifications are nearly identical.}
    \label{fig:ASMD_moran}
\end{SCfigure}

\begin{figure}[!ht]
    \centering
    \begin{subfigure}[t]{0.32\textwidth}
        \centering
        \includegraphics[height=1.7in]{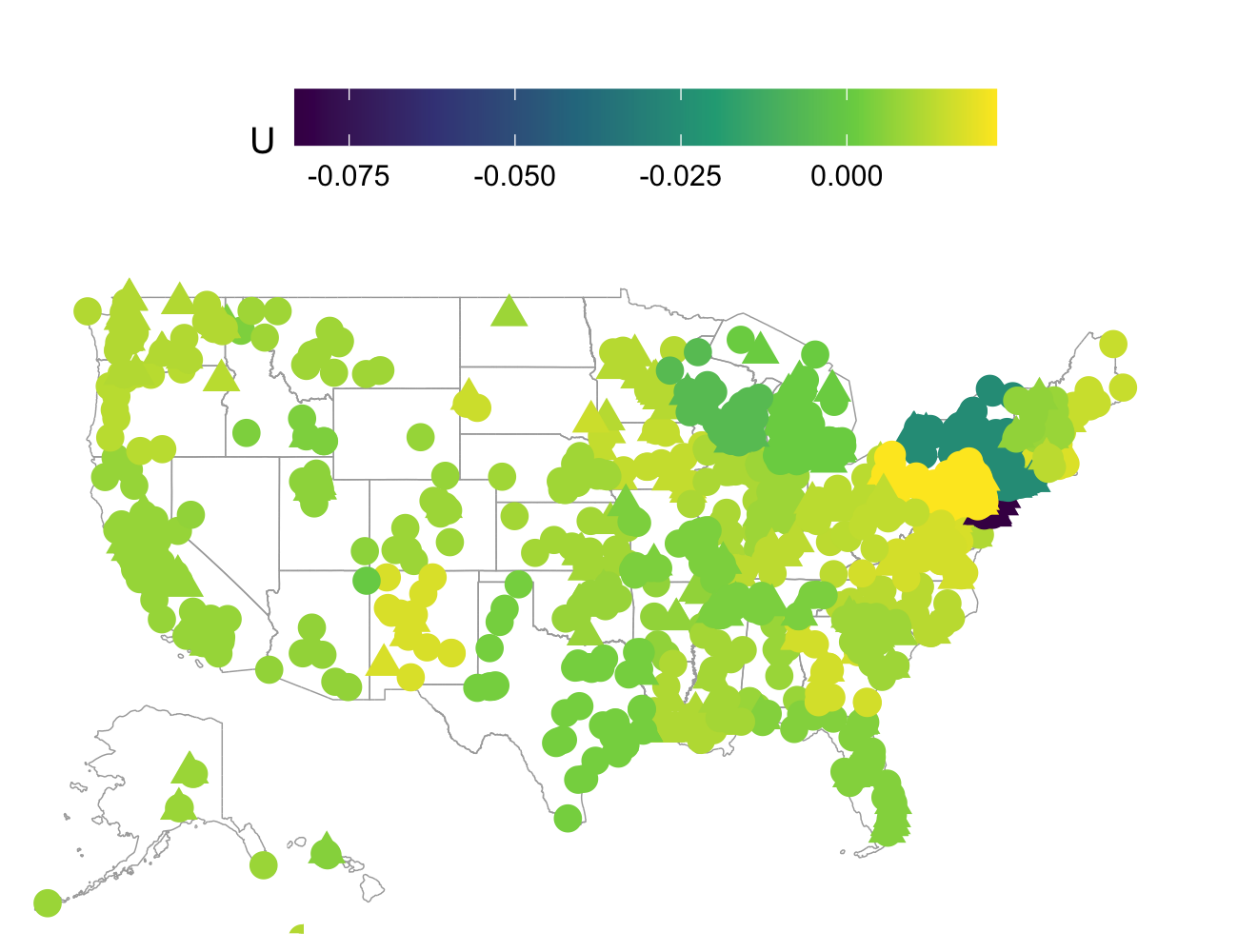}
        \caption{$\mathcal{I}(\bm{U};\bm{S}) = 0.90$, $\sum_{i:Z_i = 1} w_i U_i - \sum_{i:Z_i = 0} w_i U_i = -2.64 \times 10^{-6}$.}
    \end{subfigure}
    \hfill
    \begin{subfigure}[t]{0.32\textwidth}
        \centering
        \includegraphics[height=1.7in]{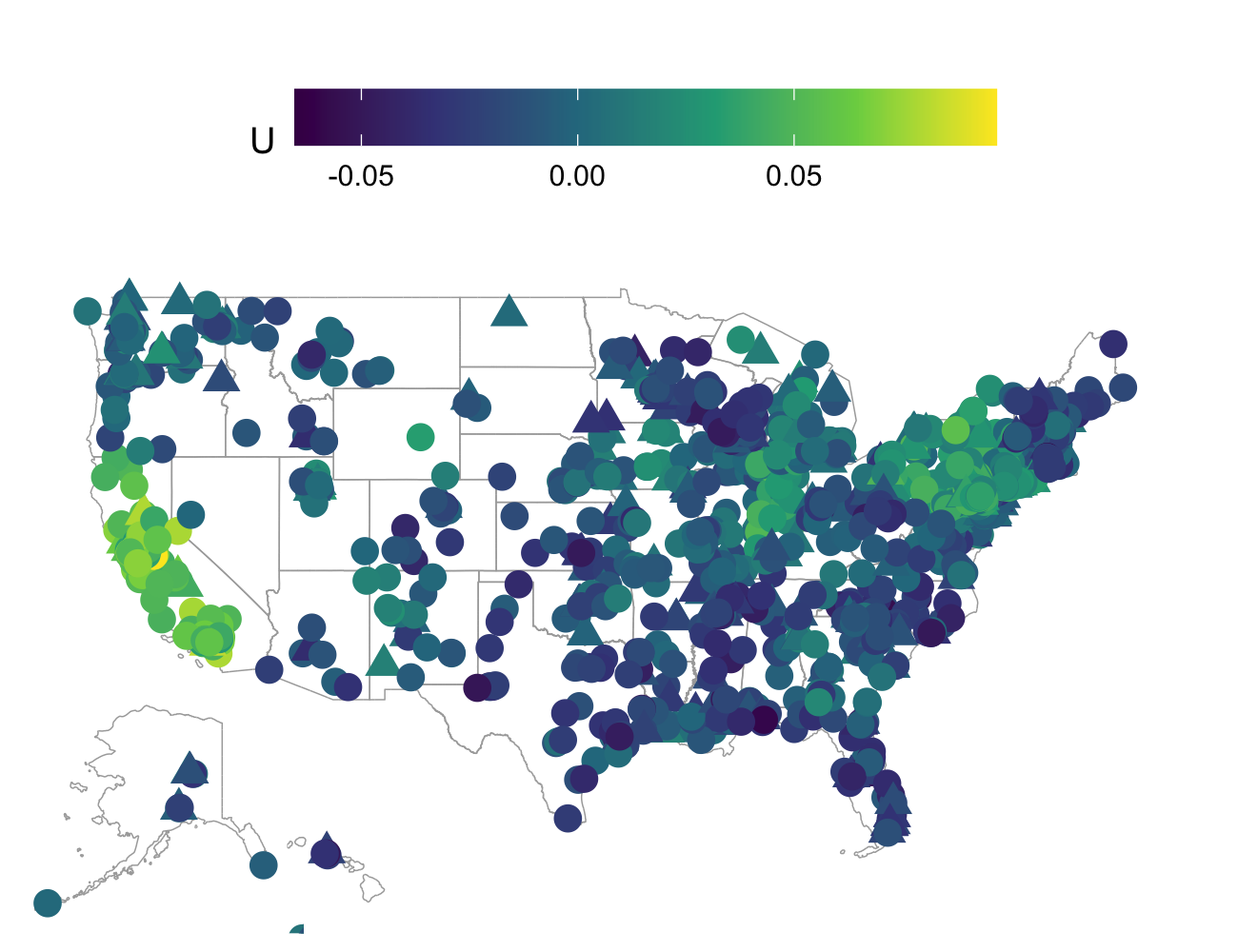}
        \caption{$\mathcal{I}(\bm{U};\bm{S}) = 0.49$, $\sum_{i:Z_i = 1} w_i U_i - \sum_{i:Z_i = 0} w_i U_i = 1.61  \times 10^{-3}$.}
    \end{subfigure}
    \hfill
    \begin{subfigure}[t]{0.32\textwidth}
        \centering
        \includegraphics[height=1.7in]{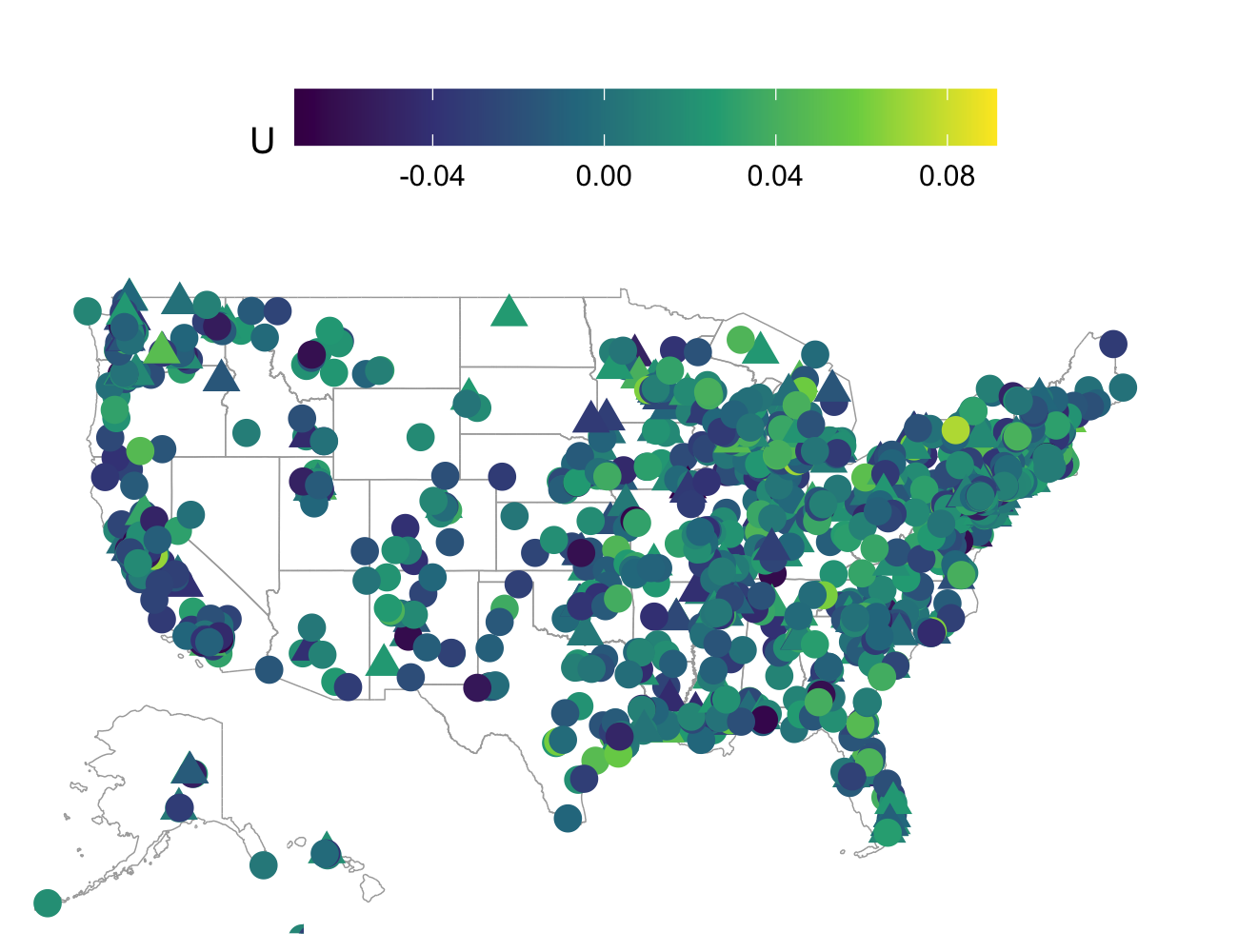}
        \caption{$\mathcal{I}(\bm{U};\bm{S}) = 0.03$, $\sum_{i:Z_i = 1} w_i U_i - \sum_{i:Z_i = 0} w_i U_i = 3.27 \times 10^{-3}.$}
    \end{subfigure}
        
    \begin{subfigure}[t]{0.32\textwidth}
        \centering
        \includegraphics[height=1.7in]{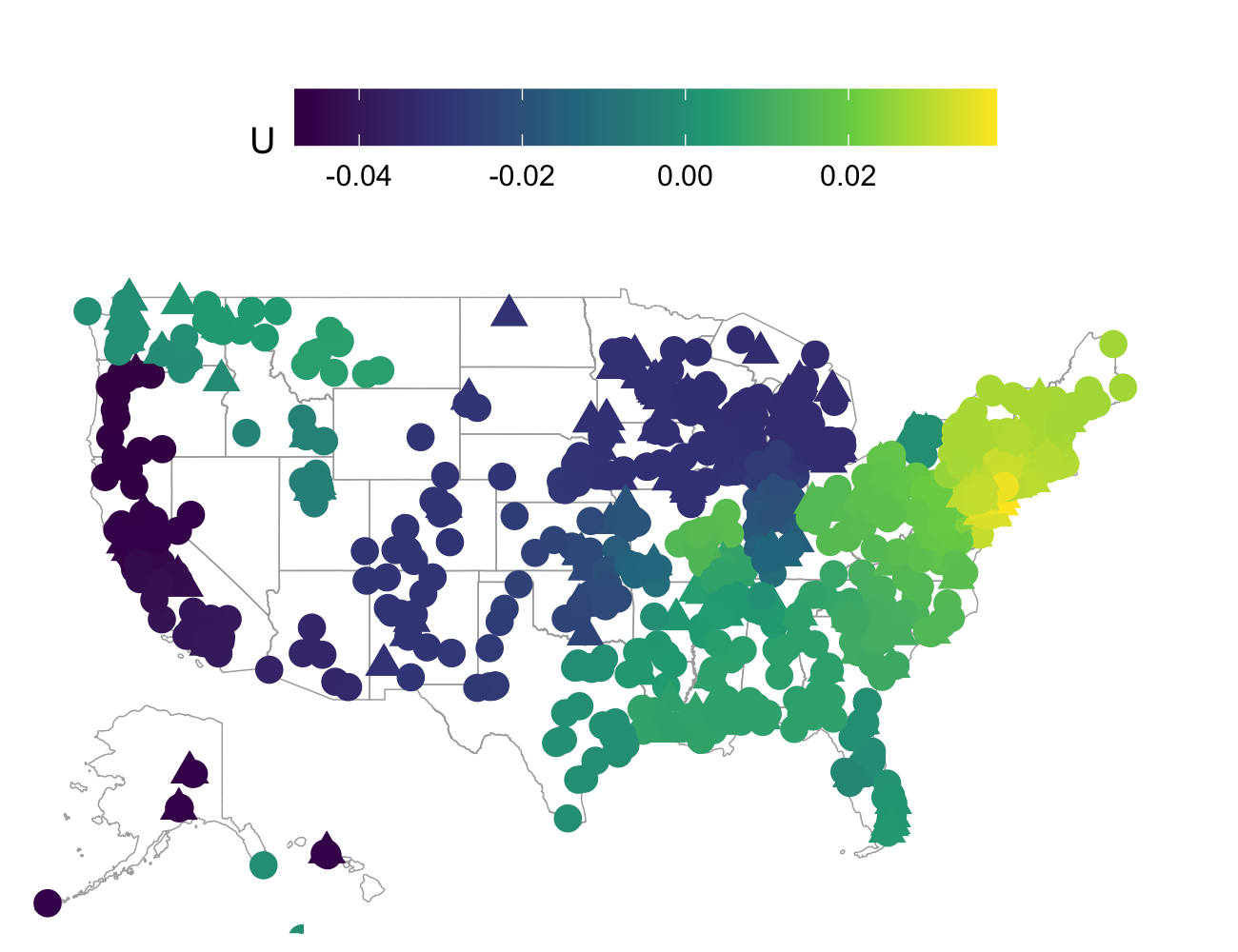}
        \caption{$\mathcal{I}(\bm{U};\bm{S}) = 0.95$, $\sum_{i:Z_i = 1} w_i U_i - \sum_{i:Z_i = 0} w_i U_i = 2.15 \times 10^{-6}$.}
    \end{subfigure}
    \hfill
    \begin{subfigure}[t]{0.32\textwidth}
        \centering
        \includegraphics[height=1.7in]{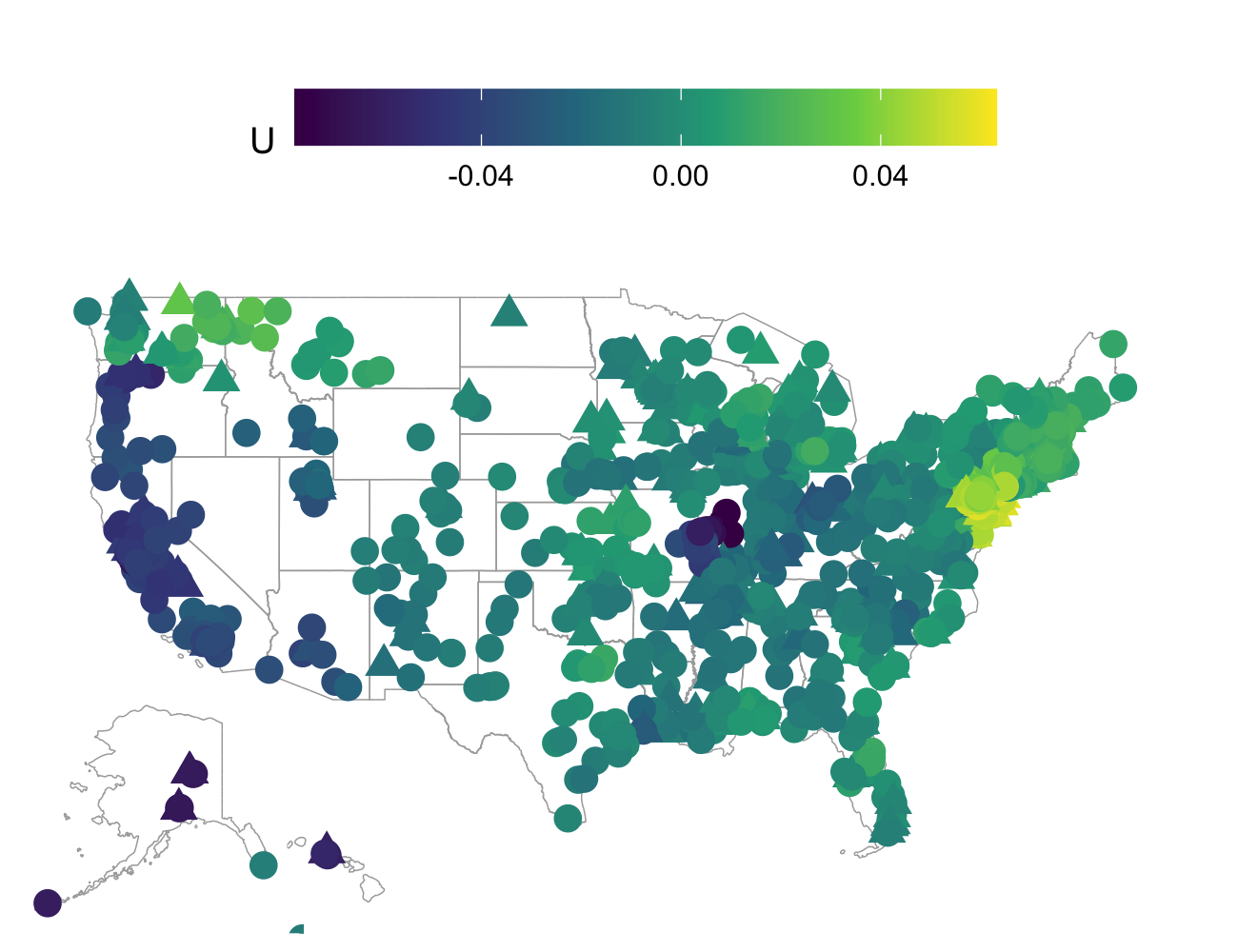}
        \caption{$\mathcal{I}(\bm{U};\bm{S}) = 0.53$, $\sum_{i:Z_i = 1} w_i U_i - \sum_{i:Z_i = 0} w_i U_i = 2.12 \times 10^{-4}$.}
    \end{subfigure}
    \hfill
    \begin{subfigure}[t]{0.32\textwidth}
        \centering
        \includegraphics[height=1.7in]{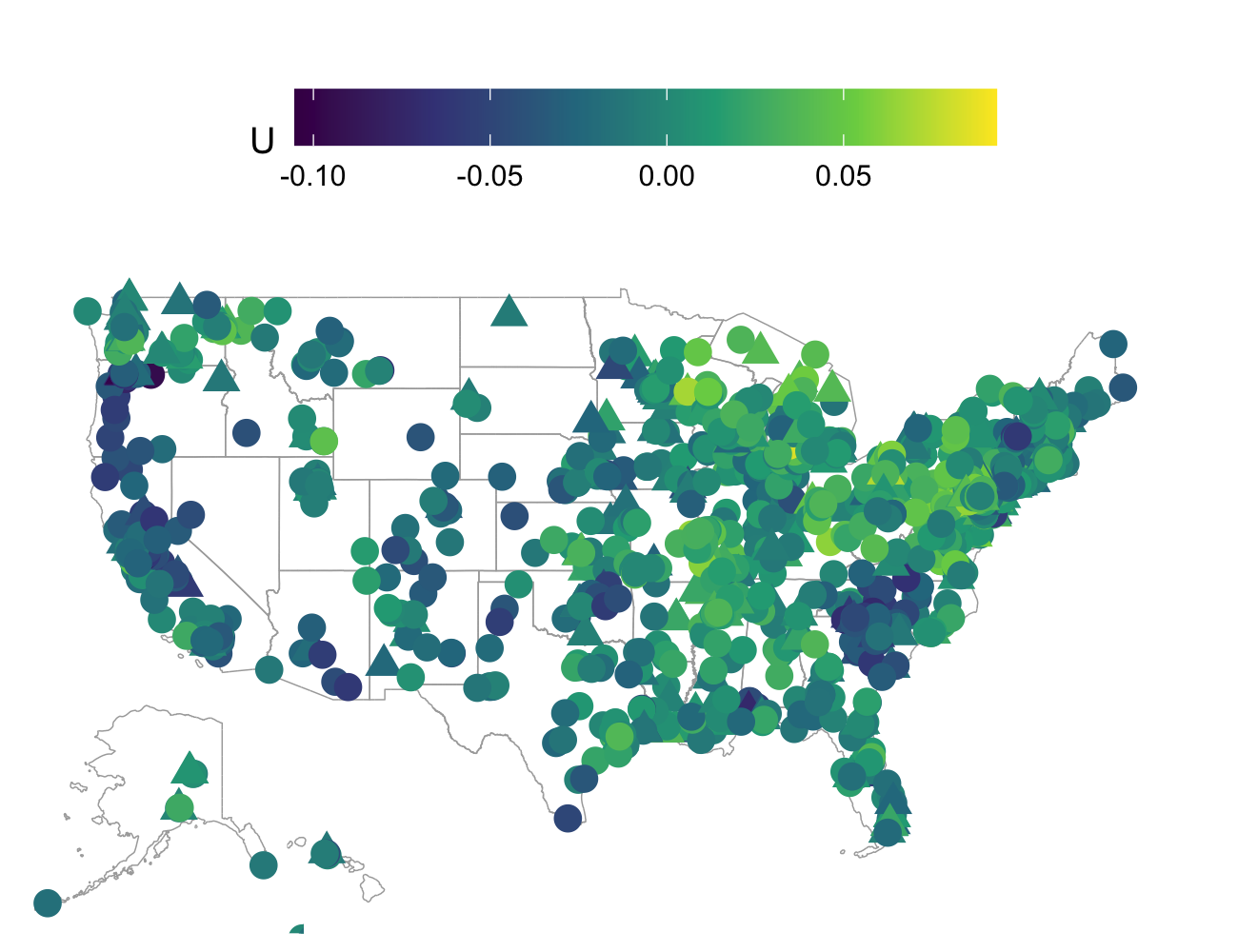}
        \caption{$\mathcal{I}(\bm{U};\bm{S}) = 0.09$, $\sum_{i:Z_i = 1} w_i U_i - \sum_{i:Z_i = 0} w_i U_i = 9.57\times 10^{-4}$.}
    \end{subfigure}

    \begin{subfigure}[t]{0.32\textwidth}
        \centering
        \includegraphics[height=1.7in]{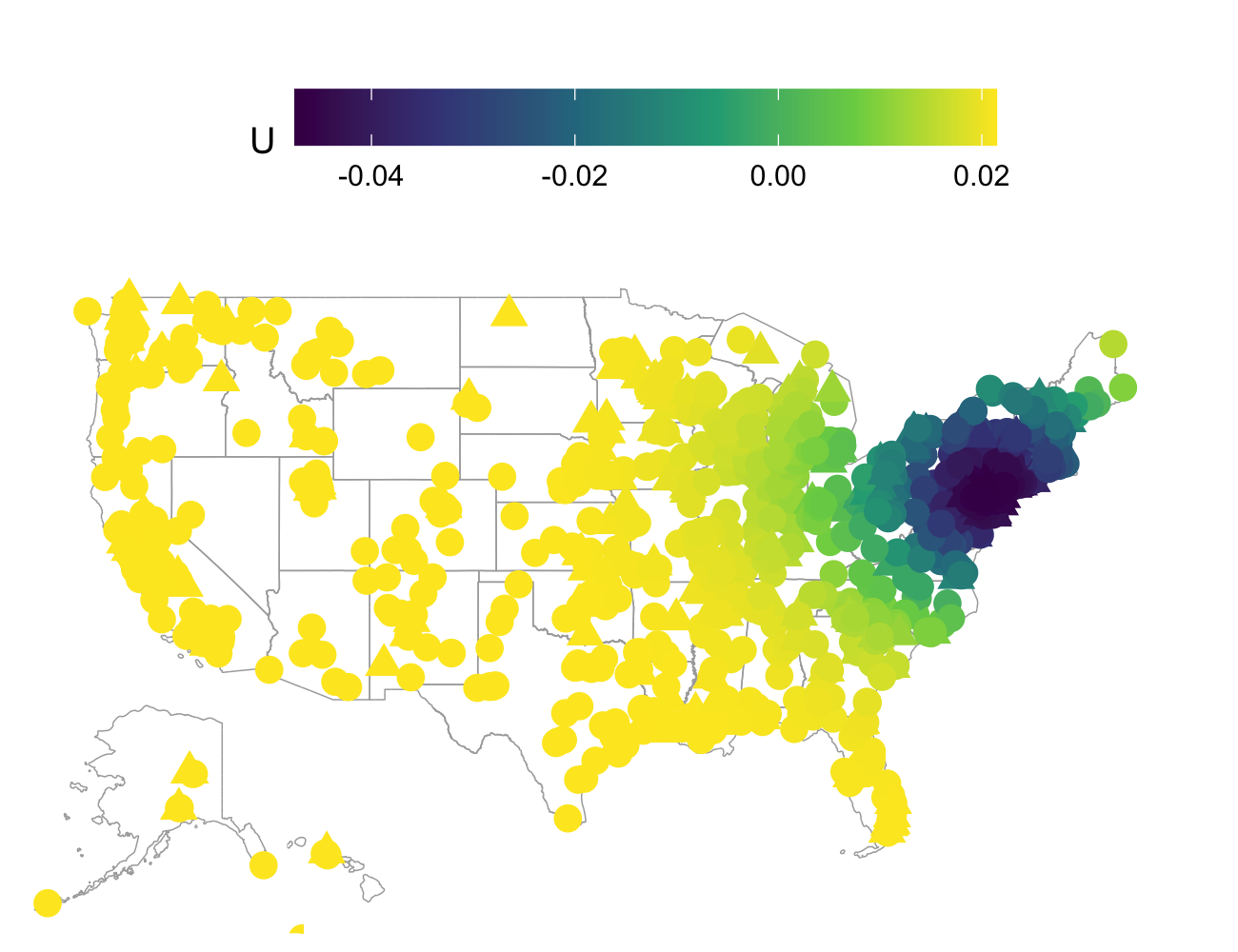}
        \caption{$\mathcal{I}(\bm{U};\bm{S}) = 0.72$, $\sum_{i:Z_i = 1} w_i U_i - \sum_{i:Z_i = 0} w_i U_i = 1.42 \times 10^{-6}.$}
    \end{subfigure}
    \hfill
    \begin{subfigure}[t]{0.32\textwidth}
        \centering
        \includegraphics[height=1.7in]{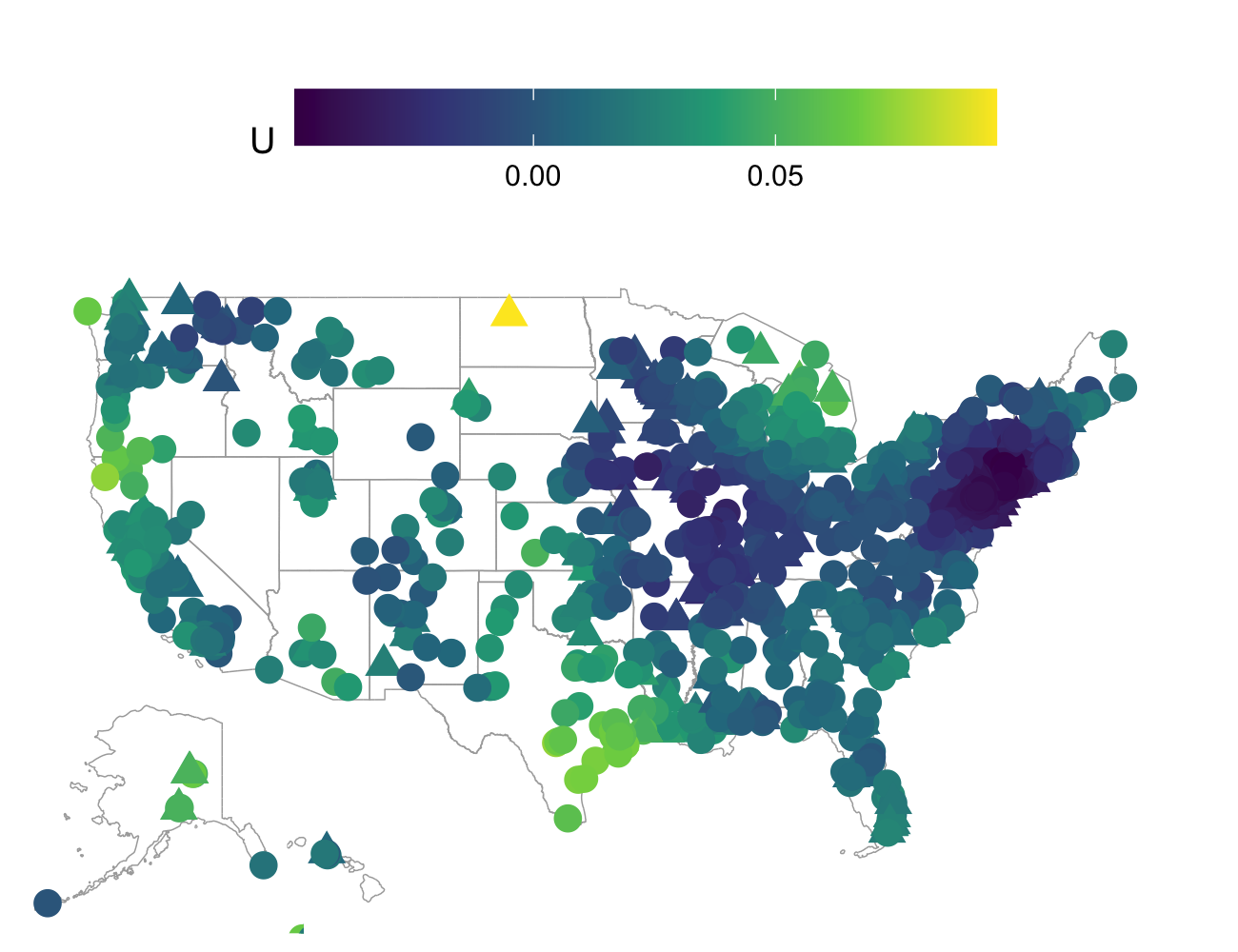}
        \caption{$\mathcal{I}(\bm{U};\bm{S}) = 0.46$, $\sum_{i:Z_i = 1} w_i U_i - \sum_{i:Z_i = 0} w_i U_i = 1.28 \times 10^{-4}.$}
    \end{subfigure}
    \hfill
    \begin{subfigure}[t]{0.32\textwidth}
        \centering
        \includegraphics[height=1.7in]{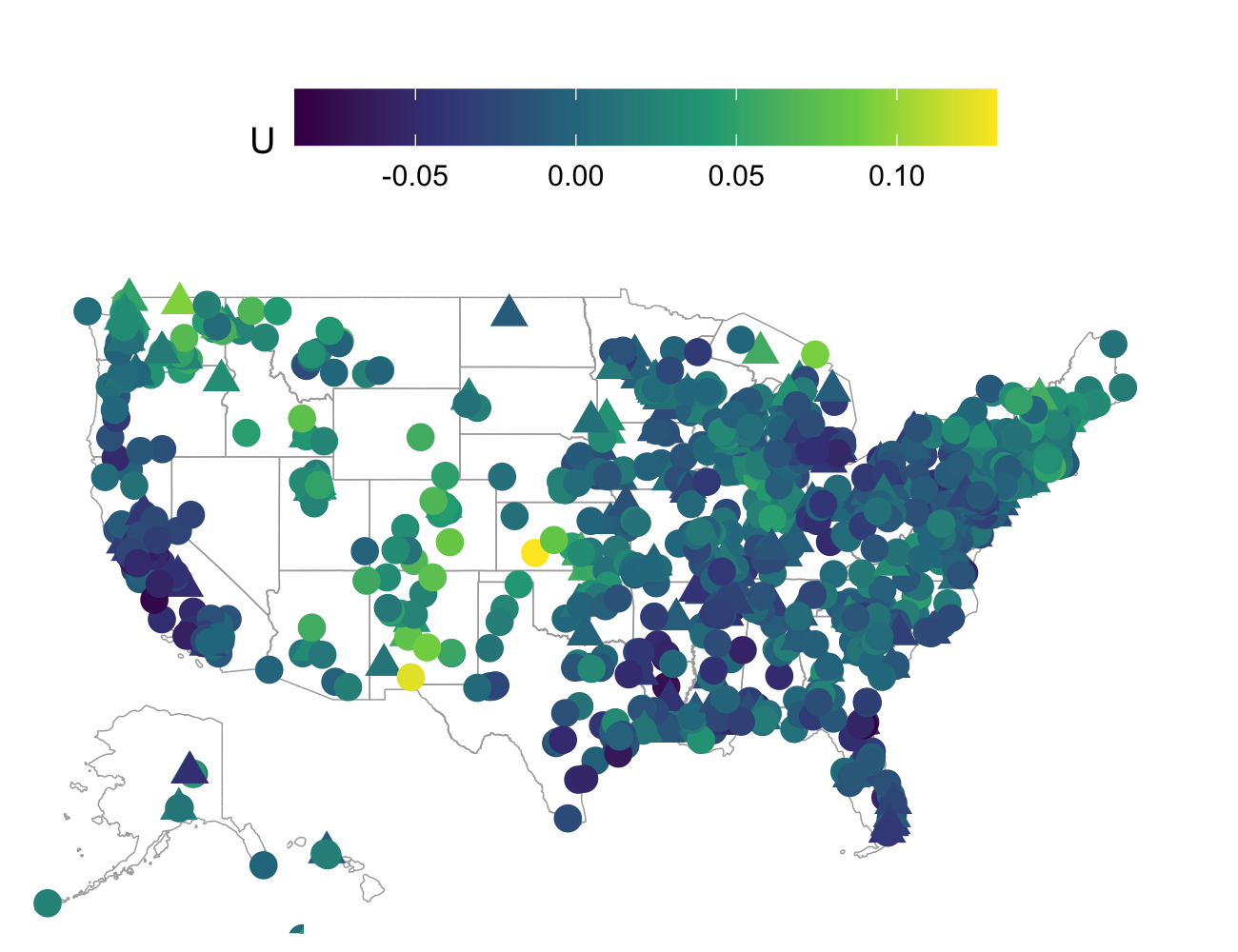}
        \caption{$\mathcal{I}(\bm{U};\bm{S}) = 0.074$, $\sum_{i:Z_i = 1} w_i U_i - \sum_{i:Z_i = 0} w_i U_i = -8.46 \times 10^{-4}$.}
    \end{subfigure}
    
    \caption{Three examples of unmeasured confounders with Moran's I and their mean imbalances. 
    \textbf{Top row} shows three examples from the random effects model, 
    \textbf{middle row} shows three examples from the conditional autoregressive model, and 
    \textbf{bottom row} shows three examples from the Gaussian process model. By our construction  $\bar{U} = 0$ and $\sum_{i = 1}^n U_i^2 = 1$.}
    \label{fig:consolidated}
\end{figure}

\clearpage

\section{Spatially localized weighting}
\label{sec:spatially-localized-weighting}
Using data from the Superfund and birth outcomes study, Figure \ref{fig:spatially-localized-weights} plots the absolute implied weight for each site ($n = 1429$) against a measure of proximity to sites with the opposite treatment status. As illustrated in Figures \ref{fig:impliedweights} and \ref{fig:spatially-localized-weights}, the implied weights are largest for remediated sites most proximal to non-remediated sites, and similarly, for non-remediated sites most proximal to remediated ones. We refer to this property as \textit{spatial localization.}

We now describe how the weighting problem in Proposition \ref{thm:minimal-weighting} gives rise to a spatially localized contrast. Spatial localization follows from the following reformulation of the weighting problem:
\begin{align*}
        &\min_{\bm{l} \in \mathbb R} \bigg(\sum_{i = 1}^n l_i^2\bigg)\bigg[\sigma^2 + \lambda_1 \rho^2 \mathcal{I}(\bm{l};\bm{S})\bigg]\\
        &\text{subject to} \sum_{i:Z_i = 1} w_i = \sum_{i:Z_i = 0} w_i = 1, \sum_{i:Z_i = 1} w_i \bm{X}_i = \sum_{i:Z_i = 0} w_i \bm{X}_i
    \end{align*}
    where $l_i = w_i$ if $Z_i = 1$ and $l_i = -w_i$ if $Z_i = 0$. We denote $\bm{l} = (l_1, \ldots, l_n)^T$. For a fixed value of the weight dispersion $\sum_{i = 1}^n l_i^2 = \sum_{i = 1}^n w_i^2$, the minimization problem reduces to minimizing the Moran's I of the transformed weights $\bm{l}$. In other words, $\bm{l}$ is constructed to exhibit minimal spatial autocorrelation, and sharp, localized variation in $\bm{l}$ is observed across space. Consequently, locations with large negative values of $\bm{l}$ (typically control units with large positive weight) tend to be situated near locations with large positive values of $\bm{l}$ (typically treated units with large positive weight). This yields a spatially localized weighted contrast that prioritizes comparisons between the most proximal treated and control units.

\begin{figure}[!ht]
    \centering
    \includegraphics[width=0.57\linewidth]{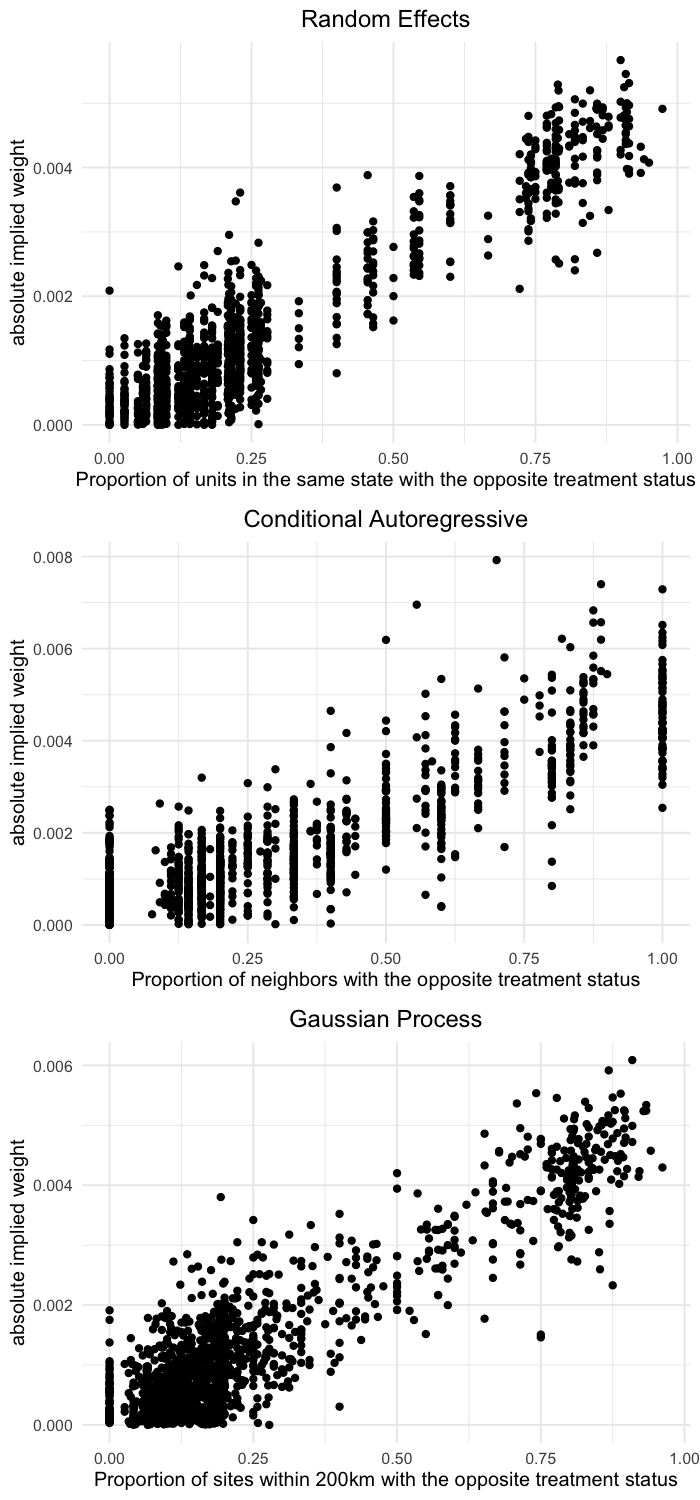}
    \caption{Absolute implied weight plotted against a measure of spatial proximity to sites with the opposite treatment status ($n = 1429$).}
    \label{fig:spatially-localized-weights}
\end{figure}

\clearpage 
\section{Bias of spatial regression under violations of linearity}
\label{sec:violations_linearity}
If linearity does not hold, i.e. the outcome model is not $Y_i = \bm{\beta}^T \bm{X}_i + \tau Z_i + \gamma U_i + \epsilon_i$ for $i = 1, \ldots, n$, then $\hat{\tau}_{GLS}$ can exhibit substantial bias in estimating the ATT. We provide two simple examples. 
\begin{enumerate}
    \item Suppose the outcome mean model $m(Z_i, X_i, U_i) = \E(Y_i|Z_i, X_i, U_i)$ includes an interaction between a measured confounder and the unmeasured confounder. Let $X_{ij}$ denote the $j$th covariate of $\bm{X}_i$, and suppose 
    $$m(Z_i, X_i, U_i) = \bm{\beta}^T \bm{X}_i + \tau Z_i + \gamma U_i + \delta X_{ij} U_i.$$
    Then $$\E(\hat{\tau}_{GLS}|\bm{X}, \bm{Z}, \bm{U}) - \tau_{ATT} = \gamma(\sum_{i:Z_i = 1} w_i U_i - \sum_{i:Z_i = 0} w_i U_i) + \delta (\sum_{i:Z_i = 1} w_i X_{ij} U_i - \sum_{i:Z_i = 0} w_i X_{ij} U_i). $$
    If $(X_{1j}U_1, \ldots, X_{nj}U_n)^T$ lies in $\text{col}(\bm{X})$ or has a high Moran's I value, then, by an argument parallel to Proposition \ref{thm:imbalance_spatialauto}, the second term is small and the resulting bias is bounded below a small threshold. Otherwise, the term $\delta (\sum_{i:Z_i = 1} w_i X_{ij} U_i - \sum_{i:Z_i = 0} w_i X_{ij} U_i)$ can be arbitrarily large. 

    \item Suppose the outcome mean model includes an interaction between treatment and a measured confounder. Let $X_{ij}$ denote the $j$th covariate of $\bm{X}_i$, and suppose
    $$m(Z_i, X_i, U_i) = \bm{\beta}^T \bm{X}_i + \tau Z_i + \gamma U_i + \delta Z_i X_{ij}.$$ Then $$\E(\hat{\tau}_{GLS}|\bm{X}, \bm{Z}, \bm{U}) - \tau_{ATT} = \gamma(\sum_{i:Z_i = 1} w_i U_i - \sum_{i:Z_i = 0} w_i U_i) + \delta (\sum_{i:Z_i = 1} w_i X_{ij} - \E(X_{ij}|Z_i = 1)). $$
    The second term may be viewed as a form of generalizability bias. That is, the weights $w_i$ do not balance the covariates towards any commonly seen target covariate profile---for example, $\sum_{i:Z_i = 1} w_i X_{ij} = \sum_{i:Z_i = 0} w_i X_{ij} \neq \frac{1}{n}\sum_{i = 1}^n X_{ij}$ and $\sum_{i:Z_i = 1} w_i X_{ij} =  \sum_{i:Z_i = 0} w_i X_{ij}\neq \frac{1}{n_t}\sum_{i:Z_i = 1} X_{ij}$---so the magnitude of this term can be large. For a discussion on the target population implied by regression, see \citesupp{chattopadhyay2023implied}.  
    
\end{enumerate}

\section{Finite-sample properties of the spatial weighting estimator}
\label{sec:finite-sample-prop}
The following proposition relates the conditional bias of $\hat{\tau}_{SW}$ to the spatial autocorrelation of the unmeasured confounder. 
\begin{proposition}[Finite-sample bias and variance of the spatial weighting estimator]
\label{thm:robust-weighting-bias}
    Suppose that Assumptions \ref{consistency}--\ref{ignorability} hold. Let $m(Z_i, X_i, U_i) = \E(Y_i|Z_i, X_i, U_i)$ denote the outcome mean model and $\text{Var}(Y_i|Z_i = k, X_i, U_i) = \sigma^2_k$ $\forall i$. Without loss of generality, let $\bar{U} = 0$ and $\bm{U}^T \bm{U} = 1$. Further suppose that the following conditions hold: 
    \begin{enumerate}
        \item The Moran's I statistic for $\bm{U}$, relative to the spatial covariance matrix $\bm{S}$, is $\mathcal{I}(\bm{U};\bm{S})$.
        \item The augmenting eigenvectors used to construct $\hat{\tau}_{SW}$ include $H$ eigenvectors of $\bm{S}$ corresponding to the highest eigenvalues. Let $\bm{V}_H \in \mathbb R^{n \times H}$ be the matrix containing these eigenvectors as its columns. 
        \item The outcome mean model $m$ is Lipschitz continuous in its third argument, i.e. $\exists \epsilon_1 > 0$ such that $$|m(0, X_i, U_i) - m(0, X_i, U_i')| \leq \epsilon_1 |U_i - U_i'| \text{ for all }i.$$
        \item The basis functions $B_k$ are rich enough to approximate the outcome mean model when $\bm{U}$ is substituted with its projection onto $\bm{V}_H$, i.e. $\exists \epsilon_2 > 0$ such that $$\max_i |m(0, X_i,  {\bm{V}_H}_i\bm{c}) - \bm{B}(\bm{X}_i^\text{aug})^T \bm{\xi}| < \epsilon_2,$$ where ${\bm{V}_H}_i \in \mathbb R^{1 \times H}$ is the $i$th row of $\bm{V}_H$, $\bm{c} = \bm{V}_H^T \bm{U} \in \mathbb R^{H \times 1}$, and $\bm{\xi} = (\xi_1, \ldots, \xi_K)$ are basis coefficients.  
\end{enumerate}
    Then the absolute bias of the spatial weighting estimator $\hat{\tau}_{SW}$ in estimating $\tau_{ATT}$, conditional on $(\bm{X}, \bm{Z}, \bm{U})$, is: \begin{align*}
|\E(\hat{\tau}_{SW}|\bm{X}, \bm{Z}, \bm{U}) -\tau_{ATT}|&= \bigg\vert\frac{1}{n_t}\sum_{Z_i = 1} m(0, X_i, U_i) - \sum_{Z_i = 0} w_i m(0, X_i, U_i)\bigg\vert\\
    &\leq \epsilon_3 + 2\bigg[\epsilon_2 +  \epsilon_1 \sqrt{\frac{\lambda_1 (1- \mathcal{I}(\bm{U};\bm{S}))}{\lambda_1 - \lambda_{H+1}}}\bigg]
\end{align*}where $\epsilon_3 = \sum_{k = 1}^K |\xi_k| \delta_k$ and the variance of the spatial weighting estimator, conditional on $(\bm{X}, \bm{Z}, \bm{U})$, is: \begin{align*}
    \text{Var}(\hat{\tau}_{SW}|\bm{X}, \bm{Z}, \bm{U}) &= \frac{\sigma^2_1}{n_t} + \sigma^2_0\sum_{Z_i = 0} w_i^2.
\end{align*} 
\end{proposition}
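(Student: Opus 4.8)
The plan is to write the conditional expectation of $\hat{\tau}_{SW}$ in terms of the outcome mean model $m$, cancel the $m(1,\cdot)$ contributions against the in-sample ATT, and then control the remaining discrepancy between the treated average and the weighted control average of $m(0,X_i,U_i)$ by a five-term telescoping decomposition; the residual from projecting $\bm{U}$ off the retained eigenvectors is then bounded using the spectral gap of $\bm{S}$ and the definition of Moran's I. The variance is a direct conditional second-moment computation.

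First I would invoke Assumptions~\ref{consistency}--\ref{ignorability}: consistency gives $Y_i=Y_i(Z_i)$ and ignorability gives $\E(Y_i\mid\bm{X},\bm{Z},\bm{U})=m(Z_i,X_i,U_i)$, so $\E(\hat{\tau}_{SW}\mid\bm{X},\bm{Z},\bm{U})=\tfrac{1}{n_t}\sum_{Z_i=1}m(1,X_i,U_i)-\sum_{Z_i=0}w_i m(0,X_i,U_i)$, the weights being deterministic given the design since they solve an optimization in $(\bm{X}^{\text{aug}},\bm{Z})$ only. Subtracting the realized in-sample ATT $\tau_{ATT}=\tfrac{1}{n_t}\sum_{Z_i=1}\big(m(1,X_i,U_i)-m(0,X_i,U_i)\big)$ cancels the $m(1,\cdot)$ terms and yields the first displayed equality, $\E(\hat{\tau}_{SW}\mid\bm{X},\bm{Z},\bm{U})-\tau_{ATT}=\tfrac{1}{n_t}\sum_{Z_i=1}m(0,X_i,U_i)-\sum_{Z_i=0}w_i m(0,X_i,U_i)$.

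Next, set $\bm{c}=\bm{V}_H^{\top}\bm{U}$ and $\bm{r}=\bm{U}-\bm{V}_H\bm{c}=(\bm{I}_n-\bm{V}_H\bm{V}_H^{\top})\bm{U}$, so $|U_i-{\bm{V}_H}_i\bm{c}|=|r_i|$. Inserting and subtracting $m(0,X_i,{\bm{V}_H}_i\bm{c})$ and $\bm{B}(\bm{X}^{\text{aug}}_i)^{\top}\bm{\xi}$ in both the treated average and the weighted control average splits the bias into five pieces: two Lipschitz-residual terms, each bounded by $\epsilon_1\|\bm{r}\|_2$ via the third condition and Cauchy--Schwarz (using $\sum_{Z_i=0}w_i=1$, $w_i\ge 0$, and $n_t\ge 1$, so both $\sum_{Z_i=0}w_i|r_i|\le\|\bm{r}\|_2$ and $\tfrac{1}{n_t}\sum_{Z_i=1}|r_i|\le\|\bm{r}\|_2$); two basis-approximation terms, each at most $\epsilon_2$ by the fourth condition since the control weights form a probability vector; and the single term $\sum_k\xi_k\big(\tfrac{1}{n_t}\sum_{Z_i=1}B_k(\bm{X}^{\text{aug}}_i)-\sum_{Z_i=0}w_i B_k(\bm{X}^{\text{aug}}_i)\big)$, which is at most $\sum_k|\xi_k|\delta_k=\epsilon_3$ by the balance constraints. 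Summing gives $\le\epsilon_3+2\epsilon_2+2\epsilon_1\|\bm{r}\|_2$. It then remains to bound $\|\bm{r}\|_2$: expanding $\bm{U}$ in the orthonormal eigenbasis and using $\bar U=0$, $\|\bm{U}\|_2^2=1$, we have $\|\bm{r}\|_2^2=\sum_{k>H}(\bm{v}_k^{\top}\bm{U})^2=1-\sum_{k\le H}(\bm{v}_k^{\top}\bm{U})^2$ while $\lambda_1\mathcal{I}(\bm{U};\bm{S})=\bm{U}^{\top}\bm{S}\bm{U}=\sum_{k=1}^{n}\lambda_k(\bm{v}_k^{\top}\bm{U})^2\le\lambda_1\big(1-\|\bm{r}\|_2^2\big)+\lambda_{H+1}\|\bm{r}\|_2^2$; rearranging gives $\|\bm{r}\|_2^2\le\lambda_1\big(1-\mathcal{I}(\bm{U};\bm{S})\big)/(\lambda_1-\lambda_{H+1})$, which combined with the previous display is exactly the claimed bias bound. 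For the variance: conditional on $(\bm{X},\bm{Z},\bm{U})$ the weights are nonrandom and the $Y_i$ are independent with $\Var(Y_i\mid\bm{X},\bm{Z},\bm{U})=\sigma^2_{Z_i}$, so since the treated and control index sets are disjoint, $\Var(\hat{\tau}_{SW}\mid\bm{X},\bm{Z},\bm{U})=\tfrac{1}{n_t^2}\sum_{Z_i=1}\sigma^2_1+\sum_{Z_i=0}w_i^2\sigma^2_0=\sigma^2_1/n_t+\sigma^2_0\sum_{Z_i=0}w_i^2$.

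The one step carrying real content is the residual bound: the insight is that the mass of $\bm{U}$ lying outside $\mathrm{span}(\bm{V}_H)$ is precisely what the spectral gap $\lambda_1-\lambda_{H+1}$ converts into $1-\mathcal{I}(\bm{U};\bm{S})$, which implicitly requires a nontrivial gap $\lambda_{H+1}<\lambda_1$. The remainder is careful bookkeeping in the five-term decomposition, where the main thing to get right is that the factor $2$ in front of $\epsilon_1$ and $\epsilon_2$ emerges from treating the treated average and the weighted control average symmetrically, each contributing one copy of the Lipschitz-residual and basis-approximation errors.
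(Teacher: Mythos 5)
Your proposal is correct and follows essentially the same route as the paper: the same bias identity, the same spectral-gap argument converting $1-\mathcal{I}(\bm{U};\bm{S})$ into a bound on the energy of $\bm{U}$ outside $\mathrm{span}(\bm{V}_H)$, the same use of the balance constraints to get $\epsilon_3$, and the same variance computation. The only cosmetic difference is that you telescope the treated and control sides into five terms and bound each, whereas the paper first forms the uniform bound $\max_i|m(0,X_i,U_i)-\bm{B}(\bm{X}_i^{\text{aug}})^{\top}\bm{\xi}|\le\epsilon_2+\epsilon_1\|\bm{r}\|_2$ and multiplies by the total absolute weight $2$; the resulting bounds coincide term by term.
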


Proposition \ref{thm:robust-weighting-bias} shows that, under mild conditions, the spatial weighting estimator incurs only a small bias when the unmeasured confounder $\bm{U}$ is spatially autocorrelated, i.e. $\mathcal{I}(\bm{U};\bm{S})$ is close to 1. We briefly describe some intuition behind this result. If $\mathcal{I}(\bm{U};\bm{S})$ is large, then it is closely approximated by a linear combination of the $H$ eigenvectors included in the augmenting set. Since the outcome mean function 
$m$ is Lipschitz, replacing $\bm{U}$ with its eigenvector approximation changes 
$m(0, X_i, U_i)$ only slightly. Balancing functions of the augmenting eigenvectors therefore nearly balances $m(0, X_i, U_i)$, which is precisely what keeps the conditional bias small. Additionally, the conditional variance of the spatial weighting estimator depends linearly on $\sum_{Z_i = 0} w_i^2$, which is itself directly related to the variance of the weights, $(\sum_{Z_i = 0} w_i^2 - 1/\sum_{i = 1}^n (1-Z_i))/\sum_{i = 1}^n (1-Z_i)$. 

\begin{proof}
    Let $\bm{V}_H$ denote the matrix of $H$ eigenvectors of $\bm{S}$ included in the augmentation set and $\bm{V}_{-H}$ denote the matrix of the $n-H$ remaining eigenvectors of $\bm{S}$. First, we will bound $|\bm{V}_{-H} \bm{V}_{-H}^T \bm{U}|$, establishing that $\bm{U}$ is well-approximated by the first $H$ eigenvectors if $\bm{U}$ has a high Moran's I. We have
    \begin{align*}
        \bm{U}^T \bm{S} \bm{U} &= \bm{U}^T \bm{V}_H \Lambda_H \bm{V}_H^T \bm{U} + \bm{U}^T \bm{V}_{-H} \Lambda_{-H} \bm{V}_{-H}^T \bm{U}\\
        &\leq \lambda_1 \bm{U}^T \bm{V}_H \bm{V}_H^T \bm{U} + \lambda_{H + 1}\bm{U}^T \bm{V}_{-H} \bm{V}_{-H}^T \bm{U}\\
        &= \lambda_1 (1-\bm{U}^T \bm{V}_{-H} \bm{V}_{-H}^T \bm{U}) + \lambda_{H + 1}\bm{U}^T \bm{V}_{-H} \bm{V}_{-H}^T \bm{U}\\
        &= \lambda_1 + (\lambda_{H+1} - \lambda_1)\bm{U}^T \bm{V}_{-H} \bm{V}_{-H}^T \bm{U}\\
        \implies \bm{U}^T \bm{V}_{-H} \bm{V}_{-H}^T \bm{U} &\leq \frac{\lambda_1 - \bm{U}^T \bm{S} \bm{U}}{\lambda_1 - \lambda_{H +1}}\\
        &= \frac{\lambda_1 (1- \mathcal{I}(\bm{U};\bm{S}))}{\lambda_1 - \lambda_{H+1}}.
    \end{align*}Therefore, as $\mathcal{I}(\bm{U};\bm{S}) \rightarrow 1$, $\bm{U}^T \bm{V}_{-H} \bm{V}_{-H}^T \bm{U} \rightarrow 0$. This provides a bound for $|U_i - {\bm{V}_H}_i\bm{c}|$ where $\bm{c} = \bm{V}^T \bm{U}$. 
    \begin{align*}
        |U_i - {\bm{V}_H}_i\bm{c}| &= |(\bm{V}_{-H}\bm{V}_{-H}^T \bm{U})_i| \leq |\bm{V}_{-H}\bm{V}_{-H}^T \bm{U}| = \sqrt{\bm{U}^T \bm{V}_{-H}\bm{V}_{-H}^T \bm{U}} \leq \sqrt{\frac{\lambda_1 (1- \mathcal{I}(\bm{U};\bm{S}))}{\lambda_1 - \lambda_{H+1}}}.
    \end{align*}Since $m$ is Lipschitz-continuous in its third argument, then \begin{align*}
        |m(0, X_i, U_i) - m(0, X_i, {\bm{V}_H}_i\bm{c})| \leq \epsilon_1 |U_i - {\bm{V}_H}_i\bm{c}| \leq \epsilon_1 \sqrt{\frac{\lambda_1 (1- \mathcal{I}(\bm{U};\bm{S}))}{\lambda_1 - \lambda_{H+1}}}.
    \end{align*}for all $i$. Since $\max_i |m(0, X_i,  {\bm{V}_H}_i\bm{c}) - \bm{B}(\bm{X}_i^\text{aug})^T \bm{\xi}| < \epsilon_2$, \begin{align*}
        \max_i |m(0, X_i,  U_i) - \bm{B}(\bm{X}_i^\text{aug})^T \bm{\xi}| &\leq \max_i |m(0, X_i,  {\bm{V}_H}_i \bm{c}) - \bm{B}(\bm{X}_i^\text{aug})^T \bm{\xi}| \\
        &\hspace{1in}+ \max_i |m(0, X_i, U_i) - m(0, X_i, {\bm{V}_H}_i\bm{c})|\\
        &\leq \epsilon_2 + \epsilon_1 \sqrt{\frac{\lambda_1 (1- \mathcal{I}(\bm{U};\bm{S}))}{\lambda_1 - \lambda_{H+1}}}.
    \end{align*}The weights of the  spatial weighting estimator satisfy \begin{align*}
        \bigg \vert \sum_{i:Z_i = 0} w_i B_k (\bm{X}^{\text{aug}}_i) - \frac{1}{n_t}\sum_{i:Z_i = 1} B_k (\bm{X}^{\text{aug}}_i) \bigg \vert &\leq \delta_k, k = 1,2,\ldots, K,\\
        \bigg \vert \sum_{i = 1}^n (w_i(1-Z_i) - \frac{1}{n_t}Z_i) B_k (\bm{X}^{\text{aug}}_i) \bigg \vert &\leq \delta_k, k = 1,2,\ldots, K.\\
        \implies \bigg \vert \sum_{i = 1}^n (w_i(1-Z_i) - \frac{1}{n_t}Z_i) \bm{B} (\bm{X}^{\text{aug}}_i)^T\bm{\xi} \bigg \vert &= \bigg \vert \sum_{i = 1}^n (w_i(1-Z_i) - \frac{1}{n_t}Z_i) \sum_{k = 1}^K \xi_k B_k (\bm{X}^{\text{aug}}_i) \bigg \vert\\
        &= \bigg \vert \sum_{k = 1}^K \xi_k \bigg(\sum_{i = 1}^n (w_i(1-Z_i) - \frac{1}{n_t}Z_i)  B_k (\bm{X}^{\text{aug}}_i)\bigg) \bigg \vert\\
        &\leq \sum_{k = 1}^K |\xi_k| \delta_k= \epsilon_3.
    \end{align*}
    
    The absolute conditional bias is therefore
    \begin{align*}
        |\E(\hat{\tau}_{SW}|\bm{Z}, \bm{X}, \bm{U}) - \tau_{ATT}| &= \bigg\vert\frac{1}{n_t}\sum_{Z_i = 1} m(0, X_i, U_i) - \sum_{Z_i = 0} w_i m(0, X_i, U_i)\bigg\vert\\
        &= \bigg\vert\sum_{i = 1}^n (\frac{1}{n_t}Z_i - (1-Z_i)w_i) m(0, X_i, U_i)\bigg\vert \\
        &\leq \bigg\vert\sum_{i = 1}^n (\frac{1}{n_t}Z_i - (1-Z_i)w_i) \bm{B}(\bm{X}_i^\text{aug})^T \bm{\xi}\bigg\vert \\
        &\hspace{1in}+ \bigg(\epsilon_2 + \epsilon_1 \sqrt{\frac{\lambda_1 (1- \mathcal{I}(\bm{U};\bm{S}))}{\lambda_1 - \lambda_{H+1}}}\bigg) \sum_{i =1}^n |\frac{1}{n_t}Z_i - (1-Z_i)w_i|\\
        &\leq \epsilon_3 + 2\bigg[\epsilon_2 +  \epsilon_1 \sqrt{\frac{\lambda_1 (1- \mathcal{I}(\bm{U};\bm{S}))}{\lambda_1 - \lambda_{H+1}}}\bigg].
    \end{align*}

    Next, let $$l_i = \begin{cases} -w_i & Z_i = 0 \\ \frac{1}{n_t} & Z_i = 1 \end{cases}.$$ The conditional variance of the spatial weighting estimator is  \begin{align*}
        \text{Var}(\hat{\tau}_{SW}|\bm{Z}, \bm{X}, \bm{U}) &= \text{Var}(\sum_{i =1}^n l_i Y_i| Z_i, \bm{X}_i, U_i)\\
        &= \sum_{i = 1}^n l_i^2 \text{Var}(Y_i| Z_i, \bm{X}_i, U_i)\\
        &= \frac{\sigma^2_1}{n_t} + \sigma^2_0 \sum_{i:Z_i = 0}w_i^2.
    \end{align*}
\end{proof}

\section{Simulation}
\label{sec:simulation}

Using the observed covariates $\bm{X}$ and binary treatment $\bm{Z}$ from our Superfund dataset, we create datasets subject to unmeasured spatial confounding. We estimate the ATT using the spatial weighting estimator of Section \ref{sec:extension}, augmenting the covariate matrix $\bm{X}$ with $J = 150$ eigenvectors. The objective of the simulations is twofold. First, we investigate whether the spatial weighting estimator can mitigate confounding bias from three distinct classes of unmeasured confounders with different patterns of spatial autocorrelation. Second, we evaluate its performance while varying the complexity of the outcome model. All simulation code can be found at \url{https://github.com/NSAPH-Projects/spatial_regression_weighting}.

For $i = 1, \ldots, n = 1429,$ we generate unmeasured confounders $\bm{U}$ using three different mechanisms, each inducing a different spatial autocorrelation structure. For all three mechanisms, we generate $\bm{U}$ with a two-stage procedure: 1) we simulate $U_i \sim \mathcal{N}(Z_i, 0.1^2)$ and 2) we set $U_i = \sum_{j = 1}^n W_{ij} U_j / \sum_{j = 1}^n W_{ij}$. The first class of unmeasured confounders (\textit{cluster-based}) is designed to be constant within states, hence we define $W_{ij} = I(C_i = C_j)$. The second class of unmeasured confounders (\textit{adjacency-based}) is designed to be smooth across adjacencies, hence we define $W_{ij} = (A^{100})_{ij}$. The third class of unmeasured confounders (\textit{distance-based}) is designed to vary smoothly with geographical distance, hence we define $W_{ij} = \text{exp}(-d_{ij}/ 50\text{km})$. Figure \ref{fig:simU} provides a plot of one observation of $\bm{U}$ from each class. 

We further consider three possible outcome models. The linear outcome model generates outcome as $Y_i = \bm{\beta}^T \bm{X}_i + Z_i - 0.2 U_i + \mathcal{N}(0, 0.1^2)$. The linear-interaction outcome model generates outcome as $Y_i = \bm{\beta}^T \bm{X}_i + Z_i - 0.5 U_i + U_i Z_i + \mathcal{N}(0, 0.1^2)$. The nonlinear-interaction outcome model generates outcome as $Y_i = \bm{\beta}^T \bm{X}_i + \tau Z_i + Z_i \sin(U_i) - U_i^2 + U_i Z_i(X_{2i}+1) + 0.3 Z_i X_{1i}^2 + \mathcal{N}(0, 0.1^2)$. We set $\beta = (-0.44,0.46,-0.69,-1.45, 0.57,-1.02,-0.02,-0.94,1.10,-0.48,-0.71,$ $ -0.937, -0.091, -0.576, 0.234, 1.578, 0.936, -0.537)^T$. These coefficients were generated from a standard normal distribution. For each of the $9$ data-generating scenarios produced by the three classes of unmeasured confounders and three outcome models, we create $M = 1000$ datasets of size $n = 1429$, with $\bm{X}$ and $\bm{Z}$ drawn directly from our Superfund dataset.

Under each data-generating scenario, we estimate the ATT using six different approaches. The first approach (\textit{OLS}) fits an ordinary least squares regression of $Y$ on $X$ and $Z$ and extracts the coefficient estimate of $Z$. Approaches two through four employ generalized least squares to fit the random effects (\textit{RE}), conditional autoregressive (\textit{CAR}), and Gaussian process (\textit{GP}) models described in Section \ref{sec:threeexamples}, likewise extracting the coefficient estimate of $Z$. The fifth approach (\textit{spatial coordinates}) incorporates the latitude and longitude of each Superfund site as two additional covariates in an augmented inverse propensity score estimator, following the framework proposed by \citesupp{gilbert2021causal}. For additional details on this estimator, see below. The sixth approach (\textit{spatial weighting}) implements our proposed methodology from Section \ref{sec:extension}. In order to adjust for a heterogeneous class of unmeasured confounders with spatial autocorrelation, we augment $\bm{X}$ with $50$ eigenvectors corresponding to the largest eigenvalues from each spatial regression model above, resulting in $J = 150$. 
We use the package \url{sbw} for implementation \citepsupp{zubizarreta2015stable}. 

We evaluate the performance of the six approaches in estimating the ATT through bias and root mean squared error. Figure \ref{fig:boxplot} presents boxplots of the ATT estimates and Table \ref{tab:simresults} presents numerical results. 

\begin{figure}[!ht]
    \centering
    \includegraphics[width=\linewidth]{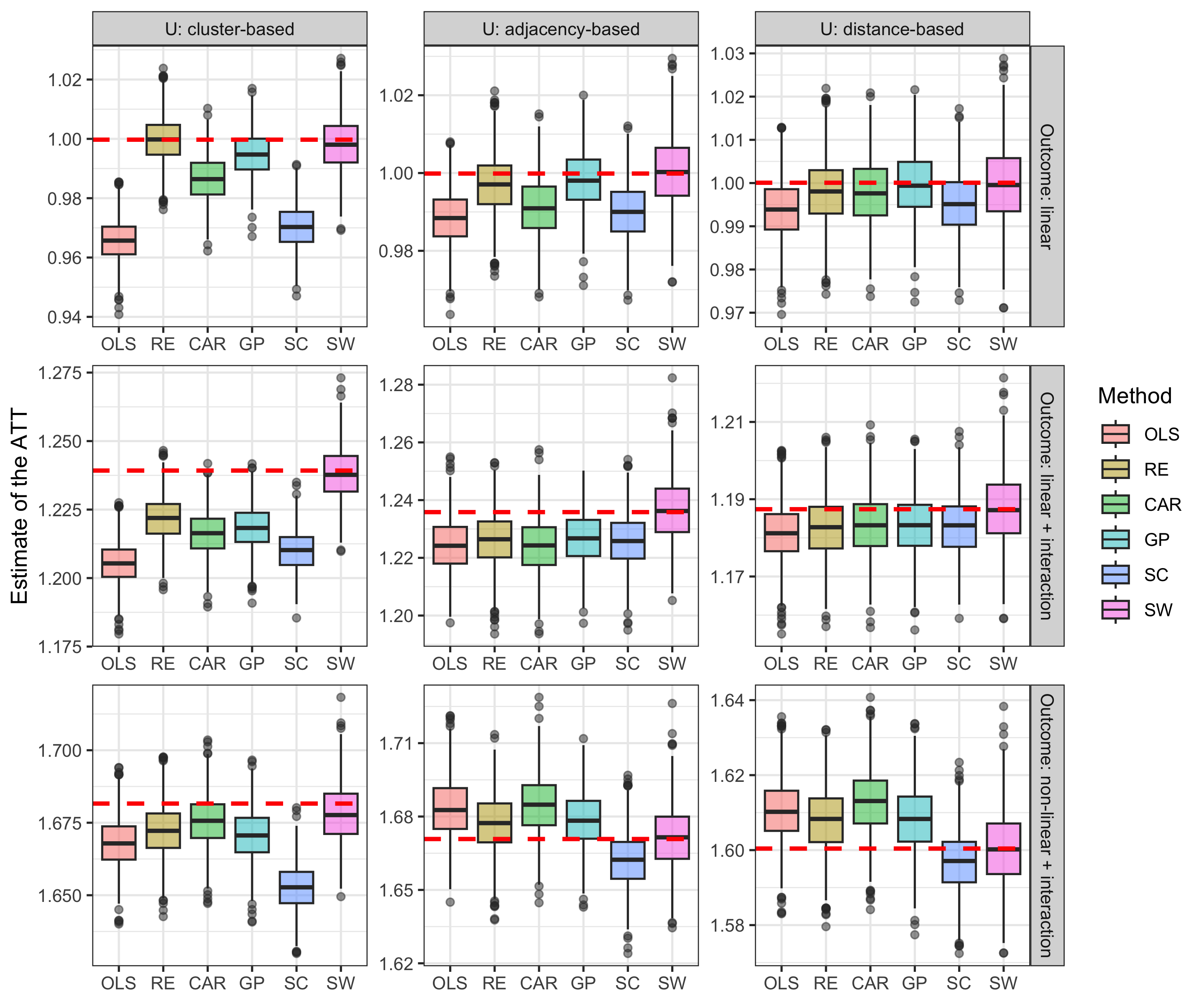}
    \caption{Estimates of the ATT derived from six approaches. OLS, ordinary least squares; RE, random effects; CAR, conditional autoregressive; GP, Gaussian process; SC, spatial coordinates; SW, spatial weighting. The red dotted line represents the true ATT. }
    \label{fig:boxplot}
\end{figure}

We briefly highlight two key observations. First, the results under the linear outcome model, in the first row of the boxplot grid, provide empirical support for our findings in Section \ref{sec:space-diagnostics}. Specifically, the random effects model yields unbiased estimates under cluster-based confounding and the Gaussian process model yields unbiased estimates under distance-based confounding. 
However, when nonlinearities or heterogeneity of treatment effects by $\bm{U}$ is present in the outcome model, all three estimators suffer from bias. 

Adjustment for spatial coordinates can reduce bias from unmeasured spatial confounding when the confounder is a measurable, or nearly continuous, function of spatial coordinates and the treatment exhibits non-spatial variation \citepsupp{gilbert2021causal}. These assumptions are not explicitly
satisfied by any of the confounding mechanisms considered, and the spatial coordinates estimator exhibits substantial bias across most data-generating scenarios. 

In contrast, our proposed spatial weighting estimator remains approximately unbiased across all nine data‐generating scenarios and---in terms of bias---outperforms each competing model in seven of the nine scenarios, although sometimes exhibits larger uncertainty. These simulation results demonstrate that the spatial weighting estimator can flexibly accommodate complex nonlinearities and heterogeneity of treatment effects, while simultaneously adjusting for unmeasured confounders that are constant within states, smooth across adjacencies, or vary continuously over space.

\subsection*{Spatial coordinates estimator}
For the \textit{spatial coordinates} approach, we follow \citesupp{gilbert2021causal} and incorporate latitude and longitude as two additional covariates into an augmented inverse propensity weighting (AIPW) estimator. Specifically, we estimate the propensity score $\pi(\bm{X}_i, \text{lat}_i, \text{long}_i) = \mathbb P(Z_i = 1|\bm{X}_i, \text{lat}_i, \text{long}_i)$ and outcome model $m_0(\bm{X}_i, \text{lat}_i, \text{long}_i) = \E(Y_i|Z_i = 0, \bm{X}_i, \text{lat}_i, \text{long}_i)$ with the SuperLearner package in R using a combination of candidate learners, including multivariate adaptive regression splines (SL.earth), generalized additive models (SL.gam), generalized linear
models (SL.glm), mean regression (SL.mean), and an interaction model (SL.glm.interaction). The AIPW estimator of the ATT is \citepsupp{mercatanti2014debit, moodie2018doubly}:
\begin{align*}
    \hat{\tau}_{AIPW} &= \frac{1}{\sum_{i = 1}^n Z_i}\bigg[\sum_{i = 1}^n Y_i Z_i - \frac{Y_i(1-Z_i)\hat{\pi}(\bm{X}_i, \text{lat}_i, \text{long}_i) + \hat{m}_0(\bm{X}_i, \text{lat}_i, \text{long}_i)(Z_i - \hat{\pi}(\bm{X}_i, \text{lat}_i, \text{long}_i))}{1-\hat{\pi}(\bm{X}_i, \text{lat}_i, \text{long}_i)}\bigg].
\end{align*}

\begin{figure}[!ht]
    \centering
    \begin{subfigure}[t]{0.32\textwidth}
        \centering
        \includegraphics[height=1.5in]{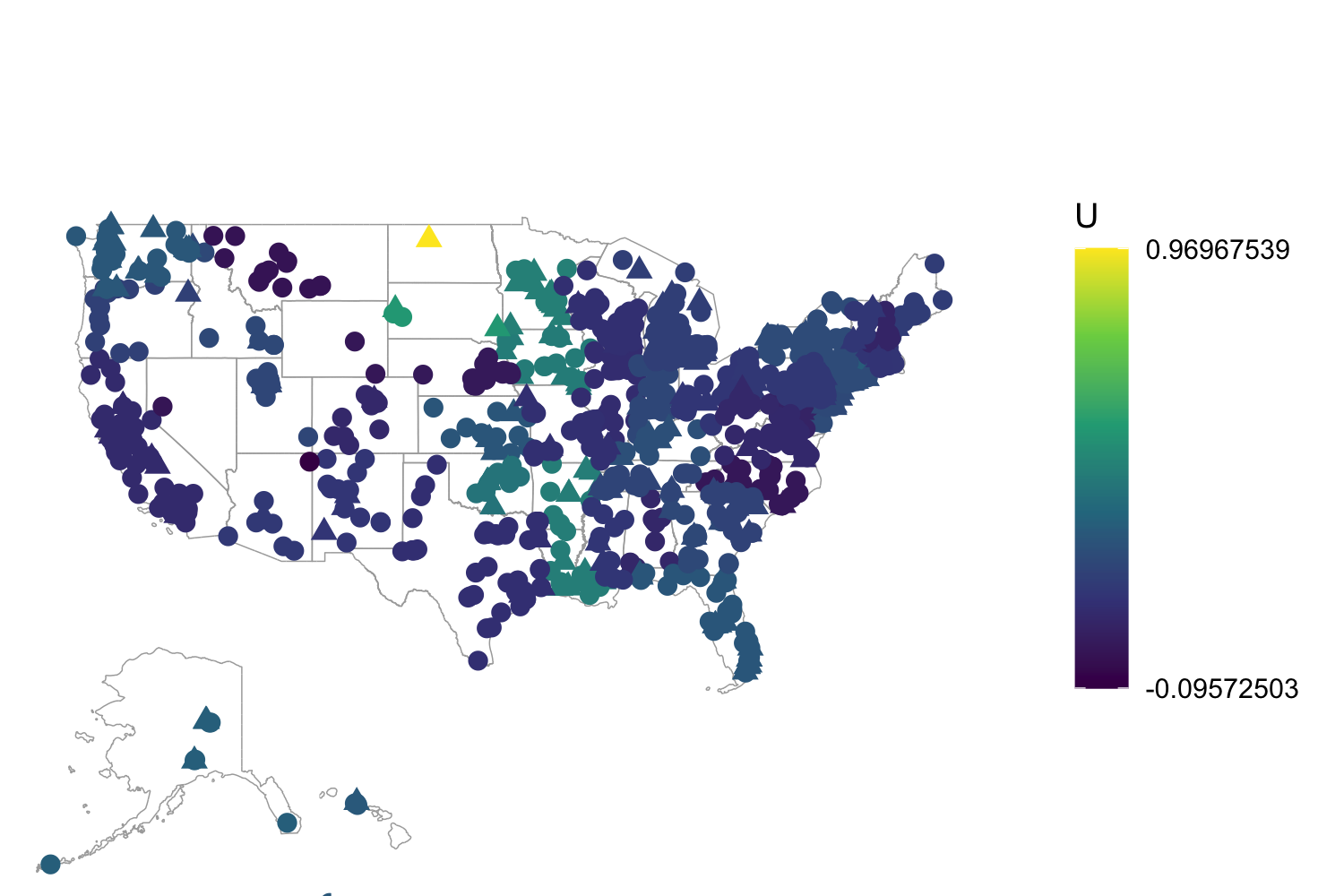}
        \caption{Cluster-based.}
    \end{subfigure}
    \hfill
    \begin{subfigure}[t]{0.32\textwidth}
        \centering
        \includegraphics[height=1.5in]{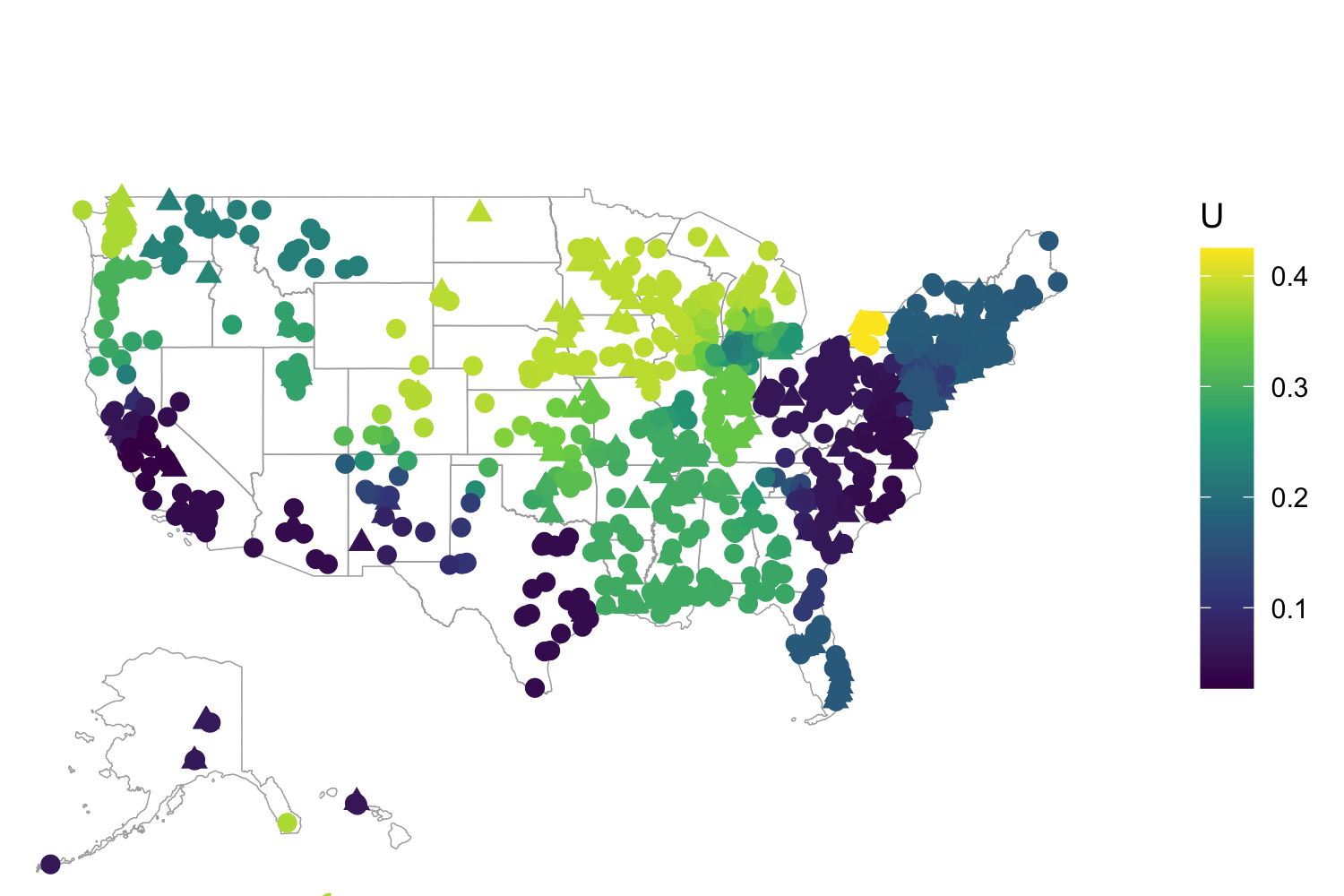}
        \caption{Adjacency-based.}
    \end{subfigure}
    \hfill
    \begin{subfigure}[t]{0.32\textwidth}
        \centering
        \includegraphics[height=1.5in]{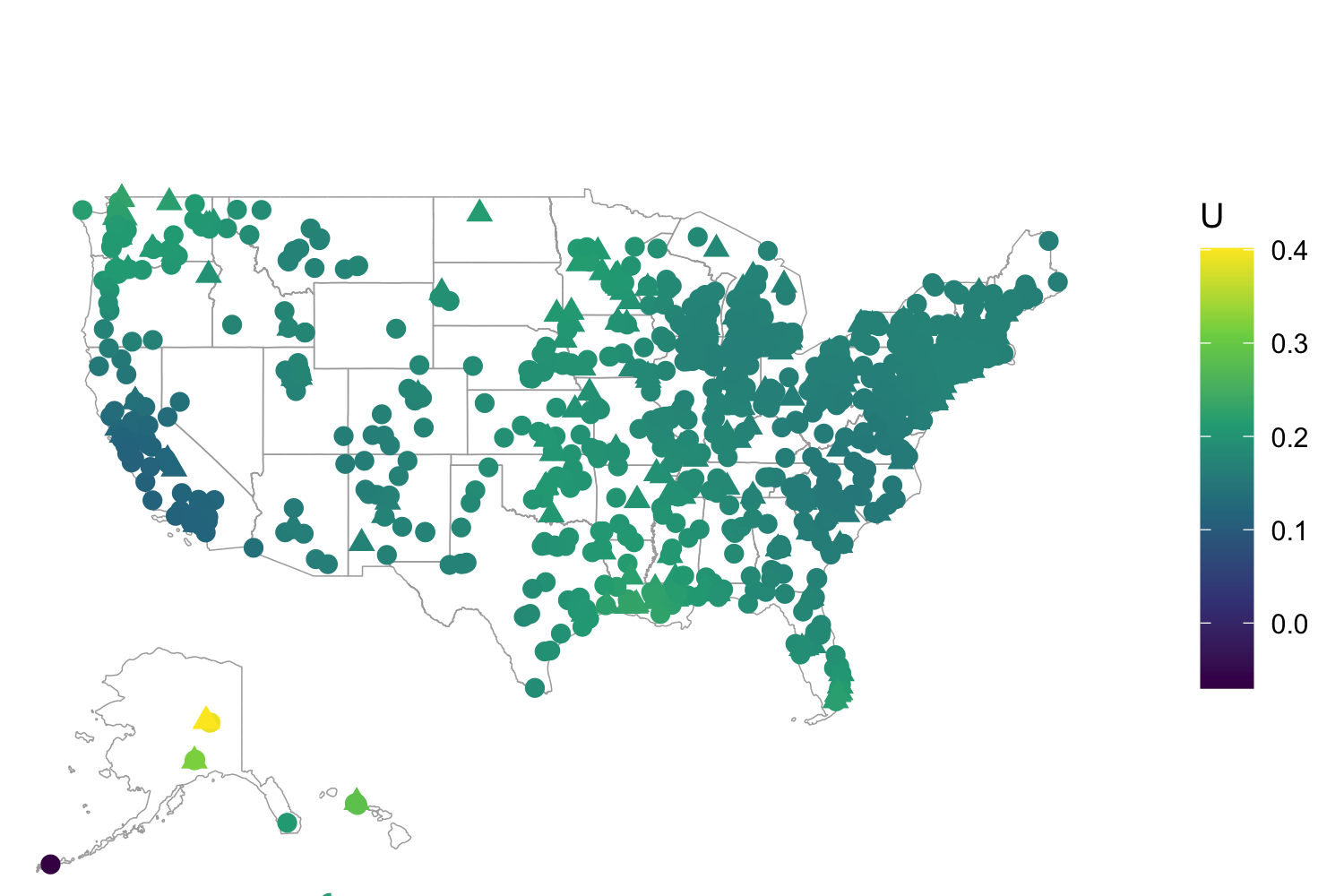}
        \caption{Distance-based.}
    \end{subfigure}
    \caption{One observation of $\bm{U}$ for each of the three confounding mechanisms (cluster-based, adjacency-based, distance-based). In the simulation, we reverse the conventional two stage process of generating $(\bm{X}, \bm{U})$ followed by $\bm{Z}|(\bm{X}, \bm{U})$. Instead, we fix $(\bm{X}, \bm{Z})$ to their observed values in the Superfund dataset and generate $\bm{U}|\bm{Z}$, in order to retain the observed treatment and confounder structure and closely reflect the real data. Although this data-generating process reverses the conventional order of generating $(\bm{X}, \bm{U})$ and $\bm{Z}$, we do not believe this choice should materially affect the conclusions of the simulation.}
    \label{fig:simU}
    \end{figure}

\begin{table}[!ht]
\centering
\begin{tabular}{llrrrrrr}
Unmeasured confounding & Outcome Model & OLS & RE & CAR & GP & SC & SW \\ 
  \hline
  \multicolumn{7}{c}{Bias} \\
\hline
\multirow{3}{*}{cluster-based} & linear & -3.39 & -0.01 & -1.33 & -0.49 & -2.95 & -0.16 \\ 
 & linear-interaction & -3.38 & -1.75 & -2.31 & -2.08 & -2.92 & -0.13 \\
 & nonlinear-interaction & -1.35 & -0.94 & -0.63 & -1.11 & -2.89 & -0.36 \\ 
 \multirow{3}{*}{adjacency-based} & linear & -1.14 & -0.30 & -0.89 & -0.17 & -0.99 & 0.05 \\ 
 & linear-interaction & -1.16 & -0.96 & -1.17 & -0.91 & -0.98 & 0.06 \\ 
 & nonlinear-interaction & 1.23 & 0.63 & 1.37 & 0.77 & -0.84 & 0.09 \\
  \multirow{3}{*}{distance-based} & linear &-0.61 & -0.21 & -0.23 & -0.05 & -0.49 & -0.05 \\ 
   & linear-interaction & -0.61 & -0.47 & -0.42 & -0.42 & -0.45 & -0.01 \\ 
   & nonlinear-interaction & 1.00 & 0.76 & 1.23 & 0.77 & -0.36 & 0.00 \\ 
   \hline
\multicolumn{7}{c}{RMSE} \\
  \hline
  \multirow{3}{*}{cluster-based} & linear & 3.46 & 0.75 & 1.54 & 0.89 & 3.04 & 0.91 \\ 
   & linear-interaction & 3.46 & 1.93 & 2.45 & 2.23 & 3.02 & 0.94 \\ 
   & nonlinear-interaction & 1.59 & 1.30 & 1.11 & 1.41 & 3.01 & 1.07 \\ 
    \multirow{3}{*}{adjacency-based} & linear &  1.34 & 0.81 & 1.19 & 0.76 & 1.23 & 0.89 \\ 
   & linear-interaction & 1.48 & 1.36 & 1.52 & 1.28 & 1.36 & 1.07 \\ 
   & nonlinear-interaction & 1.72 & 1.36 & 1.83 & 1.37 & 1.42 & 1.26 \\ 
  \multirow{3}{*}{distance-based} & linear & 0.93 & 0.78 & 0.81 & 0.74 & 0.87 & 0.89 \\ 
   & linear-interaction & 0.97 & 0.93 & 0.93 & 0.89 & 0.88 & 0.92 \\
   & nonlinear-interaction & 1.30 & 1.16 & 1.53 & 1.15 & 0.90 & 0.97 \\
  \hline
  
\end{tabular}
\caption{Bias and variance of the ATT estimates in simulation. All values have been multiplied by $10^2$. OLS, ordinary least squares; RE, random effects; CAR, conditional autoregressive; GP, Gaussian process; SC, spatial coordinates; SW, spatial weighting. }
\label{tab:simresults}
\end{table}

\clearpage
\section{Data application details}
\label{suppsec:dataapp}
\subsection*{Definition of binary treatment}
Figure \ref{fig:treatment-definition} presents a schematic summarizing the classification of sites as treated, control, or ineligible for inclusion in our study. Site listing and deletion dates were obtained from the Superfund National Priorities List (NPL) Where You Live Map \citesupp{EPA2025SearchSitesWhereYouLive}. Remediation start dates were obtained by web scraping site profile pages (\url{https://cumulis.epa.gov/supercpad/SiteProfiles/index.cfm?fuseaction=second.schedule\&id=SEMS_ID}). As described in Section \ref{sec:motivatingdata}, Superfund site remediation follows a multistage process that includes site assessment, official listing, planning of remedial design, implementation of cleanup actions, and post-construction monitoring until the site is considered protective of human health and the environment. Our analysis focuses on quantifying the effect of the cleanup phase itself, which we define as the period between the start of remedial action and the site's deletion from the NPL. 

\begin{figure}[!ht]
    \centering
    \includegraphics[width=0.7\linewidth]{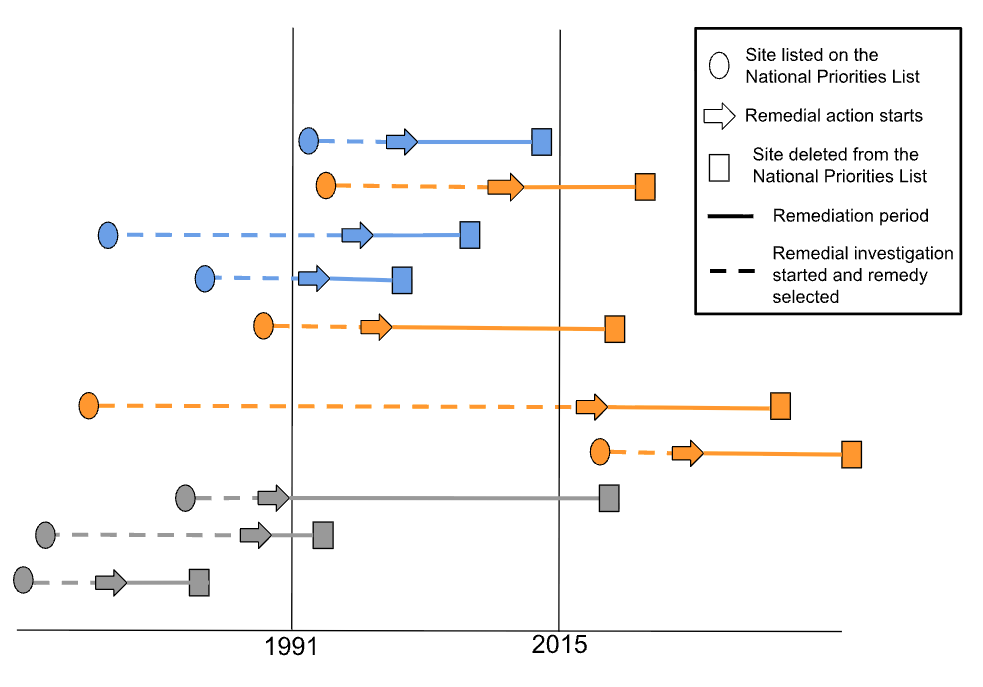}
    \caption{A diagram illustrating the definition of binary treatment according to listing date, remedial action start date, and site deletion date. The study includes all sites currently on the National Priorities List (NPL) excluding those where remedial action had started before January 1, 1991. Sites are defined as treated (in blue) if remediation was successfully completed by December 31, 2015, i.e. the site was deleted from the NPL. Sites are defined as control (in orange) if remediation was not successfully completed by December 31, 2015, i.e. the site was not yet deleted from the NPL, regardless of whether remedial action had started. In gray, we provide three examples of sites that do not meet our study inclusion criteria.}
    \label{fig:treatment-definition}
\end{figure}

\subsection*{Dataset characteristics}
Table \ref{tab:dataset} reports summary statistics for the binary treatment, outcome, and measured confounders included in the Superfund site-level analysis ($n = 1429$). There was 1 missing value in median household income, 2 missing values in median house value, 1 missing value in median year housing structure built, and 188 missing values in each of presence of a metal, presence of a volatile organic compound, presence of contamination through solids or soil, presence of contamination through a liquid medium, and presence of contamination through air, soil gas, or landfill gas. Missing values were addressed using single imputation. For stable estimation, we further excluded sites with fewer than $10$ Medicaid-covered live births within the site area and surrounding 
2-kilometer buffer, resulting in a final dataset of $n = 1079$ Superfund sites.

\renewcommand{\arraystretch}{1.4}
\begin{table}[!ht]
\footnotesize
    \begin{tabular}{p{1.65cm}| p{5cm} p{2.75cm} p{5.6cm}}
    & Variables & Mean(sd) & Data Source\\
    \hline \hline
    Binary treatment & Superfund site remediation (remediation action starts and site deleted from the NPL) between 1991-2015 & 0.181 (0.385) & Superfund National Priorities List (NPL) Where You Live Map \citepsupp{EPA2025SearchSitesWhereYouLive, EPA2025cleanup, EPA2025NPLBoundaries}. Remediation dates obtained from site profile webpages (\url{https://cumulis.epa.gov/supercpad/SiteProfiles/index.cfm?fuseaction=second.schedule\&id=SEMS_ID})\\
    \hline
    \multirow{2}{*}{Outcome} & \% newborns that are ``small vulnerable'' during 2016--2018  & 13.5 (3.60)  & \multirow{2}{6cm}{Medicaid and CHIP TAF Inpatient file obtained from the Centers for Medicare and Medicaid Services \citepsupp{cms_taf_ip}} \\
    & \% newborns with congenital anomalies during 2016--2018 & 13.6 (5.28) & \\
    \hline
    \multirow{14}{1.5cm}{Measured confounders} & population density (people/mi$^2$) & 1710 (2760) &\multirow{11}{6cm}{1990 U.S. Decennial Census, downloaded at the census tract-level from NHGIS \citepsupp{Schroeder2025NHGIS, USCensus1990}}\\
    & Proportion of Hispanic residents & 0.0656 (0.120) &  \\
    & Proportion of Black residents & 0.107 (0.163) & \\
    & Proportion of Asian residents & 0.0229 (0.0491) & \\
    & Proportion of Indigenous residents & 0.00814 (0.0340) & \\
    & Proportion of renter occupied housing &0.320 (0.160) & \\
    & Median household income (\$) & 32400 (11200) & \\
    & Median house value (\$) & 93100 (60200) & \\ 
    & Proportion of residents in poverty & 0.116 (0.0855) & \\
    & Proportion of residents with a high school diploma & 0.741 (0.125) & \\
    & Median year housing structure built & 1964 (9.97) & \\
    \cline{2-4}
    & Site score & 42.8 (11.3) & \multirow{6}{6cm}{Contaminants of Concern \citepsupp{EPA2025COCSpreadsheet}, Superfund National Priorities List (NPL) Where You Live Map \citepsupp{EPA2025SearchSitesWhereYouLive}}\\
    & Presence of a metal or metalloid & 0.746 (0.435) &\\
    & Presence of a volatile organic compound & 0.712 (0.453) &\\
    & Presence of contamination through solids or soil & 0.840 (0.367) & \\
     & Presence of contamination through a liquid medium & 0.879 (0.327) & \\
      & Presence of contamination through air, landfill gas, or soil gas & 0.131 (0.337) & \\
    \end{tabular}
    \caption{Description of final Superfund site-level dataset ($n = 1079$) and data sources.}
    \label{tab:dataset}
\end{table}

\clearpage
\subsection*{Birth outcomes}
Algorithm \ref{alg1} describes how the ZCTA-level counts of total live births, small vulnerable newborns, and newborns with congenital anomalies over 2016--2018 were obtained from the Medicaid and CHIP T-MSIS Analytic File (TAF) Inpatient file. 
\singlespacing
\small
\begin{algorithm}[H]
\caption{Birth Outcome ZCTA-level Dataset Construction (2016--2018)}
\label{alg1}
\DontPrintSemicolon
\SetKwInOut{Input}{Input}
\SetKwInOut{Output}{Output}

\Input{Annual inpatient (IP) header files and denominator/eligibility (DE) files for years $y \in \{2016, 2017, 2018\}$.}
\Output{ZCTA-level counts of total live births, small vulnerable newborns, and newborns with congenital anomalies over 2016--2018.}

\ForEach{$y \in \{2016, 2017, 2018\}$}{
\textbf{1) Read in IP header records.}\;
Load IP header data for year $y$, retaining claim id, beneficiary id, diagnoses, service begin date, service end date, admission hour, discharge date, discharge hour. Keep only records containing at least one ICD-10-CM diagnosis code beginning with \texttt{Z38}.\;

\textbf{2) Merge with DE file.}\;
Load the DE file for year $y$, retaining ZCTA, birth date, and beneficiary state code. Merge to IP data on beneficiary id.\;

\textbf{3) Drop non-service payment records.}\;
Exclude claims with claim type code  $\in \{2, 4, \texttt{B}, \texttt{D}, \texttt{V}, \texttt{X}\}$.\;

\textbf{4) Create claim-level indicators.}\;
\textbf{(a)} Define a binary flag for small vulnerable newborn, equal to $1$ if any ICD-10-CM diagnosis code begins with one of: $
\texttt{P070},\ \texttt{P071},\ \texttt{P072},\ \texttt{P073},\ \texttt{P05},\ 
\texttt{Z3A0--Z3A36},\ \texttt{O35}$
and $0$ otherwise.\\
\textbf{(b)} Define a binary flag for congenital anomalies, equal to $1$ if any ICD-10-CM diagnosis begins with \texttt{Q}, and $0$ otherwise.\;

\textbf{5) Aggregate to the beneficiary-year level.}\;
For each beneficiary, take the maximum of each flag across all claims in year $y$ and further de-duplicate by (ZCTA, service begin date, service end date, admission date, admission hour, discharge date, discharge hour, birth date). 
}

\textbf{Post-processing:}\;
\begin{enumerate}[label=\alph*)]
\item \textbf{Merge across years.}
Append the beneficiary-level datasets for $y \in \{2016, 2017, 2018\}$ into a single file.
\vspace{-1em}
\item \textbf{Aggregate to the ZCTA level.}
Sum the counts of total live births, small vulnerable newborns, and newborns with congenital anomalies within each ZCTA.
\end{enumerate}
\end{algorithm}
\normalsize
We only have access to birth claims in the Inpatient file, which likely explains the undercounting in Table \ref{tab:statecounts}, as some states record birth claims only in the Other Services file \citepsupp{auty2024comparing}. However, we do not believe resulting missingness in total birth count is necessarily systematically related to our outcomes of interest, the percentage of small vulnerable newborns and the percentage of congenital anomalies. We exclude sites in Alabama, Colorado, and Rhode Island from our final analysis because the reported statistics for these states were irregular, with estimated proportions of small vulnerable newborns and congenital anomalies implausibly outside the expected 5–20\% range \citepsupp{lawn2023small, marden1964congenital}.

\doublespacing

\newpage
\begin{table}[!ht]
\footnotesize
\renewcommand{\arraystretch}{0.9}
    \centering
    \begin{tabular}{p{0.95in}ccccccc}
    state & \# live births & \makecell[l]{\# live births\\(CDC)} & error & \makecell[l]{\# small\\vulnerable} & \makecell[l]{\% small\\vulnerable} & \makecell[l]{\# congenital\\anomalies} & \makecell[l]{\% congenital\\anomalies} \\
    \hline
    \hline
    Alabama & 15651 & 87410 & -82.09 & 6654 & 0.43 & 2795 & 0.18 \\ 
  Alaska & 12602 & 12345 & 2.08 & 1569 & 0.12 & 1534 & 0.12 \\ 
  Arizona & 119946 & 128379 & -6.57 & 13493 & 0.11 & 11309 & 0.09 \\ 
  Arkansas & 61707 & 49105 & 25.66 & 7555 & 0.12 & 6476 & 0.10 \\ 
  California & 243504 & 604780 & -59.74 & 31456 & 0.13 & 28897 & 0.12 \\ 
  Colorado & 17759 & 75615 & -76.51 & 6074 & 0.34 & 3451 & 0.19 \\ 
  Connecticut & 45601 & 39598 & 15.16 & 5981 & 0.13 & 6636 & 0.15 \\ 
  Delaware & 14260 & 14512 & -1.74 & 2093 & 0.15 & 2252 & 0.16 \\ 
  D.C. & 9510 & 11833 & -19.63 & 1432 & 0.15 & 1141 & 0.12 \\ 
  Florida & 337327 & 324540 & 3.94 & 44981 & 0.13 & 60482 & 0.18 \\ 
  Georgia & 207425 & 178772 & 16.03 & 25928 & 0.12 & 27359 & 0.13 \\ 
  Hawaii & 20649 & 15812 & 30.59 & 2732 & 0.13 & 3274 & 0.16 \\ 
  Idaho & 25160 & 23987 & 4.89 & 3326 & 0.13 & 2147 & 0.09 \\ 
  Illinois & 184921 & 177171 & 4.37 & 26415 & 0.14 & 19969 & 0.11 \\ 
  Indiana & 86032 & 100304 & -14.23 & 12188 & 0.14 & 12664 & 0.15 \\ 
  Iowa & 47595 & 45263 & 5.15 & 6031 & 0.13 & 6168 & 0.13 \\ 
  Kansas & 42165 & 34712 & 21.47 & 5236 & 0.12 & 4763 & 0.11 \\ 
  Kentucky & 81656 & 77473 & 5.40 & 10629 & 0.13 & 12801 & 0.16 \\ 
  Louisiana & 77550 & 114769 & -32.43 & 13507 & 0.17 & 11203 & 0.14 \\ 
  Maine & 11921 & 14658 & -18.67 & 1581 & 0.13 & 1553 & 0.13 \\ 
  Maryland & 10385 & 84967 & -87.78 & 1463 & 0.14 & 2492 & 0.24 \\ 
  Massachusetts & 80719 & 58146 & 38.82 & 11207 & 0.14 & 15563 & 0.19 \\ 
  Michigan & 151402 & 139674 & 8.40 & 22772 & 0.15 & 21142 & 0.14 \\ 
  Minnesota & 38701 & 65573 & -40.98 & 4587 & 0.12 & 5385 & 0.14 \\ 
  Mississippi & 50838 & 69425 & -26.77 & 8373 & 0.16 & 4687 & 0.09 \\ 
  Missouri & 77770 & 85278 & -8.80 & 11067 & 0.14 & 9645 & 0.12 \\ 
  Montana & 16225 & 14566 & 11.39 & 2234 & 0.14 & 1600 & 0.10 \\ 
  Nebraska & 3559 & 25825 & -86.22 & 444 & 0.12 & 358 & 0.10 \\ 
  Nevada & 32023 & 48973 & -34.61 & 4114 & 0.13 & 5396 & 0.17 \\ 
  New Hampshire & 5533 & 9044 & -38.82 & 714 & 0.13 & 560 & 0.10 \\ 
  New Jersey & 53582 & 94093 & -43.05 & 5318 & 0.10 & 5882 & 0.11 \\ 
  New Mexico & 45371 & 39381 & 15.21 & 6029 & 0.13 & 6229 & 0.14 \\ 
  New York & 347412 & 331669 & 4.75 & 42206 & 0.12 & 54667 & 0.16 \\ 
  North Carolina & 175426 & 152964 & 14.68 & 25122 & 0.14 & 21584 & 0.12 \\ 
  North Dakota & 8270 & 7931 & 4.27 & 905 & 0.11 & 1108 & 0.13 \\ 
  Ohio & 192631 & 168816 & 14.11 & 27625 & 0.14 & 25243 & 0.13 \\ 
  Oklahoma & 82353 & 79152 & 4.04 & 10104 & 0.12 & 9567 & 0.12 \\ 
  Oregon & 59163 & 58316 & 1.45 & 7193 & 0.12 & 6989 & 0.12 \\ 
  Pennsylvania & 155760 & 135422 & 15.02 & 23210 & 0.15 & 25296 & 0.16 \\ 
  Rhode Island & 525 & 15647 & -96.64 & 140 & 0.27 &  82 & 0.16 \\ 
  South Carolina & 84936 & 84468 & 0.55 & 12038 & 0.14 & 10265 & 0.12 \\ 
  South Dakota & 13238 & 11198 & 18.22 & 1563 & 0.12 & 1252 & 0.09 \\ 
  Tennessee & 99352 & 117056 & -15.12 & 14237 & 0.14 & 13603 & 0.14 \\ 
  Texas & 549241 & 543037 & 1.14 & 68704 & 0.13 & 72485 & 0.13 \\ 
  Utah & 42669 & 35913 & 18.81 & 6370 & 0.15 & 3634 & 0.09 \\ 
  Vermont & 2298 & 7175 & -67.97 & 267 & 0.12 & 334 & 0.15 \\ 
  Virginia & 98906 & 91281 & 8.35 & 14890 & 0.15 & 15320 & 0.15 \\ 
  Washington & 114767 & 103346 & 11.05 & 13545 & 0.12 & 14479 & 0.13 \\ 
  West Virginia & 18666 & 27217 & -31.42 & 2784 & 0.15 & 2250 & 0.12 \\ 
  Wisconsin & 76596 & 70609 & 8.48 & 10126 & 0.13 & 7624 & 0.10 \\ 
  Wyoming & 4922 & 6661 & -26.11 & 686 & 0.14 & 464 & 0.09 \\  
      \end{tabular}
    \caption{Birth statistics by state. Gold-standard counts of Medicaid-covered live births are obtained from the \href{https://wonder.cdc.gov/natality.html}{Centers for Disease Control and Prevention (CDC) WONDER Natality online database}. Error is equal to (\# live births - \# CDC live births)/\# live births. }
    \label{tab:statecounts}
\end{table}

\begin{figure}
    \begin{subfigure}[t]{\textwidth}
        \centering
        \includegraphics[height=3.75in]{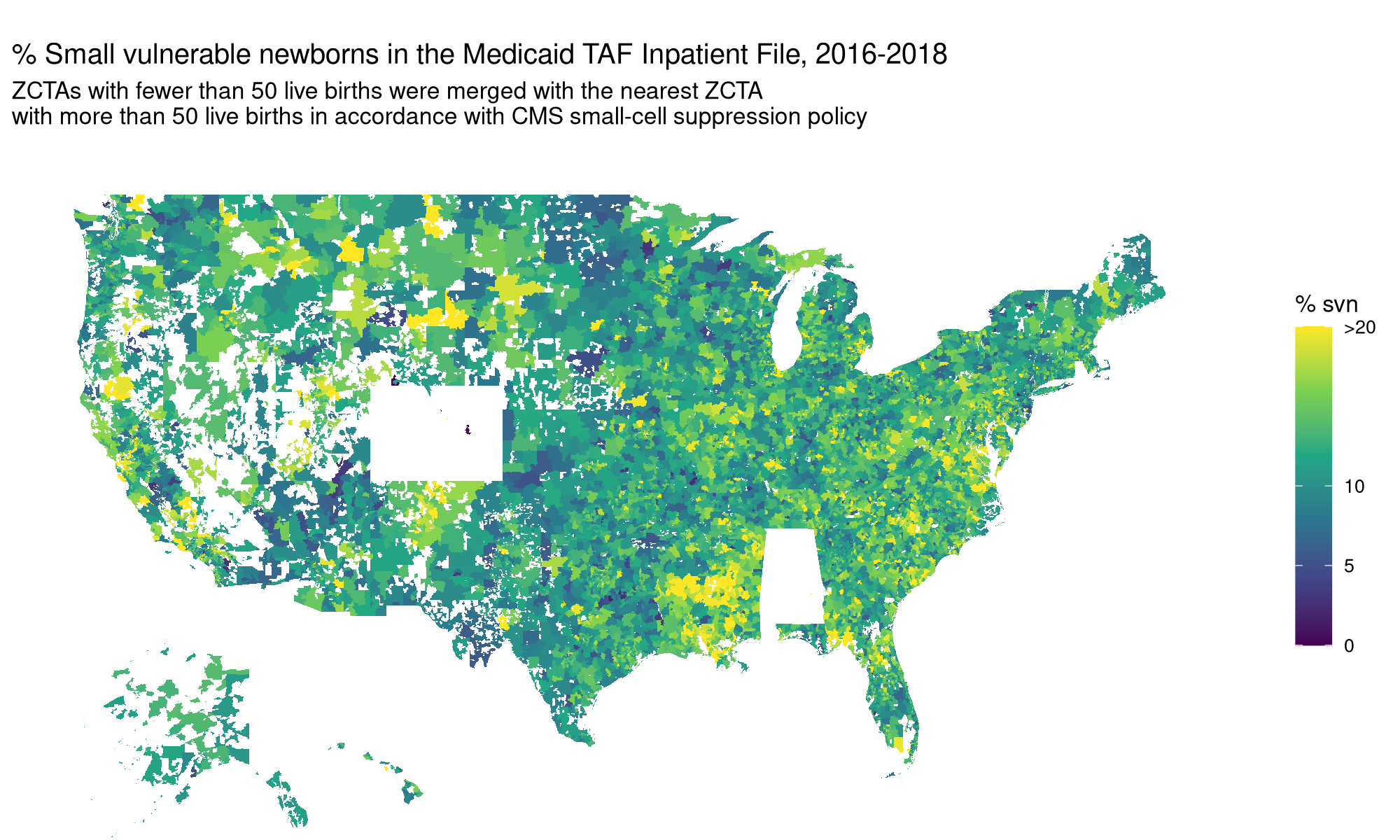}
        \caption{percentage of small vulnerable newborns--defined as preterm, small for gestational age, or low birth weight---among Medicaid-covered births in the United States, 2016--2018. ZCTAs in Colorado, Alabama, and Rhode Island are omitted due to irregularities in reporting (see Table \ref{tab:statecounts}).}
    \end{subfigure}
    \\
    \begin{subfigure}[t]{\textwidth}
        \centering
        \includegraphics[height=3.75in]{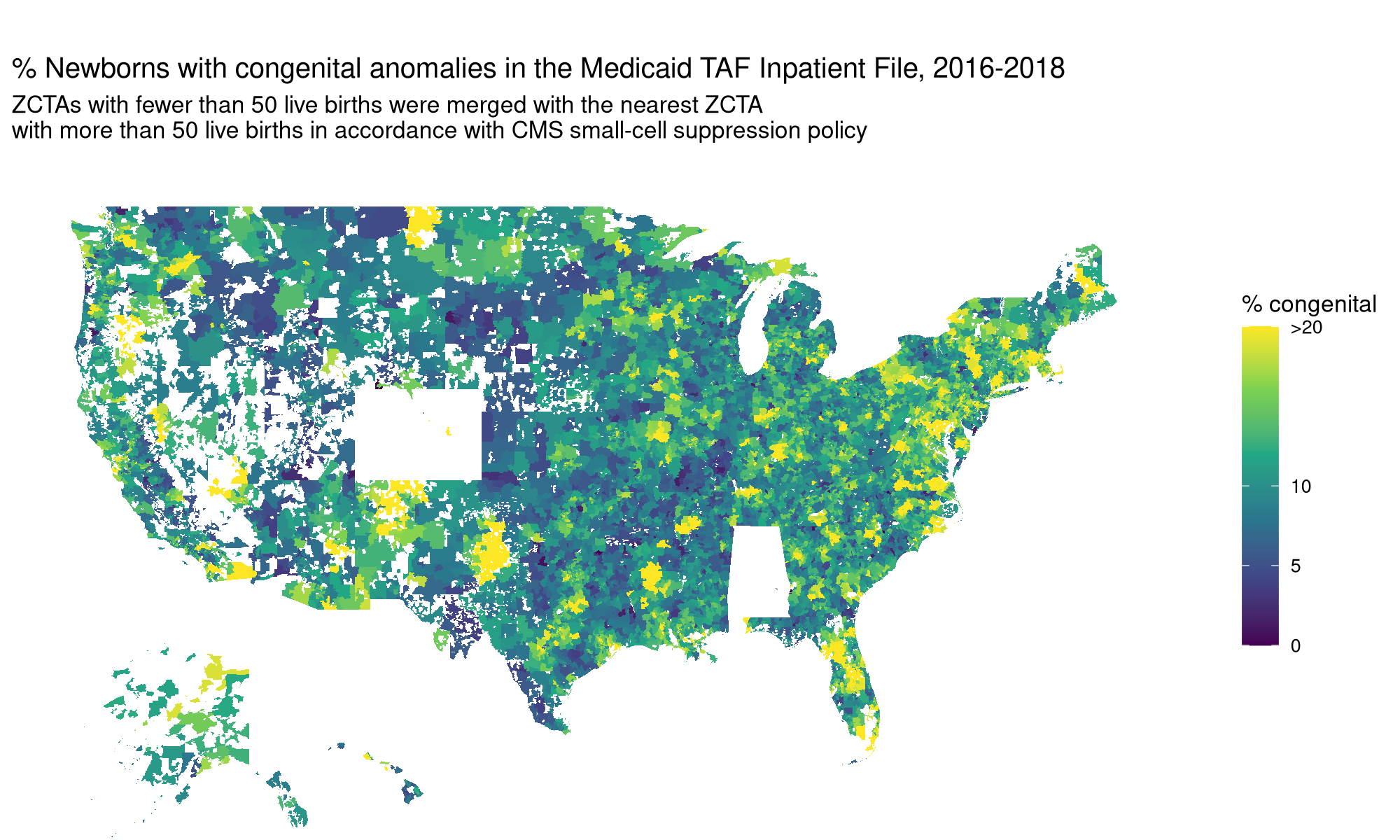}
        \caption{percentage of newborns with small congenital anomalies among Medicaid-covered births in the United States, 2016--2018. ZCTAs in Colorado, Alabama, and Rhode Island are omitted due to irregularities in reporting (see Table \ref{tab:statecounts}).}
    \end{subfigure}
    \caption{Two birth outcomes considered in our analysis.}
    \label{fig:outcomemaps}
\end{figure}

In the main analysis, the spatial weighting estimator approximately balances the means of the $17$ observed covariates and the means of each of the $10$ leading eigenvectors from each of the RE, CAR, and GP models. 
Formally, the resulting weighting problem can be written as:
\begin{align*}
    &\min_{\bm{w}} \sum_{i:Z_i = 0} w_i^2 \text{, subject to} \begin{cases} \bigg \vert \sum_{i:Z_i = 0} w_i  X^*_{ji} - \frac{1}{n_t}\sum_{i:Z_i = 1} X^*_{ji} \bigg \vert  \leq 0.001, j = 1,\ldots,17 \\
    \bigg \vert \sum_{i:Z_i = 0} w_i  v^{*\text{RE}}_{ki} - \frac{1}{n_t}\sum_{i:Z_i = 1} v^{*\text{RE}}_{ki} \bigg \vert  \leq 0.01, k = 1,\ldots,10 \\
    \bigg \vert \sum_{i:Z_i = 0} w_i  v^{*\text{CAR}}_{ki} - \frac{1}{n_t}\sum_{i:Z_i = 1} v^{*\text{CAR}}_{ki} \bigg \vert  \leq 0.01, k = 1,\ldots,10 \\
    \bigg \vert \sum_{i:Z_i = 0} w_i  v^{*\text{GP}}_{ki} - \frac{1}{n_t}\sum_{i:Z_i = 1} v^{*\text{GP}}_{ki} \bigg \vert  \leq 0.01, k = 1,\ldots,10 \\
    \sum_{i:Z_i = 0} w_i = 1, w_i \geq 0 \text{ }\forall i
    \end{cases}
\end{align*}where
\begin{itemize}
    \item $X^*_{ji} = (X_{ji} - \frac{1}{n}\sum_{i = 1}^nX_{ji})/\sqrt{\frac{1}{n}\sum_{i = 1}^n (X_{ji} - \frac{1}{n}\sum_{i = 1}^nX_{ji})^2}$.
    \item $\bm{v}^\text{RE}_{k}$ is the eigenvector associated with the $k$th largest eigenvalue of the spatial covariance matrix $\bm{S}^{\mathrm{RE}}$ and $v_{ki}^{*\text{RE}} = (v_{ki}^\text{RE} - \frac{1}{n}\sum_{i = 1}^nv_{ki}^\text{RE} )/\sqrt{\frac{1}{n}\sum_{i = 1}^n (v_{ki}^\text{RE} - \frac{1}{n}\sum_{i = 1}^nv_{ki}^\text{RE} )^2}$.  
    \item $\bm{v}^\text{CAR}_{k}$ is the eigenvector associated with the $k$th largest eigenvalue of the spatial covariance matrix $\bm{S}^{\mathrm{CAR}}$ and $v_{ki}^{*\text{CAR}} = (v_{ki}^\text{CAR} - \frac{1}{n}\sum_{i = 1}^nv_{ki}^\text{CAR} )/\sqrt{\frac{1}{n}\sum_{i = 1}^n (v_{ki}^\text{CAR} - \frac{1}{n}\sum_{i = 1}^nv_{ki}^\text{CAR} )^2}$.  
    \item $\bm{v}^\text{GP}_{k}$ is the eigenvector associated with the $k$th largest eigenvalue of the spatial covariance matrix $\bm{S}^{\mathrm{GP}}$ and $v_{ki}^{*\text{GP}} = (v_{ki}^\text{GP} - \frac{1}{n}\sum_{i = 1}^nv_{ki}^\text{GP} )/\sqrt{\frac{1}{n}\sum_{i = 1}^n (v_{ki}^\text{GP} - \frac{1}{n}\sum_{i = 1}^nv_{ki}^\text{GP} )^2}$.  
\end{itemize}

We assess the sensitivity of our results to two key design choices, the linearity of the basis functions $B_k$, and the specified balance tolerances. Table \ref{tab:sensitivity-basis} presents a summary of the results.

\begin{table}[!ht]
\footnotesize
    \centering
    \begin{tabular}{l|p{1.1in}|p{2.2in}|p{1.1in}|p{1.1in}}
    & & & \multicolumn{2}{c}{ATT estimate}\\
\hline
    Model & $B_k$ & balancing thresholds & \% Small Vulnerable & \% Congenital Anomalies \\
    \hline
        1 (main) & linear only & $0.001$ for standardized measured covariates, 
        0.01 for standardized latent covariates & -0.604 & 0.0672\\
        2 & linear only & automatic tuning via \citepsupp{wang2020minimal} & -0.601 & 0.0448\\
        3 & quadratic and pairwise interactions among $\bm{X}$ & $0.001$ for standardized measured covariates, 
        $0.01$ for standardized latent covariates, $0.01$ for all quadratic/interaction terms & -0.660 & 0.0831 \\
    \end{tabular}
    \caption{Variations on the spatial weighting estimator.}
    \label{tab:sensitivity-basis}
\end{table}

We further examine the sensitivity of our findings to treatment timing. Specifically, we compare (i) all treated sites to control sites with remediation started after December 31, 2015, an ideal but small control group ($n=394$); (ii) all treated sites to control sites with remediation started after January 1, 2005 ($n=460$); (iii) treated sites with remediation started before January 1, 2001 to all control sites ($n=1003$); and (iv) treated sites with remediation started after January 1, 2001 to all control sites ($n = 894)$. Table \ref{tab:sensitivity-treatment-timing} presents the results. 

Finally, we assess the sensitivity of our results to discrepancies in birth claim reporting by splitting the dataset into sites in states with more than 20\% deviation from the CDC WONDER Natality count ($n = 305$) and those in other states $(n = 774)$, as summarized in Table \ref{tab:statecounts}. 
Table \ref{tab:sensitivity-errorprone} presents the results. 

\begin{table}[!ht]
    \centering
    \renewcommand{\arraystretch}{1.3}
    \begin{tabular}{p{4in}p{1in}p{1in}}
    & \multicolumn{2}{c}{ATT estimate}\\
\hline
    &\makecell[l]{ \% Small \\Vulnerable} & \makecell[l]{\% Congenital \\Anomalies}\\
    \hline
     All treated sites versus control sites with remediation started after December 31, 2015 & -0.968 & -0.457\\
     All treated sites to control sites with remediation started after January 1, 2005 & -0.970 & -0.305\\
     Treated sites with remediation started before January 1, 2001 versus all control sites & -0.700 & -0.0136\\
     Treated sites with remediation started after January 1, 2001 versus all control sites & -0.737 & -2.92\\
    \end{tabular}
    \caption{Sensitivity analysis to treatment timing. }
    \label{tab:sensitivity-treatment-timing}
\end{table}

\begin{table}[!ht]
    \centering
    \renewcommand{\arraystretch}{1.3}
    \begin{tabular}{p{4in}p{1in}p{1in}}
    & \multicolumn{2}{c}{ATT estimate}\\
\hline
    &\makecell[l]{ \% Small \\Vulnerable} & \makecell[l]{\% Congenital \\Anomalies}\\
    \hline
    Sites in states with $>20$\% deviation from the CDC WONDER Natality Birth Count (AL, AR, CA, CO, HI, KS, LA, MD, MA, MN, MS, NE, NV, NH, NJ, RI, VT, WV, WY) & -1.262 & -0.299\\
        Sites in states with $<20$\% deviation from the CDC WONDER Natality Birth Count (AK, AZ, CT, DE, DC, FL, GA, ID, IL, IN, IA, KY, ME, MI, MO, MT, NM, NY, NC, ND, OH, OK, OR, PA, SC, SD, TN, TX, UT, VA, WA, WI) 
         & -0.609 & -0.00793
    \end{tabular}
    \caption{Sensitivity analysis to birth claim reporting. }
    \label{tab:sensitivity-errorprone}
\end{table}

\clearpage
 \bibliographystylesupp{asa}
 \bibliographysupp{main}
 
\end{document}